\newcommand{\A}{\mathcal{A}}
\newcommand{\R}{\mathbb{R}}
\newcommand{\D}{\Delta}
\newcommand{\polylog}{\,{\rm polylog}\,}
\newcommand{\down}[1]{\left\lfloor #1\right\rfloor}
\newcommand{\up}[1]{\left\lceil #1\right\rceil}
\newcommand{\eps}{\varepsilon}
\newcommand{\de}{\delta}
\newcommand{\PiVert}{\Pi_{\mbox{\scriptsize vert}}}
\newcommand{\Ex}{\mathbb{E}}
\newtheorem{observation}[theorem]{Observation}
\newenvironment{algm}{\par\smallskip\noindent\begin{center}\begin{algmbody}}%
  {\end{algmbody}\end{center}\par\smallskip}
  {\par\bigskip\noindent\hspace*{0.1\textwidth}\begin{algmbody}}%
  {\end{algmbody}\par\bigskip}
\newcommand{\algmindent}{19.M}
\newenvironment{algmbody}%
  {\begin{minipage}[t]{\textwidth}\begin{tabbing}%
   \algmindent\= for\= for\= for\= for\= for\=\+\kill}%
  {\end{tabbing}\end{minipage}}
\newcommand{\TO}{,\ldots,}
\providecommand{\entropy}{{\ensuremath\cal H}}
\providecommand{\etal}{{et al.}}
\tikzset{
  output/.style={color=blue,thick}, 
  nonMaximaPoint/.style={fill=black,thick}, 
  circle/.style={fill}, 
  maxima/.style={circle, color=red},
  upperHull/.style={circle, color=red},
  nonmaxima/.style={circle, color=gray},
  partition/.style={thick, rounded corners, dashed},
  verticalPartition/.style={color=cyan, partition},
  optPartition/.style={color=orange, partition},
  KSAlgorithmDefinedPartition/.style={fill=cyan, thick, rounded corners, dashed},
  ourAlgorithmDefinedPartition/.style={fill=pink, thick, rounded corners, dashed}
}
\newcommand*{\DefineInstanceConvexHullTwoD}{
  \coordinate (m1) at (1,3); 
  \coordinate (m2) at (4,4); 
  \coordinate (m3) at (6,3.5); 
  \coordinate (g1) at (1,0);
  \coordinate (g2) at (3,0);
  \coordinate (g3) at (5,0);
}
\newcommand*{\DefineInstanceMaximaTwoD}{
  \coordinate (m1) at (2,4); 
  \coordinate (m2) at (4,3); 
  \coordinate (m3) at (6,2); 
  \coordinate (m0) at (0,4.2); 
  \coordinate (m4) at (6.2,0);   
  \coordinate (g1) at (0,2); 
  \coordinate (g2) at (2,3); 
  \coordinate (g3) at (4,1); 
  \path
  (g1)
  ++(.5,-.25) node(p1) {}
  ++(.5,-.25) node(p2) {} 
  ++(.5,-.25) node(p3) {} 
  (g2)
  ++(.5,-.25) node(p4) {} 
  ++(.5,-.25) node(p5) {} node(median) {}
  ++(.5,-.25) node(p6) {} 
  (g3)
  ++(.5,-.25) node(p7) {} 
  ++(.5,-.25) node(p8) {} 
  ++(.5,-.25) node(p9) {}
  ;
}
\newcommand*{\DrawInstanceMaximaTwoD}{
  \draw [maxima] (m1) circle (2pt) (m2) circle (2pt) (m3) circle (2pt) ;
  \draw [above right] (m1) node {$q_1$} (m2) node  {$q_2$} (m3) node  {$q_3$} ;
    
  \draw [nonmaxima] 
  (g1)
  ++(.5,-.25) circle (2pt) 
  ++(.5,-.25) circle (2pt) 
  ++(.5,-.25) circle (2pt) 
  (g2)
  ++(.5,-.25) circle (2pt) 
  ++(.5,-.25) circle (2pt) node(median) {}
  ++(.5,-.25) circle (2pt) 
  (g3)
  ++(.5,-.25) circle (2pt) 
  ++(.5,-.25) circle (2pt) 
  ++(.5,-.25) circle (2pt)
  ;
}
\title{Instance-Optimal Geometric Algorithms}
\author{
PEYMAN AFSHANI
\affil{MADALGO, University of Aarhus}
{J\'ER\'EMY} BARBAY
\affil{DCC, Universidad de Chile}
TIMOTHY M. CHAN
\affil{University of Waterloo} 
}
\begin{document}
\maketitle

\begin{abstract}
We prove the existence of an algorithm $A$ for computing 2-d or 3-d convex hulls that is optimal for {\em every point set\/} in the following sense:
for every sequence $\sigma$ of $n$ points and for every algorithm $A'$ in a certain class $\A$, the running time of $A$ on input $\sigma$ is at most a constant factor times the maximum running time of $A'$ on the worst possible permutation of $\sigma$ for $A'$.
In fact, we can establish a stronger property: for every sequence $\sigma$ of points and every algorithm $A'$, the running time of $A$ on $\sigma$ is at most a constant factor times the average running time of $A'$ over all permutations of $\sigma$.
We call algorithms satisfying these properties {\em instance-optimal\/} in the {\em order-oblivious\/} and {\em random-order\/} setting.
Such instance-optimal algorithms simultaneously subsume output-sensitive algorithms and distribution-dependent average-case algorithms, and all algorithms that do not take advantage of the order of the input or that assume the input is given in a random order.

The class $\A$ under consideration consists of all algorithms in a decision tree model where the tests involve only {\em multilinear\/} functions with a constant number of arguments.
To establish an instance-specific lower bound, we deviate from traditional Ben-Or-style proofs and adopt a new adversary argument.
For 2-d convex hulls, we prove that a version of the well known algorithm by Kirkpatrick and Seidel (1986) or Chan, Snoeyink, and Yap (1995) already attains this lower bound.  For 3-d convex hulls, we propose a new algorithm.

We further obtain instance-optimal results for a few other standard problems in computational geometry, such as maxima in 2-d and 3-d, orthogonal line segment intersection in 2-d, finding bichromatic $L_\infty$-close pairs in 2-d, off-line orthogonal range searching in 2-d, off-line dominance reporting in 2-d and 3-d, off-line halfspace range reporting in 2-d and 3-d, and off-line point location in 2-d.  Our framework also reveals connection
to distribution-sensitive data structures and yields  new results as
a byproduct, for example, on on-line orthogonal range searching in 2-d and on-line halfspace range reporting in 2-d and 3-d.
\end{abstract}

\def\citeN{\cite}
\def\shortcite{\cite}
 \section{Introduction}
\label{sec:introduction}

\paragraph{Instance optimality: our model(s)}
Standard worst-case analysis of algorithms has often been criticized as overly pessimistic.  As a remedy, some researchers have turned towards {\em adaptive\/} analysis where the execution cost of algorithms is measured as a function of not just the input size but also other parameters that capture in some ways the difficulty of the input instance.  For example, for problems in computational geometry (the primary domain of the present paper), parameters that have been considered in the past include the output size (leading to so-called {\em output-sensitive\/} algorithms)~\cite{KirkpatrickSeidelSICOMP86}, the spread of an input point set (the ratio of the maximum to the minimum pairwise distance)~\cite{EricksonDCG05}, various measures of fatness of the input objects (e.g., ratio of circumradii to inradii)~\cite{MatousekPachSICOMP94} or clutteredness of a collection of objects~\cite{deBergvanStappenALGMCA02}, the number of reflex angles in an input polygon, and so on.

The ultimate in adaptive algorithms is an {\em instance-optimal\/} algorithm, i.e., an algorithm $A$ whose cost is at most a constant factor from the cost of any other algorithm $A'$ running on the same input, for {\em every\/} input instance.  Unfortunately, for many problems, this requirement is too stringent.
For example, consider the 2-d convex hull problem, which has $\Theta(n\log n)$ worst-case complexity in the algebraic computation tree model: for every input sequence of $n$ points, one can easily design an algorithm $A'$ (with its code depending on the input sequence) that runs in $O(n)$ time on that particular sequence, thus ruling out the existence of an instance-optimal algorithm.\footnote{ The length of the program for $A'$ may depend on $n$ in this example.  If we relax the definition to permit the ``constant factor'' to grow as a function of the program length of $A'$, then an instance-optimal algorithm $A$ exists for many problems such as sorting (or more generally problems that admit linear-time verification).  This follows from a trick attributed to Levin~\cite{JonesBOOK}, of enumerating and simulating all programs in parallel under an appropriate schedule.  To say that algorithms obtained this way are impractical, however, would be an understatement.  }

To get a more useful definition, we suggest a variant of instance optimality where we ignore the order in which the input elements are given, as formalized precisely below:

\newcommand{\OPT}{\mbox{\rm OPT}}
\newcommand{\avg}{^{\mbox{\scriptsize\rm avg}}}
\begin{definition}\label{def:inputOrderObliviousInstanceOptimality}
Consider a problem where the input consists of a sequence of $n$ elements from a domain ${\cal D}$.  Consider a class $\A$ of algorithms.
A {\em correct\/} algorithm refers to an algorithm that outputs a correct answer for every possible sequence of elements in~${\cal D}$.

For a set $S$ of $n$ elements in ${\cal D}$, let $T_A(S)$ denote the maximum running time of $A$ on input $\sigma$ over all $n!$ possible permutations $\sigma$ of $S$.
Let $\OPT(S)$ denote the minimum of $T_{A'}(S)$ over all correct algorithms $A'\in\A$.  If $A\in\A$ is a correct algorithm such that $T_A(S)\le O(1)\cdot\OPT(S)$ for every set $S$, then we say $A$ is {\em instance-optimal in the order-oblivious setting\/}.
\end{definition}

For many problems, the output is a function of the input as a set rather than a sequence, and the above definition is especially meaningful.  In particular, for such problems, instance-optimal algorithms are automatically optimal output-sensitive algorithms; in fact, they are automatically optimal adaptive algorithms with respect to {\em any\/} parameter that is independent of the input order, all at the same time!  This property is satisfied by simple parameters like the spread of an input point set $S$, or more complicated quantities like the expected size $f_r(S)$ of the convex hull of a random sample of size $r$ from $S$~\cite{ClarksonFOCS94}.

For many algorithms (e.g., \texttt{quickhull}~\cite{PreparataShamosBOOK}, to name one), the running time is not affected so much by the order in which the input points are given but by the relative positions of the input points.  Combinatorial and computational geometers more often associate ``bad examples'' with bad point sets rather than bad point sequences.  All this supports the reasonableness and importance of the order-oblivious form of instance optimality.

We can consider a still stronger variant of instance optimality:
\begin{definition}
For a set $S$ of $n$ elements in ${\cal D}$, let $T_A\avg(S)$ denote the average running time of $A$ on input $\sigma$ over all $n!$ possible permutations $\sigma$ of $S$.
Let $\OPT\avg(S)$ denote the minimum of $T_{A'}\avg(S)$ over all correct algorithms $A'\in\A$.
If $A\in\A$ is a correct algorithm such that $T_A(S)\le O(1)\cdot\OPT\avg(S)$ for every set $S$, then we say $A$ is {\em instance-optimal in the random-order setting\/}.\footnote{ One can also consider other variations of the definition, e.g., relaxing the condition to $T_A^{\mbox{\tiny\rm avg}}(S)\le O(1)\cdot\OPT^{\mbox{\tiny\rm avg}}(S)$, or replacing expected running time over random permutations with analogous high-probability statements.  }
\end{definition}

Note that an instance-optimal algorithm in the above sense is immediately also competitive against {\em randomized\/} (Las Vegas) algorithms $A'$, by the easy direction of Yao's principle.  The above definition has extra appeal in computational geometry, as it is common to see the design of randomized algorithms where the input elements are initially permuted in random order~\cite{ClarksonShorDCG89}.

Instance-optimality in the random-order setting also implies {\em average-case\/} optimality where we analyze the expected running time under the assumption that the input elements are random and independently chosen from a common given probability distribution.  (To see this, just observe that the input sequence is equally likely to be any permutation of $S$ conditioned to the event that the set of $n$ input elements equals any fixed set $S$.)  An algorithm that is instance-optimal 
in the random-order setting
can deal with all probability distributions at the same time!
Random-order instance optimality 
also remedies a common complaint about
average-case analysis, that it does not provide information about
an algorithm's performance on a specific input.

\paragraph{Convex hull: our main result}
After making the case for instance-optimal algorithms under our definitions, the question remains: do such algorithms actually exist, or are they ``too good to be true''?  Specifically, we turn to one of the most fundamental and well known problems in computational geometry---computing the convex hull of a set of $n$ points.
Many $O(n\log n)$-time algorithms in 2-d and 3-d have been proposed since the 1970s~\cite{deBergvanKreveldBOOK,EdelsbrunnerBOOK,PreparataShamosBOOK}, which are worst-case optimal under the algebraic computation tree model.
Optimal output-sensitive algorithms can solve the 2-d and 3-d problem in $O(n\log h)$ time, where $h$ is the output size.  The first such output-sensitive algorithm in 2-d was found by \citeN{KirkpatrickSeidelSICOMP86} in the 1980s and was later simplified by \citeN{ChanSnoeyinkYap} and independently \citeN{WengerALGMCA97}; a different, simple, optimal output-sensitive algorithm was discovered by \citeN{ChanDCG96}.
In 3-d, the first optimal output-sensitive algorithm was obtained by \citeN{ClarksonShorDCG89} using randomization; another version was described by \citeN{ClarksonFOCS94}.  The first deterministic optimal output-sensitive algorithm in 3-d was obtained by \citeN{ChazelleMatousekCGTA95} via derandomization; the approach by \citeN{ChanDCG96} can also be extended to 3-d and yields a simpler optimal output-sensitive algorithm.
There are also average-case algorithms running in $O(n)$ expected time for certain probability distributions~\cite{PreparataShamosBOOK}, for example, when the points are independent and uniformly distributed inside a circle or a constant-sized polygon in 2-d, or a ball or a constant-sized polyhedron in 3-d.

The convex hull problem is in some ways an ideal candidate to consider in our models.  It is not difficult to think of examples of ``easy'' point sets and ``hard'' point sets (see Figure~\ref{fig:Instances}(a,b)).
It is not difficult to think of different heuristics for pruning nonextreme points, which may not necessarily improve worst-case complexity but may help for many point sets encountered ``in practice'' (e.g., consider \texttt{quickhull}~\cite{PreparataShamosBOOK}).  However, it is unclear whether there is a single pruning strategy that works best on all point sets.

In this paper, we show that there are indeed instance-optimal algorithms for both the 2-d and 3-d convex hull problem, in the order-oblivious or the stronger random-order setting. Our algorithms thus subsume all the previous output-sensitive and average-case algorithms simultaneously, and are provably at least as good asymptotically as any other algorithm for every point set, so long as input order is ignored.

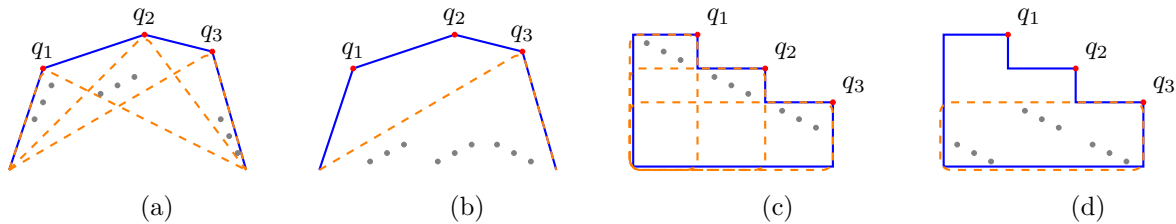
\begin{figure}
\hfill
\begin{minipage}[b]{.24\linewidth} 
\begin{tikzpicture}[scale=0.45] 
  \DefineInstanceConvexHullTwoD
  \draw [output] (0,0) -- (m1)  -- (m2) -- (m3) -- (7,0) ; 
  \draw [maxima] (m1) circle (2pt) (m2) circle (2pt) (m3) circle (2pt) ;
  \draw [above] (m1) node {$q_1$} (m2) node  {$q_2$} (m3) node  {$q_3$} ;
  \draw [optPartition]  (0,0) -- (m1) -- (7,0)  (0,0) -- (m2) -- (7,0)   (0,0) -- (m3) -- (7,0) ;
  \draw [nonmaxima] 
  (.5,1)
  ++(.25,.5) circle (2pt) 
  ++(.25,.5) circle (2pt) 
  ++(.25,.5) circle (2pt) 
  (2.2,2)
  ++(.5,.25) circle (2pt) 
  ++(.5,.25) circle (2pt) node(median) {}
  ++(.5,.25) circle (2pt) 
  (7,0)
  ++(-.25,.5) circle (2pt) 
  ++(-.25,.5) circle (2pt) 
  ++(-.25,.5) circle (2pt)
  ;
\end{tikzpicture}
\center{(a)}
\end{minipage}
\hfill
\begin{minipage}[b]{.24\linewidth} 
\begin{tikzpicture}[scale=0.45]
  \DefineInstanceConvexHullTwoD
  \draw [output] (0,0) -- (m1)  -- (m2) -- (m3) -- (7,0) ; 
  \draw [maxima] (m1) circle (2pt) (m2) circle (2pt) (m3) circle (2pt) ;
  \draw [above] (m1) node {$q_1$} (m2) node  {$q_2$} (m3) node  {$q_3$} ;
  \draw [optPartition]   (0,0) -- (m3) -- (7,0) ;
  \draw [nonmaxima] 
  (1,0)
  ++(.5,.25) circle (2pt) 
  ++(.5,.25) circle (2pt) 
  ++(.5,.25) circle (2pt) 
  (3,0)
  ++(.5,.25) circle (2pt) 
  ++(.5,.25) circle (2pt) node(median) {}
  ++(.5,.25) circle (2pt) 
  (4.75,1)
  ++(.5,-.25) circle (2pt) 
  ++(.5,-.25) circle (2pt) 
  ++(.5,-.25) circle (2pt)
  ;
\end{tikzpicture}
\center{(b)}
\end{minipage}
\hfill
\begin{minipage}[b]{.24\linewidth} 
\begin{tikzpicture}[scale=0.45]
  \DefineInstanceMaximaTwoD
  \draw [output] (0.1,0.1) |- (m1)  |- (m2) |- (m3) |- (0.1,0.1) ; 
  \draw [maxima] (m1) circle (2pt) (m2) circle (2pt) (m3) circle (2pt) ;
  \draw [above right] (m1) node {$q_1$} (m2) node  {$q_2$} (m3) node  {$q_3$} ;
  \draw [optPartition]  (0,0) rectangle (m1)  (0,0) rectangle (m2) (0,0) rectangle (m3) ;
  \draw [nonmaxima] 
  (0,4)
  ++(.5,-.25) circle (2pt) 
  ++(.5,-.25) circle (2pt) 
  ++(.5,-.25) circle (2pt) 
  (2,3)
  ++(.5,-.25) circle (2pt) 
  ++(.5,-.25) circle (2pt) node(median) {}
  ++(.5,-.25) circle (2pt) 
  (4,2)
  ++(.5,-.25) circle (2pt) 
  ++(.5,-.25) circle (2pt) 
  ++(.5,-.25) circle (2pt)
  ;
\end{tikzpicture}
\center{(c)}
\end{minipage}
\hfill
\begin{minipage}[b]{.24\linewidth} 
\begin{tikzpicture}[scale=0.45]
  \DefineInstanceMaximaTwoD
  \draw [output] (0.1,0.1) |- (m1)  |- (m2) |- (m3) |- (0.1,0.1) ; 
  \draw [maxima] (m1) circle (2pt) (m2) circle (2pt) (m3) circle (2pt) ;
  \draw [above right] (m1) node {$q_1$} (m2) node  {$q_2$} (m3) node  {$q_3$} ;
  \draw [optPartition]  (0,0) rectangle (m3); 
  \draw [nonmaxima] 
  (0,1)
  ++(.5,-.25) circle (2pt) 
  ++(.5,-.25) circle (2pt) 
  ++(.5,-.25) circle (2pt) 
  (2,2)
  ++(.5,-.25) circle (2pt) 
  ++(.5,-.25) circle (2pt) node(median) {}
  ++(.5,-.25) circle (2pt) 
  (4,1)
  ++(.5,-.25) circle (2pt) 
  ++(.5,-.25) circle (2pt) 
  ++(.5,-.25) circle (2pt)
  ;
\end{tikzpicture}
   \center{(d)}  
  \end{minipage}
\hfill
  \caption{
%
(a) A ``harder'' point set and (b) an ``easier'' point set for
the 2-d upper hull problem.
(c) A ``harder'' point set and (d) an ``easier'' point set
for the 2-d maxima problem.
%
%
}
  \label{fig:Instances}
\end{figure}

\paragraph{Techniques}
We believe that our techniques---for both the upper-bound side (i.e., algorithms) and the lower-bound side (i.e., proofs of their instance optimality)---are as interesting as our results.

On the upper-bound side, we find that in the 2-d case, a new algorithm is not necessary: a version of Kirkpatrick and Seidel's output-sensitive algorithm~\shortcite{KirkpatrickSeidelSICOMP86}, or its simplification by \citeN{ChanSnoeyinkYap}, is instance-optimal in the order-oblivious and random-order setting.  We view this as a plus: these algorithms are simple and practical to implement~\cite{BhattacharyaSenJALG97}, and our analysis sheds new light on their theoretical complexity.
In particular, our result immediately implies that a version of Kirkpatrick and Seidel's algorithm runs in $O(n)$ expected time for points uniformly distributed inside a circle or a fixed-size polygon---we were unaware of this fact before.
(As another plus, our result provides a positive answer to the
question in the title of Kirkpatrick and Seidel's paper, ``The ultimate
planar convex hull algorithm?'')

In 3-d we propose a new algorithm, as none of the previous output-sensitive algorithms seems to be instance-optimal.
For example, known 3-d generalizations of the Kirkpatrick--Seidel algorithm have suboptimal $O(n\log^2h)$ running time \cite{ChanSnoeyinkYap,EdelsbrunnerShi}, while a straightforward implementation of the algorithm by \citeN{ChanDCG96} fails to be instance-optimal even in 2-d.  Our algorithm builds on Chan's technique~\shortcite{ChanDCG96} but requires additional ideas, notably the use of {\em partition trees\/}~\cite{MatousekDCG92}.

The lower-bound side requires more innovation.  
We are aware of three existing techniques for proving worst-case $\Omega(n\log n)$ (or output-sensitive $\Omega(n\log h)$) lower bounds in computational geometry: (i) information-theoretic or counting arguments, (ii) topological arguments, from early work by \citeN{YaoJACM81} to \citeN{BenOrSTOC83}, and (iii) Ramsey-theory-based arguments, by \citeN{MoranSnirManber}.
Ben-Or's approach is perhaps the most powerful and works in the general algebraic computation tree model, whereas Moran \etal's approach works in a decision tree model where all the test functions have a bounded number of arguments.
For an arbitrary input set $S$ for the convex hull problem, the naive information-theoretic argument gives only an $\Omega(h\log h)$ lower bound on $\OPT(S)$.
On the other hand, topological and Ramsey-theory approaches seem unable to give any instance-specific lower bound (for example, modifying the topological approach is already nontrivial if we just want a lower bound for {\em some\/} integer input set~\cite{YaoSICOMP91}, let alone for {\em every\/} input set, whereas the Ramsey-theory approach considers only input elements that come from a cleverly designed subdomain).

We end up using a different lower bound technique which is inspired by an adversary argument originally used to prove time--space lower bounds for median finding~\cite{ChanSODA09}.
Note that this approach can lead to another proof of the standard $\Omega(n\log n)$ lower bounds for many geometric problems including the problem of computing a convex hull; the proof is simple and works in any algebraic decision tree model where the test functions have at most constant degree and have at most a constant number of arguments.
We build on the idea further and obtain an optimal lower bound for the convex hull problem for {\em every\/} input point set.
The assumed model is more restrictive: the class $\A$ of allowed algorithms consists of those under a decision tree model in which the test functions are {\em multilinear\/} and have at most a constant number of arguments.
Fortunately, most standard primitive operations encountered in existing convex hull algorithms satisfy the multilinearity condition (for example, the standard determinant test does).  
The final proof interestingly involves partition trees~\cite{MatousekDCG92}, which are more typically used in algorithms (as in our new 3-d algorithm) rather than in lower-bound proofs.

So, what is $\OPT(S)$, i.e., what parameter asymptotically captures the true difficulty of a point set~$S$ for the convex hull problem?  As it turns out, the bound has a simple expression (to be revealed in Section~\ref{sec:convex-hull}) and shares similarity with {\em entropy\/} bounds found in average-case
or expected-case 
analysis of geometric data structures where query points come from a given probability distribution---these {\em distribution-sensitive\/} results have been the subject of several previous pieces of work \cite{AryaMalamatosTALG07,AryaMalamatosSICOMP07,ColleteDujmovicSODA08,DujmovicHowatSODA09,IaconoCGTA04}.
However, lower bounds for distribution-sensitive data structures cannot be applied to our problem because our problem is off-line (lower bounds for on-line query problems usually assume that the query algorithms fit a ``classification tree'' framework, but an off-line algorithm may compare a query point not only with points from the data set but also with other query points).
Furthermore, although in the off-line setting we can think of the query points as coming from a discrete point probability distribution, this distribution is not known in advance.\footnote{ {\em Self-improving\/} algorithms~\cite{AilonChazelle} also cope with the issue of how to deal with unknown input probability distributions, but are not directly comparable with our results, since in their setting each point can come from a different distribution, so input order matters.  } 
Lastly, distribution-sensitive data structures are usually concerned with improving the query time, but not the total
time that includes preprocessing.

\paragraph{Other results}
The computation of the convex hull is just one problem for which we are able to obtain instance optimality.  We show that our techniques can lead to instance-optimal results for many other standard problems in computational geometry, in the order-oblivious or random-order setting, including:
\begin{enumerate}
\item[(a)] maxima in 2-d and 3-d; 
\item[(b)] reporting/counting intersection between horizontal and vertical line segments in 2-d; 
\item[(c)] reporting/counting pairs of $L_\infty$-distance at most 1 between a red point set and a blue point set in 2-d; 
\item[(d)] off-line orthogonal range reporting/counting in 2-d; 
\item[(e)] off-line dominating reporting in 2-d and 3-d; 
\item[(f)] off-line halfspace range reporting in 2-d and 3-d; and 
\item[(g)] off-line point location in 2-d.
\end{enumerate}

Optimal expected-case, entropy-based data structures for the on-line version of (g) are known \cite{AryaMalamatosSICOMP07,IaconoCGTA04}, but not for (e,f)---for example, \citeN{DujmovicHowatSODA09} only obtained results for 2-d dominance counting, a special case of 2-d orthogonal range counting.
Incidentally, as a consequence of our ideas, we can also get new optimal expected-case data structures for on-line 2-d general orthogonal range counting and 2-d and 3-d halfspace range reporting.

\paragraph{Related work}
\citeN{Fagin} first coined the term ``instance optimality'' (when studying the problem of finding items with the $k$ top aggregate scores in a database in a certain model), although some form of the concept has appeared before.
For example, the well known ``dynamic optimality conjecture'' is about instance optimality concerning algorithms for manipulating binary search trees (see \cite{DemaineHarmonSODA09} for the latest in a series of papers).
\citeN{DemaineLopezOrtizMunro} studied the problem of computing the union or intersection of $k$ sorted sets and gave instance-optimal results for any $k$ for union, and for constant $k$ for intersection, in the comparison model;
\citeN{BarbayChen} extended their result to the computation of convex hull of $k$ convex polygons in 2-d for constant~$k$.
Another work about instance-optimal geometric algorithms is by \citeN{BaranDemaineIJCGA05}, who addressed an approximation problem about computing the distance of a point to a curve under a certain black-box model.  Other than these, there has not been much work on instance optimality in computational geometry, especially concerning the classical problems under conventional models.

The concept of instance optimality resembles competitive analysis of on-line algorithms.  In fact, in the on-line algorithms literature, our order-oblivious setting of instance optimality is related to what \citeN{BoyarFavrholdtTALG07} called the {\em relative worst order ratio\/}, and our random-order setting is related to what \citeN{KenyonSODA96}
called the  {\em random order ratio}.  What makes instance optimality more intriguing is that we are not bounding the objective function of an optimization problem, but rather the running time of an algorithm.

\section{Warm-Up: 2-d Maxima}
\label{sec:max-points-two}

Before proving our main result on convex hull, we find it useful to study a simpler problem: maxima in 2-d.
For two points $p$ and $q$ we say $p$ {\em{dominates}} $q$ if each coordinate of $p$ is greater than that the corresponding coordinate of $q$.  Given a set $S$ of $n$ points in $\R^d$, a point $p$ is {\em maximal\/} if $p\in S$ and $p$ is not dominated by any other point in $S$.
For simplicity, we assume that the input is always nondegenerate throughout the paper (for example,
no two points share the same $x$- or $y$-coordinate).  The maxima problem is to report all maximal points. 

For an alternative formulation, we can define the {\em orthant\/} at a point $p$ to be the region of all points that are dominated by $p$.
In 2-d, the boundary of the union of the orthants at all $p\in S$ forms a {\em staircase\/}, and the maxima problem is equivalent to computing the staircase of $S$.

This problem has a similar history as the convex hull problem: many worst-case $O(n\log n)$-time algorithms are known; 
an output-sensitive algorithm by \citeN{KirkpatrickSeidelSCG85} runs in $O(n\log h)$ time for output size $h$; and average-case algorithms with $O(n)$ expected time have been analyzed for various probability distributions \cite{BentleyClarksonLevineSODA90,ClarksonFOCS94,PreparataShamosBOOK}.  
The problem is simpler than computing the convex hull, in the sense that direct pairwise comparisons are sufficient.  We therefore work with the class $\A$ of algorithms in the {\em comparison model\/} where we can access the input points only through comparisons of the coordinate of an input point with the corresponding coordinate of another input point.  The number of comparisons made by an algorithm yields a lower bound on the running time.

We define a measure $\entropy(S)$ to represent the difficulty of a point set $S$ and prove that the optimal running time $\OPT(S)$ is precisely $\Theta(n(\entropy(S)+1))$ for the 2-d maxima problem in the order-oblivious and random-order setting.

\begin{definition}
Consider a partition $\Pi$ of the input set $S$ into disjoint subsets $S_1, \ldots, S_t$.
We say that $\Pi$ is {\em{respectful}} if each subset $S_k$ is either a singleton or can be enclosed by an axis-aligned box $B_k$ whose interior is completely below the staircase of $S$.
Define the \emph{entropy} $\entropy(\Pi)$
of the partition $\Pi$ to be
$\sum_{k=1}^{t} (|S_k|/n) \log(n/|S_k|)$.
Define the \emph{structural entropy} $\entropy(S)$
of the input set $S$ to be the minimum of $\entropy(\Pi)$ over all respectful partitions $\Pi$ of $S$.
\end{definition}

\begin{figure}
\centering
\begin{adjustbox}{max width=.3\textwidth}
\begin{tikzpicture}
\DefineInstanceMaximaTwoD%
\DrawInstanceMaximaTwoD%
\draw[optPartition] 
(0.1,0.1) rectangle (4.1,3.1) 
(0.1,3.2) rectangle (2.2,4.1) 
(4.2,0.1) rectangle (6.1,2.1);
\end{tikzpicture}
\end{adjustbox}
\hfill
\begin{adjustbox}{max width=.3\textwidth}
\begin{tikzpicture}
\DefineInstanceMaximaTwoD%
\DrawInstanceMaximaTwoD%
\draw[optPartition] 
(0.2,2.2) rectangle (2.1,4.1) 
(2.2,2.2) rectangle (4.1,3.1) 
(0.2,0.1) rectangle (6.1,2.1) ;
\end{tikzpicture}
\end{adjustbox}
\hfill
\begin{adjustbox}{max width=.3\textwidth}
\begin{tikzpicture}
\DefineInstanceMaximaTwoD%
\DrawInstanceMaximaTwoD%
\draw[verticalPartition] 
(0.25,0.15) rectangle (2.15,4.15) 
(2.25,0.15) rectangle (4.15,3.15) 
(4.25,0.15) rectangle (6.15,2.15) ;
\end{tikzpicture}
\end{adjustbox}
\caption{Three respectful partitions of an instance of the
2-d maxima problem. 
The two partitions on the left have entropy
$\frac{1}{12}\log 12 + \frac{7}{12}\log\frac{12}{7} + \frac{4}{12}\log\frac{12}{4} \approx 1.281$.  
The partition $\PiVert$ on the right
has higher entropy $\log 3\approx 1.585$.
}
\label{fig:partition2dMaxima}
\end{figure}
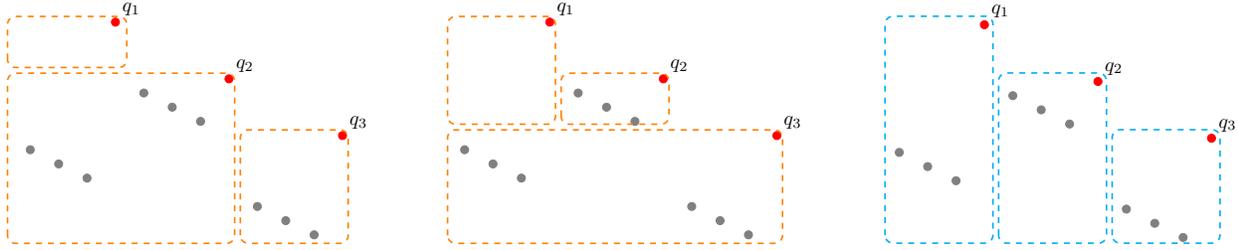

\begin{remark}
Alternatively, we could further insist in the definition that the bounding boxes $B_i$ are nonoverlapping and cover precisely the staircase of $S$.  However, this will not matter, as it turns out that the two definitions yield asymptotically the same quantity (this nonobvious fact is a byproduct of our lower bound proof in Section~\ref{sec:maxima:lb}).

Entropy-like expressions similar to $\entropy(\Pi)$
have appeared in the analysis of
 expected-case geometric data structures for the case of a discrete point probability distribution, although our definition itself is nonprobabilistic.  A measure proposed by \citeN{SenGupta} is identical to $\entropy(\PiVert)$ where $\PiVert$ is a partition of $S$ obtained by dividing the point set $S$ by $h$ vertical lines at the $h$ maximal points of $S$ (see Figure~\ref{fig:partition2dMaxima}(right) for an illustration).  
Note that $\entropy(\PiVert)$ is at most $\log h$ (see Figure~\ref{fig:Instances}(c)) but can be much smaller; in turn, $\entropy(S)$ can be much smaller than $\entropy(\PiVert)$ (see Figures~\ref{fig:Instances}(d) and~\ref{fig:partition2dMaxima}).  The complexity of the 1-d multiset sorting problem~\cite{MunroSpira} also involves an entropy
expression associated with one partition, but does not require taking 
the minimum over multiple partitions. 
\end{remark}

\subsection{Upper bound}
\label{sec:maxima:ub}

We use a slight variant of Kirkpatrick and Seidel's output-sensitive maxima algorithm~\shortcite{KirkpatrickSeidelSCG85} (in their original algorithm, only points from $Q_\ell$ are pruned in line~4):
\begin{algm}
\< \texttt{maxima2d}($Q$):\\
1.\' if $|Q|=1$ then return $Q$\\
2.\' divide $Q$ into the left and right halves $Q_\ell$ and $Q_r$ by the median $x$-coordinate\\
3.\' {\em discover\/} the point $q$ with the maximum $y$-coordinate in $Q_r$\\
4.\' {\em prune\/} all points in $Q_\ell$ and $Q_r$ that are dominated by $q$\\
5.\' return the concatenation of \texttt{maxima2d}$(Q_\ell)$ and \texttt{maxima2d}$(Q_r)$
\end{algm}

We call \texttt{maxima2d}($S$) to start: Figure~\ref{fig:maximaUpperBoundPartialExecution} illustrates the state of the algorithm after a single recursion level.
Kirkpatrick and Seidel showed that its running time is within $O(n\log h)$, and \citeN{SenGupta} improved this upper bound to $O(n(\entropy(\PiVert)+1))$.
Improving this bound to $O(n(\entropy(\Pi)+1))$ for an {\em arbitrary\/} respectful partition $\Pi$ of $S$ requires a bit more finesse:

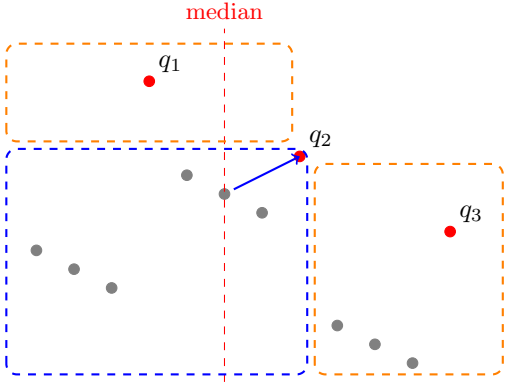
\begin{figure}\centering
\begin{minipage}{.5\textwidth}
\begin{tikzpicture}
\DefineInstanceMaximaTwoD%
\DrawInstanceMaximaTwoD%
\draw[color=red, dashed] (3,0) -- (3,4.7) node[above] {\small median}; 
\draw[->, color=blue, thick] (median) -- (m2); 
\draw[color=blue, partition] (0.1,0.1) rectangle (4.1,3.1); 
\draw[color=orange, partition] (0.1,3.2) rectangle (3.9,4.5); 
\draw[color=orange, partition] (4.2,0.1) rectangle (6.7,2.9);
\end{tikzpicture}
\end{minipage}\begin{minipage}{.4\textwidth}
\caption{Partial execution of \texttt{maxima}$(S)$ after one recursion level. 
In this example, after computing the median $x$-coordinate, the algorithm found the highest point $q_2$ to the right of the median, and pruned the $6$ points dominated by it.  Only $5$ points are left to recurse upon, $1$ to the left and $4$ to the right. \label{fig:maximaUpperBoundPartialExecution} }
\end{minipage}
\end{figure}
\begin{theorem}\label{thm:maxima:ub}
Algorithm \mbox{\tt maxima2d}$(S)$ runs in $O(n(\entropy(S)+1))$ 
time.
\end{theorem}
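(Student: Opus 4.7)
The plan is to bound the total running time by $\sum_{p \in S} d(p)$, where $d(p)$ counts the number of recursive sub-problems that contain $p$. I would start by fixing a respectful partition $\Pi = \{S_1, \ldots, S_t\}$ of $S$ achieving the minimum $\entropy(\Pi) = \entropy(S)$, and for each $k$ choosing a maximal point $q_k^* \in S$ whose orthant contains the enclosing box $B_k$ (such a $q_k^*$ exists because $B_k$ lies strictly below the staircase of $S$).

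The first key step is an ``escort'' lemma: for every $p \in S_k$, the dominator $q_k^*$ lies in the same sub-problem as $p$ as long as $p$ is alive. The proof is by contradiction: if some $x$-median $m$ puts $p \in Q_\ell$ and $q_k^* \in Q_r$, then the highest-$y$-coordinate point $q$ chosen from $Q_r$ satisfies $q.y \ge q_k^*.y > y_2^k \ge p.y$ (where $y_2^k$ is the top $y$-coordinate of $B_k$) and $q.x > m \ge p.x$, so $q$ dominates $p$ and the algorithm prunes $p$ in that very step---contradicting survival.

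The technical core is a per-point depth bound $d(p) \le \log(n/g(p)) + O(1)$, where $g(p)$ is the number of points of $S$ whose $x$-coordinate lies in $(p.x, q_k^*.x]$. By the escort lemma, both $p$ and $q_k^*$ lie in every sub-problem $Q_d$ on the path from the root to $p$, and each median split halves the sub-problem, so $|Q_d| \le n/2^d$. Writing $r_p(Q_d)$ and $r_{q_k^*}(Q_d)$ for their ranks within $Q_d$, the algorithm prunes $p$ precisely when the median rank $|Q_d|/2$ falls in $(r_p(Q_d), r_{q_k^*}(Q_d)]$; otherwise one is in a ``surviving'' case, and a bit-expansion-style analysis of the normalized ranks $x_d = r_p(Q_d)/|Q_d|$ and $y_d = r_{q_k^*}(Q_d)/|Q_d|$ shows that the gap $y_d - x_d$ doubles at every surviving step while starting from $g(p)/n$, so the median must enter the pruning window within $\log(n/g(p)) + O(1)$ steps.

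Finally, I would relate $g(p)$ to $|S_k|$ and sum. Ordering $S_k$ by $x$-coordinate as $p_1, \ldots, p_{|S_k|}$, the interval $(p_j.x, q_k^*.x]$ contains $p_{j+1}, \ldots, p_{|S_k|}$ together with $q_k^*$, giving $g(p_j) \ge |S_k| - j + 1$. A standard Stirling estimate then yields
\[
\sum_{p \in S_k} d(p) \;\le\; \sum_{j=1}^{|S_k|} \bigl(\log(n/(|S_k| - j + 1)) + O(1)\bigr) \;=\; O\bigl(|S_k| \log(n/|S_k|) + |S_k|\bigr),
\]
and summing over $k$ delivers the total $O(n(\entropy(S) + 1))$. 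The main obstacle is to make the depth bound rigorous despite prunings of points with $x$-coordinate in $(p.x, q_k^*.x)$, which can simultaneously shrink $|Q_d|$ and $g(Q_d)$ and perturb the clean doubling-gap picture; I would handle this by noting that such prunings only cause $|Q_d|$ to shrink faster than halving, which can only accelerate the moment at which $|Q_d|$ enters the pruning window $[2r_p(Q_d), 2r_{q_k^*}(Q_d))$.
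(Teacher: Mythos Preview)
Your per-point depth bound $d(p) \le \log(n/g(p)) + O(1)$ is false, and this breaks the argument. Here is a concrete family of counterexamples. Take $n$ large and let
\begin{itemize}
\item $p = (0,\,100)$,
\item $q_k^{*} = (n/4,\,200)$ (the unique maximal point dominating $p$),
\item gap points $g_i = (i,\,i)$ for $i = 1,\ldots,n/4-1$,
\item right points $r_i = (n/4+i,\,100 - i/n)$ for $i = 1,\ldots,3n/4-1$ (all maximal, all with $y$-coordinate just below $p.y$).
\end{itemize}
Then $g(p) = n/4$, so your bound predicts $d(p) \le 2 + O(1)$. But at level~0 the median falls among the right points (Case B in your taxonomy); the discovered point $q$ is some $r_i$ with $y$-coordinate just below $100$, which prunes \emph{all} the low-lying gap points but not $p$. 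After this single step the subproblem still contains $p$, $q_k^{*}$, and roughly $n/4$ right points, with $p$ and $q_k^{*}$ now adjacent in rank. From here no further pruning occurs, and it takes $\Theta(\log n)$ additional halvings before the median finally lands between $p$ and $q_k^{*}$. So $d(p) = \Theta(\log n)$, not $O(1)$. Your fallback remark (``pruning only shrinks $|Q_d|$ faster'') does not save this: pruning shrank the rank-window $[2r_p, 2r_{q_k^{*}})$ down to width $2$ at the same time, so there is no acceleration.

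The paper's proof sidesteps this entirely by \emph{not} bounding the depth of individual points. Instead it bounds, for each subset $S_k$ and each level $j$, the number of survivors of $S_k$ at level $j$ by $\min\{|S_k|,\,\lceil n/2^j\rceil\}$. The argument is: after $j$ levels the algorithm has discovered a sorted list $X_j$ of maximal points; between any two consecutive points of $X_j$ at most $\lceil n/2^j\rceil$ points survive (this is just the median-halving), and any point of $B_k$ whose $x$-coordinate does not lie between the two points of $X_j$ straddling the upper-right corner of $B_k$ is already below the staircase of $X_j$ and hence pruned. Summing $\min\{|S_k|, n/2^j\}$ over $j$ and $k$ gives the bound directly. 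This aggregate counting is what makes the proof go through; your escort lemma is correct and elegant, but it does not by itself control how long a single point can linger once its gap has collapsed.
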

\begin{proof}
Consider the recursion tree of the algorithm and let $X_j$ denote the sublist of all maximal points of $S$ discovered during the first $j$ recursion levels, in left-to-right order.  Let $S^{(j)}$ be the subset of points of $S$ that {\em survive\/} recursion level $j$, i.e., that have not been pruned during levels $0\TO j$ of the recursion, and let $n_j = |S^{(j)}|$.
The running time is asymptotically bounded by $\sum_{j=0}^{\up{\log n}} n_j$.
Observe that 
\begin{enumerate}
\item[(i)]~there can be at most $\lceil n/2^j\rceil$ points of $S^{(j)}$ with $x$-coordinates between any two consecutive points in $X_j$, and 
\item[(ii)]~all points of $S$ that are strictly below the staircase of $X_j$ have been pruned during levels $0\TO j$ of the recursion.
\end{enumerate}

Let $\Pi$ be {\em any\/} respectful partition of $S$.  Consider a subset $S_k$ in $\Pi$.  Let $B_k$ be a box enclosing $S_k$ whose interior lies below the staircase of $S$.  Fix a level $j$.
Suppose that the upper-right corner of $B_k$ has $x$-coordinate between two consecutive points $q_i$ and $q_{i+1}$ in $X_j$.  By (ii), the only points in $B_k$ that can survive level $j$ must have $x$-coordinates between $q_i$ and $q_{i+1}$.  Thus, by (i), the number of points in $S_k$ that survive level $j$ is at most $\min\left\{|S_k|,\lceil n/2^j\rceil \right\}$.
Since the $S_k$'s cover the entire point set, with a double summation we have
\begin{eqnarray*}
\sum_{j=0}^{\up{\log n}} n_j
&\le&
\sum_{j=0}^{\up{\log n}}\sum_k  \min\left\{|S_k|,\lceil n/2^j \rceil\right\} \\
&=&
\sum_k \sum_{j=0}^{\up{\log n}} \min\left\{|S_k|,\lceil n/2^j \rceil\right\} \\
&\le& 
\sum_k (|S_k|\lceil\log(n/|S_k|)\rceil + |S_k| + |S_k|/2 + |S_k|/4 +\cdots + 1)\\ 
&\le& 
\sum_k |S_k| (\lceil\log(n/|S_k|)\rceil + 2) \\
&\in&
O(n(\entropy(\Pi)+1)). 
\end{eqnarray*}
As $\Pi$ can be \emph{any} respectful partition of $S$, it can be in particular the one of minimum entropy, hence the final result.
\end{proof}

\subsection{Lower bound}
\label{sec:maxima:lb}

For the lower-bound side, we first provide an intuitive justification for the bound $\Omega(n(\entropy(S)+1))$ and point out the subtlety in obtaining a rigorous proof.  Intuitively, to certify that we have a correct answer, the algorithm must gather evidence for each point $p$ eliminated why it is not a maximal point, by indicating at least one {\em witness\/} point in $S$ which dominates $p$.
We can define a partition $\Pi$ by placing points with a common witness in the same subset.  It is easy to see that this partition~$\Pi$ is respectful.  The entropy bound $n\entropy(\Pi)$ roughly represents the number of bits required to encode the partition $\Pi$, so in a vague sense, $n\entropy(S)$ represents the length of the shortest ``certificate'' for~$S$.
\begin{LONG}
\begin{figure}
\centering
\begin{tikzpicture}[scale=.85]
\DefineInstanceMaximaTwoD%
\draw [output] (0.1,0.1) |- (m1)  |- (m2) |- (m3) |- (0.1,0.1) ; 
\DrawInstanceMaximaTwoD%
\draw[ultra thick,color=orange,dashed] (m1)
  edge [->] (m2) 
  edge [->] (m3);
\draw[color=blue,dashed] (m1)
 edge [<-] (p1)  
 edge [<-] (p2)   
 edge [<-] (p3)    ;
\draw[color=blue,dashed] (m2)
 edge [<-] (p4)
 edge [<-] (p5)  
 edge [<-] (p6);
\draw[color=blue,dashed] (m3)
 edge [<-] (p7)   
 edge [<-] (p8)  
 edge [<-] (p9)  ;
\end{tikzpicture}
\begin{tikzpicture}[scale=.85]
\DefineInstanceMaximaTwoD%
\draw [output] (0.1,0.1) |- (m1)  |- (m2) |- (m3) |- (0.1,0.1) ; 
\DrawInstanceMaximaTwoD%
\draw[ultra thick,color=orange,dashed] (m1)
  edge [->] (m2) 
  edge [->] (m3);
\draw[color=blue,dashed] (m2)
 edge [<-] (p1)  
 edge [<-] (p2)   
 edge [<-] (p3)    
 edge [<-] (p4)
 edge [<-] (p5)  
 edge [<-] (p6);
\draw[color=blue,dashed] (m3)
 edge [<-] (p7)   
 edge [<-] (p8)  
 edge [<-] (p9)  ;
\end{tikzpicture}
\caption{The difficulty of proving a lower bound by counting certificates: each elimination of a point $p$ must be justified by indicating a witness which dominates $p$. But there can be more than one such certificate for each instance, and some can be encoded in less space than others. Here the same instance is partitioned on the left into 3 subsets of equal size $4$, while it is partitioned on the right into subsets of sizes
$1,7,4$ with lower entropy.}
\label{fig:AlternativeMaximaAnalysis}
\end{figure}
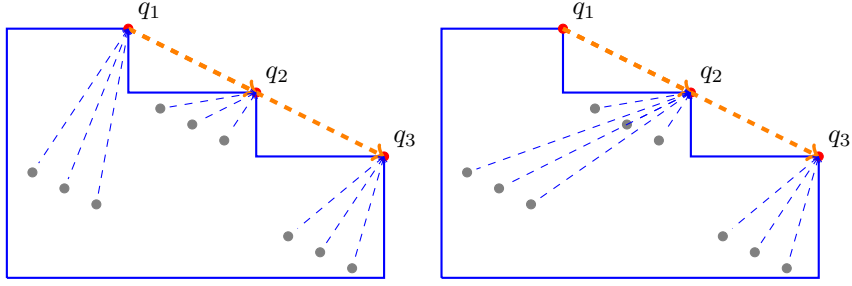
\end{LONG}
Unfortunately, there could be many valid certificates for a given input set $S$ (due to possibly multiple choices of witnesses for each nonmaximal point).
If hypothetically all branches of an algorithm lead to a common partition~$\Pi$, then a straightforward information-theoretic or counting argument would indeed prove the lower bound.  The problem is that each leaf of the decision tree may give rise to a different partition~$\Pi$.
\par
In Appendix~\ref{sec:maxima:alt}, we show that despite the aforementioned difficulty, it is possible to obtain a proof of instance optimality via this approach, but the proof requires a more sophisticated counting argument, and also works with a different difficulty measure.  Moreover, it is limited specifically to the 2-d maxima problem and does not extend to 3-d maxima, let alone to nonorthogonal problems such as the convex hull problem.

In this subsection, we describe a different proof, which generalizes to the other problems that we consider.  The proof is based on an interesting {\em adversary\/} argument.
We show in Section~\ref{sec:randorder} how to adapt the proof to the random-order setting.

\newcommand{\TT}{{\cal T}}
\newcommand{\PiKd}{\Pi_{\mbox{\scriptsize\rm kd-tree}}}
\newcommand{\PiPart}{\Pi_{\mbox{\scriptsize\rm part-tree}}}
\newcommand{\PiPartt}{\widetilde{\Pi}_{\mbox{\scriptsize\rm part-tree}}}

\begin{theorem}\label{thm:maxima:lb}
$\OPT(S)\in\Omega(n(\entropy(S)+1))$ for the 2-d maxima problem in the
comparison model.
\end{theorem}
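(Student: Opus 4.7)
My plan is to combine an information-theoretic / Kraft-inequality argument on the decision tree with an explicit extraction of a respectful partition of $S$ from each leaf of the algorithm's tree. Fix any correct comparison-based algorithm $A'\in\A$, and let $\sigma$ be a uniformly random permutation of $S$; write $\ell(\sigma)$ for the leaf reached and $d(\sigma)$ for its depth. Kraft's inequality gives $\Ex[d(\sigma)]\ge H(\ell(\sigma))$, the Shannon entropy of the induced leaf distribution, and since $\max_\sigma d(\sigma)\ge \Ex[d(\sigma)]$ and trivially $\max_\sigma d(\sigma)=\Omega(n)$ (each coordinate must be touched to certify its fate), it suffices to show $H(\ell(\sigma))\in\Omega(n\cdot\entropy(S))$, as this already gives the additive ``$+1$''.

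The heart of the plan is to associate, with each leaf $\ell$ reached by at least one permutation of $S$, a respectful partition $\Pi(\ell)=\{S_1,\ldots,S_t\}$ of $S$ whose multinomial structure dominates the count $|P(\ell)|$ of consistent permutations:
\[
|P(\ell)|\;\le\;\prod_{k}|S_k|!.
\]
To build $\Pi(\ell)$, I look at the comparisons appearing on the root-to-leaf path, which induce partial orders $<_x$ and $<_y$ on the input positions. From these I carve out axis-aligned ``cells'' confining each position geometrically, and cluster positions whose cells coincide and whose mutual relations have not been resolved. The points of $S$ assigned by any $\sigma\in P(\ell)$ to one cluster form the corresponding $S_k$, which is well defined because all $\sigma\in P(\ell)$ induce the same multiset per cluster.

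I then argue that $\Pi(\ell)$ is respectful: if some cluster's cell $B_k$ intersected or lay above the staircase of $S$, some $\sigma\in P(\ell)$ would place a maximal point in a cluster of size $\ge 2$, and some other $\sigma'\in P(\ell)$ could swap that maximal point with a non-maximum within the cluster, producing a different canonical output (e.g.\ maxima sorted by $x$) at the same leaf -- contradicting correctness. Hence each $S_k$ of size $\ge 2$ lies strictly below the staircase, singletons trivially satisfy the condition, and therefore $\entropy(\Pi(\ell))\ge\entropy(S)$ by the minimality in the definition of $\entropy(S)$. A Stirling estimate then yields
\[
\log\frac{n!}{|P(\ell)|}\;\ge\;\log\binom{n}{|S_1|,\ldots,|S_t|}\;\ge\;c\cdot n\cdot\entropy(\Pi(\ell))\;\ge\;c\cdot n\cdot\entropy(S),
\]
and taking the expectation over $\sigma$ gives $H(\ell(\sigma))=\sum_\ell\frac{|P(\ell)|}{n!}\log\frac{n!}{|P(\ell)|}\ge c\cdot n\cdot\entropy(S)$, as required.

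The main obstacle I anticipate is the clustering step: naive swap-equivalence of positions need not be transitive, because positions may be connected through a chain of comparisons via a third party even when they themselves were never directly compared. The clean way around this, which is where I expect the promised adversary argument to kick in, is to have the adversary answer each comparison so as to enforce a prescribed cellular structure on the positions (analogous to the median-finding adversary used elsewhere in the literature). This forces the partial orders $<_x$ and $<_y$ at every reachable leaf to decompose cleanly enough that the geometric cells and the counting bound $|P(\ell)|\le\prod_k|S_k|!$ both hold, and is the place where the technical work of the proof is concentrated; the rest of the argument is routine entropy bookkeeping.
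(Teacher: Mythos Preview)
Your plan and the paper's main proof part ways exactly where you flag the obstacle. The paper does \emph{not} extract a respectful partition from each leaf and then count; it fixes one respectful partition $\PiKd$ up front---the leaf boxes of a $k$-d tree on $S$, where a box becomes a leaf once it lies strictly below the staircase or holds a single point---and runs a pure adversary simulation. For each input position $p$ the adversary maintains a box $B_p$ in this tree; every comparison is resolved by pushing the two involved boxes one level deeper (with a ``full/non-full'' invariant and an amortization trick to handle the exceptional branches). At termination every $B_p$ must be a leaf, since otherwise one can perturb a point inside a non-leaf box and change the staircase. The sum of leaf depths is then $\Omega(n\entropy(\PiKd))\ge\Omega(n\entropy(S))$, and the amortization shows the comparison count is $\Omega$ of that sum. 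There is no Kraft inequality and no averaging over leaves.

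The gap in your counting route is exactly the clustering step, and it is structural rather than technical. Your assertion that ``all $\sigma\in P(\ell)$ induce the same multiset per cluster'' does not hold: the ``cell'' of a position is defined only relative to other positions via the partial orders $<_x,<_y$, not as an absolute box in the plane, so different $\sigma\in P(\ell)$ can send different actual points of $S$ to the same positional cluster. (Already with three points and a single comparison $1.x<2.x$ the point landing at position~1 is not determined.) Conversely, if you cluster actual points by swap-equivalence, positions in different clusters need not be fully ordered, and the bound $|P(\ell)|\le\prod_k|S_k|!$ breaks. The paper discusses this obstruction explicitly just before the proof, and in Appendix~A shows that an encoding/counting argument \emph{can} be pushed through for 2-d maxima---but only via a different difficulty measure $\FF(S)$, an auxiliary ``maxima with witnesses'' formulation, and a carefully staged left-to-right encoding; that argument does not generalize to 3-d maxima or convex hulls, and it is not the construction you sketch.

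Your proposed fix---let an adversary enforce a prescribed cellular structure---is precisely the paper's proof, and once you adopt it the Kraft/entropy layer is redundant: the adversary already produces a single bad permutation on which the comparison count is at least the total leaf depth in $\PiKd$, which is $\Omega(n\entropy(S))$ directly.
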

\begin{proof}
We prove that a specific respectful partition described below not only asymptotically achieves the minimum entropy among all the respectful partitions, but also provides a lower bound for the running time of any comparison-based algorithm that solves the 2-d maxima problem.  The construction of the partition is based
on {\em $k$-d trees\/}~\cite{deBergvanKreveldBOOK}.  We define a tree $\TT$ of axis-aligned boxes, generated top-down as follows:
The root stores the entire plane.
For each node storing box $B$, if $B$ is strictly below the staircase of $S$, or if $B$ contains just one point of $S$, then $B$ is a leaf.
Otherwise, if the node is at an odd (resp.\ even) depth, divide $B$ into two subboxes by the median $x$-coordinate (resp.\ $y$-coordinate) among the points of $S$ inside $B$.
The two subboxes are the children of $B$ (see Figure~\ref{fig:kdTreePartition} for an illustration).
Note that each box $B$ at depth $j$ of $\TT$ contains at least $\lfloor n/2^j\rfloor$ points of~$S$,
and consequently, the depth $j$ is
in $\Omega(\log (n/|S\cap B|))$.
\begin{figure}
\centering
\begin{tikzpicture}
\DefineInstanceMaximaTwoD%
\path[fill=cyan, rounded corners] (0.1,1) rectangle (3.1,4.2); 
  \path[fill=pink, rounded corners] (0.2,1.1) rectangle (1.6,1.9); 
  \draw[color=red, dashed] (0,2) -- (3.1,2) ; 
  \path[fill=pink, rounded corners] (1.7,2.1) rectangle (3.05,4.1); 
\draw[color=red, dashed] (3.15,0) -- (3.15,4.7) node[above] {\small median}; 
\path[fill=cyan, rounded corners] (3.2,0.1) rectangle (6.7,3.3);
  \path[fill=pink, rounded corners] (4.3,0.1) rectangle (5.6,.9); 
  \draw[color=red, dashed] (3.2,1.5) -- (6.7,1.5) ; 
  \path[fill=pink, rounded corners] (3.3,1.6) rectangle (6.2,3.2); 
\draw [output] (0.1,0.1) |- (m1)  |- (m2) |- (m3) |- (0.1,0.1) ; 
\DrawInstanceMaximaTwoD%
\end{tikzpicture}
\caption{The beginning of the recursive partitioning of $S$ by the $k$-d tree $\TT$, which will yield the final partition $\PiKd$ for the adversarial lower bound for the 2-d maxima problem. The two bottom boxes are already leaves, while the two top boxes will be divided further.}
\label{fig:kdTreePartition}
\end{figure}
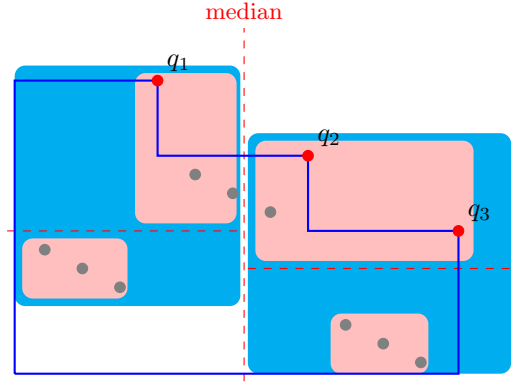

Our claimed partition, denoted by $\PiKd$, is one formed by the leaf boxes in this tree~$\TT$ (i.e., points in the same leaf box are placed in the same subset).
Clearly, $\PiKd$ is respectful.  We will prove that for any correct algorithm $A$ in the comparison model, there exists a permutation of $S$ on which the algorithm requires at least $\Omega(n\entropy(\PiKd))$ comparisons.

The adversary constructs a bad permutation for the input by simulating the algorithm $A$ and resolving each comparison so that the algorithm is forced to perform many others.
During the simulation, we maintain a box $B_p$ in $\TT$ for each point $p$.
If $B_p$ is a leaf box, the algorithm knows the exact location of $p$ inside $B_p$.
But if $B_p$ corresponds to an internal node, the only information the algorithm knows about $p$ is that $p$ lies inside $B_p$.
In other words, $p$ can be assigned any point in $B_p$ without affecting the outcomes of the previous comparisons made.

For each box $B$ in $\TT$, let $n(B)$ be the number of points $p$ such that the box $B_p$ is contained in $B$.
We maintain the invariant that $n(B) \leq |S\cap B|$.
If $n(B) = |S\cap B|$, we say that $B$ is {\em full}.
As soon as $B_p$ becomes a leaf box, we assign $p$ to an arbitrary point in $S\cap B_p$ that has not been previously assigned
(such a point exists because of the invariant); we then call $p$ a {\em fixed\/} point.

Suppose that the algorithm $A$ compares, say, the $x$-coordinates of two points $p$ and $q$.
The main case is when neither $B_p$ nor $B_q$ is a leaf.  
The comparison is resolved in the following way:
\begin{enumerate}
\item If $B_p$ (resp.\ $B_q$) is at even depth, we arbitrarily reset $B_p$ (resp.\ $B_q$) to one of its children that is not full.  Thus assume that $B_p$ and $B_q$ are both at odd depths.
  
  Without loss of generality, suppose that the median $x$-coordinate of $B_p$ is less than the median $x$-coordinate of $B_q$.
  We reset $B_p$ to the left child $B_p'$ of $B_p$ and $B_q$ to the right child $B_q'$ of $B_q$; if either $B_p'$ or $B_q'$ is full, we go to step 2.
  Now, the knowledge that $p$ and $q$ lie in $B_p'$ and $B_q'$ allows us to deduce that $p$ has a smaller $x$-coordinate than $q$.  Thus, the adversary declares to the algorithm that the $x$-coordinate of $p$ is smaller than that of $q$ and continues with the rest of the simulation.
  
\item An exceptional case occurs if $B_p'$ is full (or similarly $B_q'$ is full).  Here, we reset $B_p$ instead to the left (resp. right) sibling $B_p''$ of $B_p'$, but the comparison is not necessarily resolved yet, so we go back to step 1.
\end{enumerate}

Note that in both steps the invariant is maintained.  This is because $B_p$ and $B_p'$ cannot be both full: 
otherwise,
we would have $|S\cap B_p| = |S\cap B_p'| + |S\cap B_p''|=n(B_p')+n(B_p'')$, but $|S\cap B_p|\ge n(B_p)\ge n(B_p')+n(B_p'')+1$ (the ``$+1$'' arises because at 
least one point, notably, $p$, has $B_p$ as its box).



The above description can be easily modified in the
case when $B_p$ or $B_q$ is a leaf box.  If both $B_p$ and $B_q$ are leaf boxes, then $p$ and $q$ are already fixed and the comparison is already resolved.  If (without loss of generality)
only $B_p$ is a leaf, we follow step 1 except that now since $p$ has been fixed, we compare the actual $x$-coordinate of $p$ to the median $x$-coordinate of $B_q$, and reset only~$B_q$.

We now prove a lower bound on the number of comparisons, $T$, made
by the algorithm $A$.
Let $D$ be the sum of the depth of the boxes $B_p$ in the tree $\cal T$ over all points $p\in S$ at the end of the simulation of the algorithm.
We will lower-bound $T$ in terms of $D$.
Each time we reset a box to one of its children in step~1 or~2, $D$ is incremented;
we say that an {\em ordinary\/} (resp.\ {\em exceptional\/}) increment occurs at the parent box if this is done in step~1 (resp.\ step~2).
Each comparison generates only $O(1)$ ordinary increments.
To account for exceptional increments, we use a simple amortization argument: At each box $B$ in $\TT$, the number of ordinary increments has to reach at least $\lfloor |S\cap B|/2\rfloor$ first, before any exceptional increments can occur, and the number of exceptional increments is at most $\lceil |S\cap B|/2\rceil$.
Thus, the total number of exceptional increments can be upper-bounded by the total number of ordinary increments, which is in $O(T)$.
It follows that $D\in O(T)$, i.e., $T\in\Omega(D)$.

Thus, it remains to prove our lower bound for $D$.  
We first argue that at the end of the simulation,
$B_q$ must be a leaf box for every point $q\in S$.
Suppose that this is not the case.
After the end of the simulation, we can do the following postprocessing: for every point $p$ where $B_p$
corresponds to an internal node, we reset $B_p$ to one of its nonfull children arbitrarily, and repeat.
As a result, every $B_p$ now becomes a leaf box, all the input points have been assigned to points of $S$, and no two input points are assigned the same value, i.e., the input is fixed to a permutation of $S$.  The staircase of this input obviously coincides with the staircase of $S$.
Next, consider modifying this input slightly as follows.
Suppose that $B_q$ was not a leaf box before the postprocessing.
Then this box contained at least two points of $S$ and was not completely underneath the staircase of $S$.  We can either move a nonmaximal point upward or a maximal point downward inside~$B_q$, and obtain a modified input that is consistent with the comparisons made but has a different set of maximal points.  The algorithm would be incorrect on this modified input: a contradiction.

%
 It follows that
 \begin{eqnarray*}
 T
 &\in&\Omega(D) \\
 &\subseteq&\Omega\left(\sum_{\mbox{\rm\scriptsize leaf }B} |S\cap B|\log(n/|S\cap B|)\right) \\
 &\subseteq&\Omega(n\entropy(\PiKd)) \\
 &\subseteq&\Omega(n\entropy(S)).
 \end{eqnarray*}
Combined with the trivial $\Omega(n)$ lower bound, this establishes the 
$\Omega(n(\entropy(S)+1))$ lower bound.
\end{proof}

\begin{remark}\label{rmk:maxima:lb}
The above proof is inspired by an adversary argument described by \citeN{ChanSODA09} for a 1-d problem (the original proof maintains a dyadic interval for each input point, while the new proof maintains a box from a hierarchical subdivision).%
\footnote{
There are also some indirect similarities to an adversary argument
for sorting due to \citeN{KahnKim}, 
as pointed out to us
by Jean Cardinal (personal communication, 2010).
}
  The proof still holds for weaker versions of the problem, for example, reporting just the number of maximal points (or the parity of the number).  The lower-bound proof easily extends to any constant dimension and can be easily modified to allow comparisons of different coordinates of any two points $p=(x_1\TO x_d)$ and $q=(x_1'\TO x_d')$, for example, testing whether $x_i < x_j'$, or even $x_i < x_j'+a$ for any constant $a$.  (For a still wider class of test functions, see the next section.)
\end{remark}

\section{Convex Hull}
\label{sec:convex-hull}

We now turn to our main result on 2-d and 3-d convex hull.
It suffices to consider the problem of computing the upper hull of an input point set $S$ in $\R^d\ (d\in\{2,3\})$, since the lower hull can be computed by running the upper hull algorithm on a reflection of $S$.  (Up to constant factors, the optimal running time for convex hull is equal to the maximum of the optimal running time for upper hull and the optimal running time for lower hull, on every input.)

We work with the class $\A$ of algorithms in a {\em multilinear decision tree\/} model where we can access the input points only through tests of the form $f(p_1\TO p_c)>0$ for a multilinear function $f$, over a constant number of input points $p_1\TO p_c$.  
We recall the following standard definition:

\begin{definition}
A function $f:(\R^d)^c\rightarrow\R$ is {\em multilinear\/} if the restriction of $f$ is a linear function from $\R^d$ to $\R$ when any $c-1$ of the $c$ arguments are fixed.  Equivalently, $f$ is multilinear if $f((x_{11}\TO x_{1d})\TO (x_{c1}\TO x_{cd}))$ is a multivariate polynomial function in which each monomial has the form $x_{i_1j_1}\cdots x_{i_kj_k}$ where $i_1\TO i_k$ are all distinct (i.e., we cannot multiply coordinates from the same point).
\end{definition}

Most of the 2-d and 3-d convex hull algorithms that we know fit in this framework: it supports the standard determinant test (for deciding whether $p_1$ is above the line through $p_2,p_3$, or the plane through $p_2,p_3,p_4$), since the determinant is a multilinear function.  
For another example, in 2-d, comparison of the slope of the line through $p_1,p_2$ with the slope of the line through $p_3,p_4$ 
reduces to testing the sign of the function $(y_2-y_1)(x_4-x_3)-(x_2-x_1)(y_4-y_3)$, which is clearly multilinear.
We discuss in Section~\ref{sec:multilinear} the relevance and limitations of the multilinear model.

We adopt the following modified definition of $\entropy(S)$ (as before, it does not matter whether we insist that the simplices $\D_k$ below are nonoverlapping
for both the 2-d and 3-d problem):

\begin{definition}
A partition $\Pi$ of $S$ is {\em{respectful}} if each subset $S_k$ in $\Pi$ is either a singleton or can be enclosed by a simplex $\D_k$ whose interior is completely below the upper hull of $S$.
Define the \emph{structural entropy} $\entropy(S)$ of $S$ to be the minimum of $\entropy(\Pi) = \sum_k (|S_k|/n) \log(n/|S_k|)$ over all respectful partitions $\Pi$ of $S$.
\end{definition}

\subsection{Lower bound}
\label{sec:ch:lb}

The lower-bound proof for computing the convex hull builds on the corresponding lower-bound proof for computing the maxima from Section~\ref{sec:maxima:lb} but is more involved, because a $k$-d tree construction no longer suffices when addressing nonorthogonal problems.  Instead, we use the known following lemma:

\begin{lemma}\label{lem:part}
For every set $Q$ of $n$ points in $\R^d$ and $1\le r\le n$
for any constant~$d$, we can partition $Q$ into $r$ subsets $Q_1\TO Q_r$ each of size $\Theta(n/r)$ and find $r$ convex polyhedral cells $\gamma_1\TO\gamma_r$ each with $O(\log r)$ (or fewer) facets, such that $Q_i$ is contained in $\gamma_i$, and every hyperplane intersects at most $O(r^{1-\eps})$ cells.  Here, $\eps>0$ is a constant that depends only on $d$.
\end{lemma}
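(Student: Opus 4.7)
The plan is to obtain the partition by a constant-fanout recursion of logarithmic depth, so that the cells are automatically intersections of $O(\log r)$ half-spaces. Concretely, I would first establish a base \emph{partitioning primitive}: for a suitably large constant $t$ and some constant $c < t$, given any $m$-point set $P \subseteq \R^d$, one can split $P$ into $t$ blocks $P_1, \ldots, P_t$ of size $\Theta(m/t)$, each contained in a convex polyhedral cell with $O(1)$ facets, such that every hyperplane in $\R^d$ crosses at most $c$ of these cells. The strict inequality $c < t$ is the key feature, since it is what produces a sublinear-in-$r$ crossing bound after iteration.

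Given this primitive, I would apply it recursively on the tree of cells. Starting from $Q$ in $\R^d$, at each level I split every current block by the primitive, intersecting each new cell with its parent cell so that at most $O(1)$ facets are added per level. After $L = \lceil \log_t r \rceil$ levels I obtain $\Theta(r)$ leaves $\gamma_1,\ldots,\gamma_r$ with $|S \cap \gamma_i| = \Theta(n/r)$ and facet count $O(L) = O(\log r)$. A hyperplane $h$ that crosses a level-$\ell$ cell crosses at most $c$ of its children, so the total number of leaves crossed is at most
\[
c^L \;=\; r^{\,(\log c)/(\log t)} \;=\; r^{\,1-\eps},
\qquad \eps := 1 - \tfrac{\log c}{\log t} > 0.
\]
Finally, one may merge or split leaves as needed to reach exactly $r$ cells with the claimed sizes, which costs only constant factors.

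The main obstacle is the base primitive, and specifically the strict inequality $c < t$. The classical route, due to Matou\v{s}ek, is to pick a small test set $H$ of hyperplanes (an $\eps$-cutting or random sample of directions) and then build the cells by iterated ham-sandwich cuts, which produce $t = 2^{O(d)}$ balanced cells with $O(1)$ facets each. A counting/averaging argument over $H$ shows that the \emph{average} number of cells crossed by a hyperplane in $H$ is strictly less than $t$, and standard arguments promote this to a worst-case bound over all hyperplanes, giving the needed constant $c < t$. Because the statement only asks for polyhedra with $O(\log r)$ facets, one in fact has significant slack compared to Matou\v{s}ek's simplicial partition theorem; indeed one could alternatively cite that theorem directly and use its simplices (which have only $d+1$ facets each) as the leaves, obtaining the claim with $\eps = 1/d$ and facet count $O(1)$, hence certainly $O(\log r)$.
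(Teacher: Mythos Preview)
Your proposal is correct and matches the paper's approach: the paper either cites Matou\v{s}ek's partition theorem directly (giving $\eps=1/d$ with simplicial cells, hence $O(1)\le O(\log r)$ facets), or, for $d=2,3$, applies exactly your constant-fanout recursion using the 4-sectioning/8-sectioning theorem as the base primitive. The only minor difference is that the paper's primitive obtains $c<t$ from the elementary geometric fact that a line meets at most $3$ of the $4$ regions cut by two lines (and a plane at most $7$ of the $8$ regions cut by three planes), yielding $\eps=1-\log_4 3$ (resp.\ $1-\log_8 7$), rather than via the test-set/averaging argument you sketch.
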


The above follows from the {\em partition theorem\/}
of \citeN{MatousekDCG92}, who obtained the best constant 
$\eps=1/d$; in his construction, the cells $\gamma_i$
are simplices (with $O(1)$ facets)
and may overlap, and subset sizes are indeed
upper- and lower-bounded by $\Theta(n/r)$.
(A version of the partition theorem by
\citeN{ChanPartitionTree} can avoid overlapping cells, but does
not guarantee an $\Omega(n/r)$ lower bound on the subset sizes.)

In 2-d or 3-d, a more elementary alternative construction follows from the 4-sectioning or 8-sectioning theorem~\cite{EdelsbrunnerBOOK,YaoDobkinSICOMP89}: for every $n$-point set $Q$ in $\R^2$, there exist 2 lines that divide the plane into 4 regions each with $n/4$ points; for every $n$-point set $Q$ in $\R^3$, there exist 3 planes that divide space into 8 regions each with $n/8$ points.  Since in $\R^2$ a line can intersect at most 3 of the 4 regions and in $\R^3$ a plane can intersect at most 7 of the 8 regions, a simple recursive application of the theorem yields $\eps = 1-\log_4 3$ for $d=2$ and $\eps = 1-\log_8 7$ for $d=3$.  Each resulting cell $\gamma_i$ is 
a convex polytope with $O(\log r)$ facets, and the cells do
not overlap.

We also need another fact, a geometric property about multilinear functions:

\begin{lemma}\label{lem:zero}
If $f:(\R^d)^c\rightarrow \R$ is multilinear and has a zero in $\gamma_1\times\cdots\times\gamma_c$ where each $\gamma_i$ is a convex polytope in $\R^d$, then $f$ has a zero $(p_1\TO p_c)\in \gamma_1\times\cdots\times \gamma_c$ such that all but at most one point $p_i$ is a polytope's vertex.
\end{lemma}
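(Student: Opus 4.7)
The plan is to combine two elementary properties of multilinear functions: fixing any $c-1$ of the arguments, $f$ becomes linear in the remaining one (hence attains its extremes at the vertices of the corresponding polytope), and along any straight segment that varies a single $p_i$ inside $\gamma_i$, $f$ varies linearly. Together these let me reduce the lemma to an intermediate value argument along a piecewise linear path running through vertex tuples.

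First I would show that the global maximum $M$ of $f$ on $\gamma_1\times\cdots\times\gamma_c$ is attained at some tuple $(v_1\TO v_c)$ with each $v_i$ a vertex of $\gamma_i$: given any maximizer, fix $p_2\TO p_c$ and move $p_1$ to a vertex $v_1$ of $\gamma_1$ that maximizes the linear function $f(\,\cdot\,,p_2\TO p_c)$; the value remains $M$, so the new tuple is still a maximizer. Now fix $p_1=v_1$ and repeat with $p_2$, then $p_3$, and so on; after $c$ steps I obtain the desired vertex tuple. The same procedure applied to the minimum produces a vertex tuple $(v_1'\TO v_c')$ attaining the global minimum $m$. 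Since $f$ has a zero in $\gamma_1\times\cdots\times\gamma_c$ by hypothesis, $M\ge 0\ge m$.

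Next I would connect $(v_1\TO v_c)$ to $(v_1'\TO v_c')$ by a piecewise linear path of $c$ segments: on segment $i$, the $i$th coordinate slides linearly from $v_i$ to $v_i'$ inside $\gamma_i$ (feasible by convexity of $\gamma_i$), while the other coordinates are held fixed at the vertices $v_1'\TO v_{i-1}'$ and $v_{i+1}\TO v_c$. By multilinearity, $f$ restricted to each segment is a linear function of the moving coordinate, and the endpoints of consecutive segments agree, so $f$ is continuous along the whole path. Since it starts at $M\ge 0$ and ends at $m\le 0$, the intermediate value theorem yields a point on some segment $i$ where $f=0$. That point has the form $(v_1'\TO v_{i-1}',\, q,\, v_{i+1}\TO v_c)$ with $q\in\gamma_i$, so every coordinate except $q$ is a vertex---exactly the conclusion of the lemma. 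The argument presents no real obstacle; the only subtlety worth flagging is that one cannot in general push $q$ to a vertex as well, since a multilinear zero set can cut through the interior of a polytope without passing through any of its vertices, which is why the statement must leave one coordinate free.
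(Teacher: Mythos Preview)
Your proof is correct and takes a genuinely different route from the paper's. The paper starts from a given zero and incrementally pushes coordinates toward the boundary: first it moves each $p_i$ onto an edge of $\gamma_i$ (using that the zero set of the affine restriction is a hyperplane meeting some edge), and then it handles any two non-vertex coordinates simultaneously by restricting to their edges, obtaining a bilinear equation $at_it_j+a't_i+a''t_j+a'''=0$ in $[0,1]^2$ whose zero set (a hyperbola or a pair of lines) must cross the boundary of the square, forcing one of the two parameters to $0$ or $1$.

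Your approach sidesteps all of this geometry by a global extremal argument plus the intermediate value theorem along a coordinate-wise path between vertex tuples realizing the maximum and the minimum. This is cleaner and more elementary: you never need to analyze the shape of the zero set, and you avoid the case analysis implicit in the hyperbola step. The paper's argument, on the other hand, stays on the zero set throughout and transforms a \emph{given} zero into one of the desired form, which is marginally stronger than the bare existence statement you prove; but for the lemma as stated and as used later (only existence of such a zero is needed to count bad tuples), your version is entirely sufficient.
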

\begin{proof}
Let $(p_1\TO p_c)\in \gamma_1\times\cdots\times \gamma_c$ be a zero of $f$.  Suppose that some $p_i$ does not lie on an edge of $\gamma_i$.  If we fix the other $c-1$ points, the equation $f=0$ becomes a hyperplane, which intersects $\gamma_i$ and thus must intersect an edge of $\gamma_i$.  We can move $p_i$ to such an intersection point.  Repeating this process, we may assume that every $p_i$ lies on an edge $\overline{u_iv_i}$ of $\gamma_i$.  Represent the line segment parametrically as $\{(1-t_i)u_i + t_i v_i\mid 0\le t_i\le 1\}$.

Next, suppose that some two points $p_i$ and $p_j$ are not vertices.
If we fix the other $c-2$ points and restrict $p_i$ and $p_j$ to lie on $\overline{u_iv_i}$ and $\overline{u_jv_j}$ respectively, the equation $f=0$ becomes a multilinear function in two parameters $t_i,t_j\in [0,1]$.
The equation has the form $at_it_j+a't_i+a''t_j+a'''=0$ and is a hyperbola, which intersects $[0,1]^2$ and must thus intersect the boundary of $[0,1]^2$.
We can move $p_i$ and $p_j$ to correspond to such a boundary intersection point.  
Then one of $p_i$ and $p_j$ is now a vertex.
Repeating this process, we obtain the lemma.
\end{proof}

We are now ready for the main lower-bound proof:

\begin{theorem}\label{thm:ch:lb}
$\OPT(S)\in\Omega(n(\entropy(S)+1))$ for the upper hull problem
in any constant dimension $d$ in the multilinear decision tree model.
\end{theorem}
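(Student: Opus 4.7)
The plan is to generalize the adversary argument of Theorem~\ref{thm:maxima:lb}, replacing the $k$-d tree by a partition tree built from Lemma~\ref{lem:part} and using Lemma~\ref{lem:zero} to handle multilinear tests. Fix a sufficiently large constant $r$, to be chosen as a function of $c$, $d$ and the constant $\eps$ from Lemma~\ref{lem:part}. First I would build a tree $\TT$ of cells top-down: the root is a simplex containing $S$, and for each node storing a cell $\gamma$, if $\gamma$ lies strictly below the upper hull of $S$ or $|S\cap\gamma|\le 1$ then $\gamma$ is a leaf; otherwise I apply Lemma~\ref{lem:part} to $S\cap\gamma$ with parameter $r$ to obtain $r$ child cells, each containing $\Theta(|S\cap\gamma|/r)$ points. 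Let $\PiPart$ denote the respectful partition formed by the leaves; each leaf cell $B$ then has depth $\Theta(\log_r(n/|S\cap B|))$.

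The adversary mirrors the one from Theorem~\ref{thm:maxima:lb}. Each input point $p$ carries a current cell $B_p$, initially the root, refined only by replacing $B_p$ with one of its children in $\TT$. We maintain the invariant that for every node $B$ the number of points $p$ with $B_p\subseteq B$ is at most $|S\cap B|$; as soon as $B_p$ becomes a leaf we fix $p$ to a not-yet-used element of $S\cap B_p$. To answer a multilinear test $f(p_1\TO p_c)>0$: if $f$ has constant sign on $B_{p_1}\times\cdots\times B_{p_c}$, respond with that sign; otherwise, refine by replacing each $B_{p_i}$ with some child $B_{p_i}'$ of $B_{p_i}$ so that $f$ has constant sign on the new product, and answer accordingly.

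The existence of such a good child combination is where Lemmas~\ref{lem:part} and~\ref{lem:zero} combine. By Lemma~\ref{lem:zero}, a combination $(B_{p_1}'\TO B_{p_c}')$ is \emph{bad} only if $f$ has a zero in which at least $c-1$ of the points coincide with a vertex of their refined cell. Enumerate the index $i$ of the non-vertex point ($c$ choices); for each $j\ne i$ a child $B_{p_j}'$ of $B_{p_j}$ together with a vertex $v_j$ of $B_{p_j}'$ ($O(r)$ choices per $j$, since each child is a simplex with $d+1$ vertices); and finally a child $B_{p_i}'$ of $B_{p_i}$ that meets the hyperplane $H=\{q:f(v_1\TO v_{i-1},q,v_{i+1}\TO v_c)=0\}$, which by Lemma~\ref{lem:part} amounts to at most $O(r^{1-\eps})$ options. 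Multiplying yields $O(r^{c-\eps})$ bad combinations, strictly less than the $r^c$ total once $r$ is chosen larger than the hidden constant. Among the good combinations we then pick one whose chosen children are all non-full, paying any ``exceptional'' detour by an ordinary/exceptional amortization analogous to the binary one in Theorem~\ref{thm:maxima:lb}.

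Counting then closes the argument. Each test triggers $O(c)=O(1)$ ordinary refinement increments, so the potential $D=\sum_p \mathrm{depth}(B_p)$ satisfies $D\in O(T)$, where $T$ is the number of tests made. A small-perturbation argument like the one at the end of Theorem~\ref{thm:maxima:lb} shows that at the end of the simulation every $B_p$ must be a leaf: otherwise $B_p$ meets the upper hull, so moving a fixed input point inside $B_p$ across a supporting hyperplane of a facet of the upper hull would produce an input consistent with all test answers but with a different upper hull, contradicting correctness. Hence
\[
T \in \Omega(D)
   \subseteq \Omega\Bigl(\sum_{\mbox{\rm\scriptsize leaf }B}|S\cap B|\log(n/|S\cap B|)\Bigr)
   \subseteq \Omega(n\entropy(\PiPart))
   \subseteq \Omega(n\entropy(S)),
\]
which together with the trivial $\Omega(n)$ bound yields $\OPT(S)\in\Omega(n(\entropy(S)+1))$. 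The hardest step I expect is the third paragraph: the vertices of refined cells are new at each level (not inherited from the parent), so the enumeration must be set up after refinement, and the ordinary/exceptional amortization from the binary $k$-d tree must be extended to $r$-ary partition trees while interacting cleanly with the ``good combination'' selection.
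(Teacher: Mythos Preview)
Your argument is correct and is essentially the paper's own adversary-plus-partition-tree proof, including the same bad-tuple count via Lemmas~\ref{lem:part} and~\ref{lem:zero} and the same ordinary/exceptional amortization. The one cosmetic difference is that by keeping Matou\v{s}ek's raw simplicial cells (rather than intersecting each child with its parent as the paper does), your leaf cells are already simplices, so you legitimately skip the paper's triangulation refinement $\PiPartt$; just be careful to work with the point subset $Q(v)$ attached to each node rather than the literal $S\cap\gamma$ (these can differ when the cells overlap), and to declare a leaf once $|Q(v)|$ drops below the branching constant~$r$ rather than below~$2$.
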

\begin{proof}
We define a {\em partition tree\/} $\TT$ as follows: Each node $v$ stores a pair $(Q(v),\Gamma(v))$, where $Q(v)$ is a subset of $S$ enclosed inside a convex polyhedral cell $\Gamma(v)$.
%
%
The root stores $(S,\R^d)$.
If $\Gamma(v)$ is strictly below the upper hull of $S$, or if $|Q(v)|$ drops below a constant, then $v$ is a leaf.
Otherwise, apply Lemma~\ref{lem:part} with $r=b$ to partition $Q(v)$ and
obtain subsets $Q_1\TO Q_b$ and cells $\gamma_1\TO\gamma_b$.
For the children $v_1\TO v_b$ of $v$, set $Q(v_i)=Q_i$
and $\Gamma(v_i)=\gamma_i\cap\Gamma(v)$.
For a node $v$ at depth $j$ of the tree $\TT$ we then have $|Q(v)| \ge n/\Theta(b)^j$, and consequently,
the depth $j$ is in $\Omega(\log_b (n/|Q(v)|))$.
Furthermore, since $\Gamma(v)$ is the intersection of at most $j$ convex polyhedra with at most $O(\log b)$ facets each, it has size $(j\log b)^{O(1)}$.

Let $\PiPart$ be the partition formed by the subsets $Q(v)$ at the leaves $v$ in $\TT$.
Let $\PiPartt$ be a refinement of this partition obtained as follows: for each leaf $v$ at depth $j$, we triangulate $\Gamma(v)$ into $(j\log b)^{O(1)}$ simplices and subpartition $Q(v)$ by placing points of $Q(v)$ from the same simplex in the same subset; if $|Q(v)|$ drops below a constant, we subpartition $Q(v)$ into singletons.
Note that the subpartitioning of $Q(v)$ causes the entropy to increase%
\footnote{If $\sum_{i=1}^k q_i=q$, then
by concavity of the logarithm,
$\sum_{i=1}^k \frac{q_i}{n}\log\frac{n}{q_i}
\,=\, \frac{q}{n} \sum_{i=1}^k \frac{q_i}{q}\log\frac{n}{q_i}
\,\le\, \frac{q}{n} \log \frac{kn}{q} 
\,=\, \frac{q}{n}\log\frac{n}{q} + \frac{q}{n}\log k$.
}
by at most $O((|Q(v)|/n)\log (j\log b))\subseteq O((|Q(v)|/n)\log\log (n/|Q(v)|))$ for any constant $b$.
The total increase in entropy is thus within $O(\entropy(\PiPart))$.  So $\entropy(\PiPartt)\in\Theta(\entropy(\PiPart))$.  Clearly, $\PiPartt$ is respectful.

The adversary constructs a bad permutation for the input points as follows.  During the simulation, we maintain a node $v_p$ in $\TT$ for each point $p$.  If $v_p$ is a leaf, the algorithm knows the exact location of $p$ inside $\Gamma(v_p)$.  But if $v_p$ is an internal node, the only information the algorithm knows about $p$ is that $p$ lies inside $\Gamma(v_p)$.

For each node $v$ in $\TT$, let $n(v)$ be the number of points $p$ with $v_p$ in the subtree rooted at $v$.  We maintain
the invariant that $n(v) \le |Q(v)|$.  If $n(v) = |Q(v)|$, we say that $v$ is {\em full}.  As soon as $v_p$ becomes a leaf, we fix $p$ to an arbitrary unassigned point in $Q(v_p)$ (such a point exists because of the invariant).  

Suppose that the simulation encounters a test ``$f(p_1\TO p_c) > 0$?''.  The main case is when none of the nodes $v_{p_i}$ is a leaf.
\begin{enumerate}
\item Consider a $c$-tuple $(v_{p_1}'\TO v_{p_c}')$ where $v_{p_i}'$ is a child of $v_{p_i}$.
  We say that the tuple is {\em bad\/} if $f$ has a zero in $\gamma(v_{p_1}')\times\cdots\times\gamma(v_{p_c}')$, and {\em good\/} otherwise.
  We prove the existence of a good tuple by upper-bounding the number of bad tuples:
  If we fix all but one point $p_i$, the restriction of $f$ can have a zero in at most $O(b^{1-\eps})$ cells of the form $\gamma(v_{p_i}')$, by Lemma~\ref{lem:part} and the multilinearity of $f$.  There are $O(b^{c-1}\log b)$ choices of $c-1$ vertices of the cells of the form $\gamma(v_{p_1}')\TO \gamma(v_{p_c}')$.
  By Lemma~\ref{lem:zero}, it follows that the number of bad tuples is at most $O((b^{c-1}\log b)\cdot b^{1-\eps}) \subseteq o(b^c)$.
  As the number of tuples is in $\Theta(b^c)$, if $b$ is a sufficiently large constant, then we can guarantee that some tuple $(v_{p_1}'\TO v_{p_c}')$ is good.
  We reset $v_{p_i}$ to $v_{p_i}'$ for each $i=1\TO c$;
if some $v_{p_i}'$ is full, we go to step~2.  Since the tuple is good, the sign of $f$ is determined and the comparison is resolved.

\item In the exceptional case when some $v_{p_i}'$ is full, we reset $v_{p_i}$ instead to an arbitrary nonfull child, and go back to step~1.
\end{enumerate}

The above description can be easily modified in the case when some of the nodes $v_{p_i}$ are leaves, i.e., when some of the points $p_i$ are already fixed (we just have to lower $c$ by the number of fixed points).

Let $T$ be the number of tests made.  
Let $D$ be the sum of the depth of $v_p$ over all points $p\in S$.
%
%
The same amortization argument as in the previous
proof of Theorem~\ref{thm:maxima:lb}
proves that $T \in \Omega(D)$.
By an argument similar to before,
at the end of the simulation, $v_p$
must be a leaf for every $p\in S$. 
It follows that
\begin{eqnarray*}
T
&\in&\Omega(D)\\
&\subseteq&\Omega\left(\sum_{\mbox{\scriptsize leaf }v} |Q(v)| \log(n/|Q(v)|)\right)\\
&\subseteq&\Omega(n\entropy(\PiPart))\\ 
&\subseteq&\Omega(n\entropy(\PiPartt))\\
&\subseteq&\Omega(n\entropy(S)).
\end{eqnarray*}
Combined with the trivial $\Omega(n)$ lower bound, this establishes the theorem.
\end{proof}

The proof extends to weaker versions of the problem, for example,
reporting the number of hull vertices (or its parity).

\subsection{Upper bound in 2-d}
\label{sec:ch2d}

To establish a matching upper bound in 2-d,
we use a version of the output-sensitive
convex hull algorithm by \citeN{KirkpatrickSeidelSICOMP86}
described below, where an extra pruning step is added in line~2.
(This step is not new and has appeared in both {\tt quickhull}
\cite{PreparataShamosBOOK} and
the simplified output-sensitive
algorithm by \citeN{ChanSnoeyinkYap}; see Figure~\ref{fig:VariantOfKirkpatrickAndSeidelConvexHullAlgorithm} for illustration.)

\begin{figure}
\centering
\begin{tikzpicture}
\coordinate (m1) at (0,0);
\coordinate (m2) at (1,2.5);
\coordinate (m3) at (2,3.5);
\coordinate (m4) at (3,4);
\coordinate (m5) at (5,3.5);
\coordinate (m6) at (6,2.5);
\coordinate (m7) at (7,0);
\draw [ourAlgorithmDefinedPartition] (m1) -| (m4) -- (m1) ;
\draw [ourAlgorithmDefinedPartition] (m7) -| (m5) -- (m7) ;
\draw [KSAlgorithmDefinedPartition] (3,0) -- (m4) -- (m5) -- (5,0) -- (3,0)   ;
\draw[circle]  (m1) circle (2pt) -- (m2) circle (2pt) -- (m3) circle (2pt) -- (m4) circle (2pt) -- (m5) circle (2pt) -- (m6) circle (2pt) -- (m7) circle (2pt);
\draw[upperHull] (m1) circle (2pt)  (m4) circle (2pt)  (m5) circle (2pt)  (m7) circle (2pt);
\end{tikzpicture}
\caption{Kirkpatrick and Seidel's upper hull algorithm~\shortcite{KirkpatrickSeidelSICOMP86} with
an added pruning step.
Line~5 prunes the points in the shaded trapezoid.
The added step in line~2 prunes points
in the two shaded triangles.}
\label{fig:VariantOfKirkpatrickAndSeidelConvexHullAlgorithm}
\end{figure}
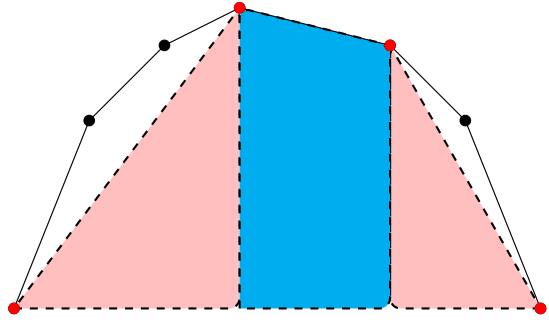

\begin{algm}
\< \texttt{hull2d}($Q$):\\
1.\' if $|Q|=2$ then return $Q$\\
2.\' {\em prune\/} all points from $Q$ strictly below the line through the leftmost and \\  rightmost points of $Q$\\
3.\' divide $Q$ into the left and right halves $Q_\ell$ and $Q_r$
by the median $x$-coordinate $p_m$\\
4.\' {\em discover\/} points $q,q'$ that define
the upper-hull edge $\overline{qq'}$ intersecting
the vertical \\ line at $p_m$ (in linear time)\\
5.\' {\em prune\/} all points from $Q_\ell$ and $Q_r$ that are strictly
underneath the line segment $\overline{qq'}$\\
6.\' return the concatenation of \texttt{hull2d}$(Q_\ell)$ and \texttt{hull2d}$(Q_r)$
\end{algm}

Line~4 can be done in $O(n)$ time (without knowing the upper hull beforehand) by applying a 2-d linear programming algorithm in the dual~\cite{PreparataShamosBOOK}.
We call \texttt{hull2d}($S$) to start.  It is straightforward to show that the algorithm, even without line~2, runs in time $O(n\log h)$, or $O(n(\entropy(\PiVert)+1))$ for the specific partition $\PiVert$ of $S$ obtained by placing points underneath the same upper-hull edge in the same subset, as was done by \citeN{SenGupta}.  To upper-bound the running time by $O(n(\entropy(\Pi)+1))$ for an arbitrary respectful partition $\Pi$ of $S$, we modify the proof in Theorem~\ref{thm:maxima:ub}:

\begin{theorem}
Algorithm \mbox{\tt hull2d}$(S)$ runs in $O(n(\entropy(S)+1))$ time.
\end{theorem}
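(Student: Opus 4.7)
The plan is to follow the proof of Theorem~\ref{thm:maxima:ub} almost verbatim, with the simplex $\D_k$ and the upper hull playing the roles that the box $B_k$ and the staircase of $S$ played there. Let $X_j$ denote the set consisting of the leftmost and rightmost points of $S$ together with all upper-hull vertices of $S$ discovered during the first $j$ recursion levels, let $S^{(j)}$ be the subset of points of $S$ that have survived after $j$ levels, and let $n_j=|S^{(j)}|$. As usual, the running time of \mbox{\tt hull2d}$(S)$ is asymptotically $\sum_{j=0}^{\up{\log n}}n_j$, so it suffices to establish $\sum_j n_j\in O(n(\entropy(\Pi)+1))$ for an arbitrary respectful partition $\Pi$ and then take the minimum over $\Pi$.

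I will first establish two observations analogous to (i) and (ii) in the proof of Theorem~\ref{thm:maxima:ub}. First, between any two consecutive points of $X_j$ (sorted by $x$-coordinate) the number of points of $S^{(j)}$ is at most $\lceil n/2^j\rceil$. Indeed, consecutive pairs in $X_j$ are either the leftmost/rightmost of a single level-$j$ subproblem (which contains at most $\lceil n/2^j\rceil$ points, by the median $x$-split in line~3) or the endpoints of a previously-discovered bridge edge (which, being a hull edge of an ancestor subproblem at the time of its discovery, leaves no surviving point strictly between its endpoints). Second, after the pruning of level~$j$, the surviving points in each subproblem $P$ lie weakly above the chord $\overline{\ell_P\, r_P}$ of that subproblem. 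This is where the added line~2 is essential: it is precisely this chord against which points are pruned. Note that $\overline{\ell_P\, r_P}$ is exactly the restriction of the piecewise-linear upper envelope of $X_j$ to $P$'s $x$-range, since $\ell_P$ and $r_P$ are the only members of $X_j$ in that range.

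Now consider a subset $S_k\subseteq\D_k$ of $\Pi$, where $\D_k$ is a triangle whose interior lies strictly below the upper hull of $S$. Mirroring the maxima analysis, I plan to pick a specific vertex of $\D_k$---the topmost one, call it $t$---find the pair $q_i,q_{i+1}$ of consecutive points of $X_j$ with $x_t\in[x_{q_i},x_{q_{i+1}}]$, and argue that every surviving point of $S_k$ has $x$-coordinate in $[x_{q_i},x_{q_{i+1}}]$; combined with the first observation this gives $|S_k^{(j)}|\le\lceil n/2^j\rceil$. The geometric step here is the main obstacle: it must exploit the concavity of the upper hull of $S$, the fact that $y_t$ lies strictly below the upper hull at $x_t$, and the concavity of the envelope through $X_j$. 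Intuitively, a surviving point $p\in\D_k$ with, say, $x_p<x_{q_i}$ would sit in some level-$j$ subproblem $P'$ whose chord $\overline{\ell_{P'}\, r_{P'}}$ rises high enough at $x_p$ to place $\D_k$ (and hence $p$) strictly below the chord, contradicting the ``above chord'' property of surviving points.

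With the bound $|S_k^{(j)}|\in O(\min\{|S_k|,\lceil n/2^j\rceil\})$ in hand, the remaining calculation is identical to the one in Theorem~\ref{thm:maxima:ub}: double-summing yields
\[
\sum_j n_j\;\le\;\sum_k\sum_j\min\bigl\{|S_k|,\lceil n/2^j\rceil\bigr\}\;\in\;O\Bigl(\sum_k|S_k|\bigl(\log(n/|S_k|)+2\bigr)\Bigr)\;\subseteq\;O(n(\entropy(\Pi)+1)).
\]
Choosing $\Pi$ to be the respectful partition of minimum entropy gives the claimed $O(n(\entropy(S)+1))$ bound.
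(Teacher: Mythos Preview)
Your overall structure is right, and observations~(i) and~(ii) carry over as you say. The gap is in the ``main obstacle'' step: the claim that every surviving point of $S_k$ must have $x$-coordinate in the single interval $[x_{q_i},x_{q_{i+1}}]$ bracketing the topmost vertex~$t$ is false in general. The upper hull of $X_j$ lies \emph{below} the upper hull of $S$, so even though the interior of $\D_k$ is below the upper hull of $S$, the polyline $H_j$ through $X_j$ can dip into $\D_k$ far from~$t$. Concretely, take $\D_k$ with vertices $t=(0,10)$, $u=(-10,0)$, $v=(10,0)$, and let consecutive $X_j$ points be $q_{i-1}=(-11,-2)$ and $q_i=(-1,9)$ (both can lie on a concave upper hull that stays above $\D_k$'s interior). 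The chord $\overline{q_{i-1}q_i}$ has height $0.2$ at $x=-9$, so a point $p=(-9,0.5)\in\D_k$ is above the chord and survives, yet $x_p=-9<x_{q_i}=-1$. Your intuitive argument that the chord of $p$'s subproblem ``rises high enough'' simply fails here.

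The paper does not try to confine the survivors to one interval. Instead it observes that if an edge $\overline{q_mq_{m+1}}$ of the upper hull of $X_j$ does \emph{not} intersect $\partial\D_k$, then that edge lies entirely above $\D_k$, so by~(ii) all points of $\D_k$ with $x$-coordinate between $q_m$ and $q_{m+1}$ are pruned. Since the upper hull of $X_j$ is a convex polyline, it meets each side of the triangle $\D_k$ at most twice, so only $O(1)$ edges of $H_j$ can intersect $\partial\D_k$. Hence the survivors in $S_k$ lie in $O(1)$ slabs, giving $|S_k^{(j)}|\le\min\{|S_k|,O(n/2^j)\}$, and your double-sum calculation then goes through unchanged. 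Replacing your single-interval claim with this $O(1)$-interval argument is the missing piece.
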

\begin{proof}
Like before, let $X_j$ denote the sublist of all hull vertices discovered during the first $j$ levels of the recursion, in left-to-right order.  Let $S^{(j)}$ be the subset of points of $S$
that survive recursion level~$j$, and $n_j=|S^{(j)}|$.
The running time is asymptotically bounded by 
$\sum_{j=0}^{\up{\log n}} n_j$.
Observe that 
\begin{enumerate}
\item[(i)]~there can be at most $\lceil n/2^j\rceil$ points of $S^{(j)}$ with $x$-coordinates between any two consecutive vertices in $X_j$, and 
\item[(ii)]~all points that are strictly below the upper hull of $X_j$ have been pruned during levels $0\TO j$ of the recursion.
\end{enumerate}

Let $\Pi$ be any respectful partition of $S$.  Consider a subset $S_k$ in $\Pi$.  Let $\D_k$ be a triangle enclosing $S_k$ whose interior lies below the upper hull of $S$.  Fix a level $j$.
If $q_i$ and $q_{i+1}$ are two consecutive vertices in $X_j$ such that $\overline{q_iq_{i+1}}$ does not intersect the boundary of $\D_k$ (i.e., is above $\D_k$), then all points in $\D_k$ with $x$-coordinates between $q_k$ and $q_{k+1}$ would have been pruned during the first $j$ levels by (ii).
Since only $O(1)$ edges $\overline{q_iq_{i+1}}$ of the upper hull of $X_j$ can intersect the boundary of $\D_k$, the number of points in $S_k$ that survive level $j$ is at most $\min\left\{|S_k|,O(n/2^j)\right\}$ by (i).
We then have
 $$\sum_{j=0}^{\up{\log n}} n_j\ \in\ \sum_{j=0}^{\up{\log n}}\sum_k  \min\left\{|S_k|,O(n/2^j)\right\} \ \subseteq\ O(n(\entropy(\Pi)+1))$$
as before.
\end{proof}

\begin{remark}
The same result holds for the simplified output-sensitive algorithm
by \citeN{ChanSnoeyinkYap}, which avoids the need to invoke a 2-d linear programming algorithm.  (Chan \etal's paper explicitly added the pruning step in their algorithm description.)  The only difference in the above analysis is that there can be at most $\lceil (3/4)^j n\rceil$ points of $S$ with $x$-coordinates between any two consecutive vertices in $X_j$.
\end{remark}

\subsection{Upper bound in 3-d}
\label{sec:ch3d}

We next present an instance-optimal
algorithm in 3-d that matches our lower bound.
Unlike in 2-d, it is unclear if any of the known algorithms can be modified for this purpose.  
For example, obtaining an $O(n(\entropy(\PiVert)+1))$ upper bound is already
nontrivial for the specific partition $\PiVert$ where points underneath the same upper-hull facet are placed in the same subset.
Informed by our lower-bound proof, we suggest an algorithm that is also based on partition trees.
We need the following subroutine:

\begin{lemma}\label{lem:group}
Given a set of $n$ halfspaces in $\R^d$ for any constant $d$,
we can answer a sequence of $r$ linear programming queries 
(finding the point that maximizes a query linear function over
the intersection of the halfspaces)
in total time $O(n\log r + r^{O(1)})$.
\end{lemma}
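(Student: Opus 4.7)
The plan is to employ a random-sampling approach in the spirit of Clarkson. First, draw a random sample $R$ of $\Theta(r)$ halfspaces and compute the intersection polytope $P_R=\bigcap R$ explicitly in $O(r^{\lfloor d/2\rfloor})\subseteq r^{O(1)}$ time, along with a point-location data structure on $P_R$ that answers each LP query in $O(\log r)$ time. Answering all $r$ queries on $P_R$ then takes $O(r\log r)\subseteq r^{O(1)}$ total time; call $v_j$ the resulting candidate vertex for query $j$.

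Next, I would filter the remaining halfspaces against these candidates: for each halfspace $h_i$ not in $R$, determine which of the $v_j$'s it violates. The plan is to build a batched halfspace-reporting structure on the $r$ candidate points using a multi-level partition or range tree (polynomial in $r$ to preprocess, so within $r^{O(1)}$), and to route each of the remaining halfspaces through this structure in $O(\log r)$ time plus output size, collecting for each $v_j$ its \emph{conflict list} of violating halfspaces. By the Clarkson--Shor sampling theorem for a sample of size $\Theta(r)$, the expected total size of all conflict lists is $O(n)$, so this step costs $O(n\log r)$ in expectation.

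Finally, for each query whose candidate $v_j$ is infeasible, re-solve its LP over $R$ together with the conflict list of $v_j$, using any linear-time fixed-dimensional LP algorithm. Each such re-solve takes time $O(|R| + |\mathrm{conflict}(v_j)|)$, and summing over the $r$ queries gives $O(r^2 + n)\subseteq O(r^{O(1)} + n)$ in expectation. Combining all steps yields expected running time $O(n\log r + r^{O(1)})$; a deterministic bound of the same order follows by derandomization, e.g.\ replacing the random sample with an $\eps$-net or using iterated halving.

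The hard part, I expect, is the batched halfspace-reporting step: routing each of $n$ halfspaces through a structure built on $r$ candidate points in $O(\log r)$ time plus output size. In constant dimension this is supported by standard range or partition trees on the candidate points, with all polynomial overhead in $r$ absorbed into the $r^{O(1)}$ term. A second delicate point is verifying that LP over $R\cup\mathrm{conflict}(v_j)$ actually reproduces the true LP answer for query $j$, which follows from the standard Clarkson two-phase argument: the true optimal basis for query $j$ must consist entirely of halfspaces lying in $R$ or violating $v_j$.
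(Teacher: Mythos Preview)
The paper does not prove this lemma from scratch; it cites Chan's earlier work and attributes the result to a \emph{grouping trick}: divide the $n$ halfspaces into $\lceil n/m\rceil$ groups of size $m$, preprocess each group in $O(m\log m)$ time so that an LP query restricted to that group can be answered quickly, and answer each of the $r$ global queries by combining the per-group information; setting $m$ to an appropriate power of $r$ gives $O(n\log r+r^{O(1)})$. Your Clarkson-style sampling route is thus genuinely different in spirit.

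Unfortunately, your proposal has a real gap in the final step. The claim that ``the true optimal basis for query $j$ must consist entirely of halfspaces lying in $R$ or violating $v_j$'' is false in general, so a single re-solve over $R\cup\mathrm{conflict}(v_j)$ need not return the true optimum. Concretely, in $\R^2$ take the objective ``maximize $y$, then $x$'' and the constraints $h_1\colon y\le 10$, $h_2\colon y\le -x+5$, $h_3\colon y\le x+5$, together with box constraints $-100\le x\le 100$. With $R$ consisting of $h_1$ and the box constraints, the sample optimum is $v_R=(100,10)$, which violates $h_2$ but \emph{satisfies} $h_3$ (since $10\le 105$). Thus $h_3\notin R\cup\mathrm{conflict}(v_R)$, yet the true optimum $(0,5)$ has basis $\{h_2,h_3\}$; re-solving over $R\cup\{h_2\}$ yields $(-5,10)$, which still violates $h_3$. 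What Clarkson's argument actually guarantees is that at least \emph{one} basis element lies among the violators of $v_R$, so the process must be iterated (up to $d$ rounds) before the full basis is captured. Incorporating this iteration is possible, but then your running-time analysis must be redone: the batched violator-reporting would have to be repeated, and the conflict-list sizes in later rounds depend on the now query-dependent, growing sets $R\cup C_j^{(1)}\cup C_j^{(2)}\cup\cdots$, so the clean $O(n)$ total-conflict bound from a single shared sample no longer applies directly.
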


The above lemma was obtained by Chan~\shortcite{ChanDCG96,ChanDCG96b}
using a simple grouping trick (which was
the basis of his output-sensitive $O(n\log h)$-time convex
hull algorithm); the $d\ge 3$ case required randomization.
A subsequent paper by \citeN{ChanSCG96}
gave an alternative approach using a partition construction; this eliminated randomization.

Our new upper hull algorithm can now be described as follows:
\begin{algm}
\< \texttt{hull3d}($Q$):\\
1.\' for $j=0,1\TO \lfloor\log(\de \log n)\rfloor$ do\\
2.\'\> partition $Q$ by Lemma~\ref{lem:part} to get $r_j=2^{2^j}$
	 subsets   
	$Q_1\TO Q_{r_j}$ and\\\> cells $\gamma_1\TO\gamma_{r_j}$\\
3.\'\> for each $i=1$ to $r_j$ do\\
4.\'\>\> if $\gamma_i$ is strictly below the upper hull of $Q$  then 
{\em prune\/} all points in $Q_i$ from $Q$\\
5.\' compute the upper hull of the remaining set $Q$ directly
\end{algm}

Line~2 takes $O(|Q|\log r_j + r_j^{O(1)})$ time by known algorithms for
Matou\v sek's partition theorem~\shortcite{MatousekDCG92} 
(or alternatively recursive application of the 8-sectioning theorem).
The test in line~4 reduces to deciding whether each
of the at most 
$O(\log r_j)$ vertices of the convex polyhedral cell $\gamma_i$ is strictly below the upper hull of $Q$.
This can be done (without knowing the upper hull beforehand) by answering a 3-d linear programming query in dual space.
Using Lemma~\ref{lem:group}, we can perform lines 3--4 collectively in time $O(|Q|\log r_j + r_j^{O(1)})$.
As $r_j\le n^\delta$,
the $r_j^{O(1)}$ term is negligible, since its total over all iterations is sublinear in $n$ by choosing a small constant $\delta$.
Line~5 is done by running any $O(|Q|\log|Q|)$-time algorithm.

\begin{theorem}\label{thm:ch3d}
Algorithm \mbox{\tt hull3d}$(S)$ runs in $O(n(\entropy(S)+1))$ time.
\end{theorem}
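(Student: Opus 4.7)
Fix any respectful partition $\Pi=(S_1,\ldots,S_t)$ of $S$ with enclosing simplices $\Delta_1,\ldots,\Delta_t$, set $s_k:=|S_k|$, and let $n_j$ denote $|Q|$ at the start of iteration~$j$ (so $n_0=n$). The plan is to charge the cost of each iteration, and of the final direct hull computation in line~5, against the budget $O(n(\entropy(\Pi)+1))$; the theorem then follows by choosing $\Pi$ to minimize $\entropy(\Pi)$. The key bookkeeping observation is that every upper-hull vertex of $S$ remains in $Q$ throughout: a cell strictly below the upper hull of $Q$ cannot contain such a vertex, so line~4 never prunes one, and hence the upper hull of $Q$ always coincides with that of $S$---justifying the use of the hull of $S$ when reasoning about what gets pruned.

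The key geometric estimate is that at most $O(n_j/r_j^{\eps})\le O(n/r_j^{\eps})$ points of $S_k$ can survive iteration~$j$. Indeed, any cell $\gamma_i$ lying in the interior of $\Delta_k$ is strictly below the upper hull of $S$ (hence of $Q$) and so its points are pruned in line~4. A surviving cell containing a point of $S_k$ must therefore meet $\partial\Delta_k$, and since $\partial\Delta_k$ is the union of $O(1)$ faces each on a hyperplane, Lemma~\ref{lem:part} yields at most $O(r_j^{1-\eps})$ such cells; each has $\Theta(n_j/r_j)$ points. Combined with the trivial $|S_k\cap Q_{j+1}|\le s_k$, this is what drives the entire analysis.

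For the per-iteration cost, iteration~$j$ runs in $O(n_j\log r_j+r_j^{O(1)})=O(n_j\cdot 2^j)+n^{O(\delta)}$; the $O(\log\log n)$ sublinear $n^{O(\delta)}$ terms total $o(n)$ for small $\delta$. I would bound the dominant $n_j\cdot 2^j$ sum by swapping summations and processing each $S_k$ separately. Let $J_k$ be the largest $j$ with $cn/r_j^{\eps}\ge s_k$, so $2^{J_k}=\Theta((1/\eps)\log(n/s_k))$. For $j\le J_k$, use $|S_k\cap Q_j|\le s_k$: a geometric series in $2^j$ gives $\sum_{j\le J_k}s_k\cdot 2^j=O(s_k\log(n/s_k))$. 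For $j>J_k$, use $|S_k\cap Q_j|\le cn/r_{j-1}^{\eps}$; since $r_{j+1}=r_j^2$, consecutive terms in $2^j\cdot cn/r_j^{\eps}$ have ratio $2/r_j^{\eps}\le O(s_k/n)$, so the tail is geometric and dominated by its first term, again $O(s_k\log(n/s_k))$. Summing over $k$ yields $\sum_j n_j\cdot 2^j\in O(n(\entropy(\Pi)+1))$.

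The main obstacle I expect is bounding the cost of line~5, which is $O(n_{j_\ast+1}\log n_{j_\ast+1})\le O(n_{j_\ast+1}\log n)$ where $j_\ast=\lfloor\log(\delta\log n)\rfloor$ and $r_{j_\ast}^{\eps}=n^{\delta\eps}$. The recursion alone only gives $n_{j_\ast+1}\le\sum_k\min(s_k,cn^{1-\delta\eps})$, which can reach $\Theta(n)$ and would na\"ively yield $\Omega(n\log n)$. To beat this I would split the sum at $s_k=n^{1-\delta\eps/2}$: subsets with $s_k\le n^{1-\delta\eps/2}$ satisfy $\log n=O(\log(n/s_k))$, so their contribution $\sum_k s_k\log n$ is absorbed into $O(n\entropy(\Pi))$; subsets with $s_k>n^{1-\delta\eps/2}$ number at most $n^{\delta\eps/2}$, each contributes at most $cn^{1-\delta\eps}$ points, and their total of $O(n^{1-\delta\eps/2}\log n)=o(n)$ is absorbed by slack. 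Combining all three pieces gives the $O(n(\entropy(\Pi)+1))$ bound, and minimizing over $\Pi$ completes the proof.
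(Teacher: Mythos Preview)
Your proposal is correct and follows essentially the same route as the paper: bound the survivors of each $S_k$ after iteration $j$ by $\min\{s_k,\,O(n/r_j^{\eps})\}$ via Lemma~\ref{lem:part}, swap the order of summation, and split each inner sum at the crossover index where the two bounds meet. The only difference is your treatment of line~5: the paper avoids your case split on $s_k$ versus $n^{1-\delta\eps/2}$ by simply noting that $\log n \in O(2^{j_\ast+1}) = O(\log r_{j_\ast+1})$ (since the loop runs until $r_{j_\ast}\approx n^{\delta}$), so the cost $O(n_{j_\ast+1}\log n)$ of line~5 is just one more term of the very sum $\sum_j n_j\log r_{j+1}$ you are already bounding and is absorbed automatically by the same double-summation analysis.
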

\begin{proof}
Let $n_j$ be the size of $Q$ just after iteration~$j$.  The total running time is asymptotically bounded by $\sum_j n_j\log r_{j+1}$.
(This includes the cost of line~5, which is $O(n_j \log n_j)\in O(n_j \log r_{j+1})$
for the last index $j=\lceil\log(\delta \log n)\rceil$.)

Let $\Pi$ be any respectful partition of $S$.  Consider a subset $S_k$ in $\Pi$.
Let $\D_k$ be a simplex enclosing $S_k$ whose interior lies below the upper hull of $S$.  Fix an iteration $j$.  Consider the subsets $Q_1\TO Q_{r_j}$ and cells $\gamma_1\TO\gamma_{r_j}$ at this iteration.
If a cell $\gamma_i$ is completely inside $\D_k$, then all points inside $\gamma_i$ are pruned.
Since $O(r_j^{1-\eps})$ cells $\gamma_i$ intersect the boundary of $\D_k$, the number of points in $S_k$ that remain in $Q$ after iteration $j$ is at most $\min\left\{|S_k|, O(r_j^{1-\eps}\cdot n/r_j)\right\} = \min\left\{|S_k|, O(n/r_j^\eps) \right\}$.  The $S_k$'s cover the entire point set, so with a double summation we have
\begin{eqnarray*}
\sum_j n_j\log r_{j+1} &\le&
\sum_j \sum_k \min\left\{|S_k|,\: O\left({n \over 2^{\eps 2^j}}\right) \right\}\cdot 2^{j+1}\\
&=& \sum_k \sum_j \min\left\{|S_k|,\: O\left({n \over 2^{\eps 2^j}}\right) \right\}\cdot 2^{j+1}\\
&\in& \sum_k O\left( \sum_{j\le\log((1/\eps)\log(n/|S_k|))+1} |S_k|2^j
                  \ +\sum_{j>\log((1/\eps)\log(n/|S_k|))+1} {n \over 2^{\eps 2^{j-1}}}
           \right)\\
 &\in& \sum_k O\left(|S_k| (\log(n/|S_k|) + 1)\right)\\
 &\in&  O(n(\entropy(\Pi)+1)),
\end{eqnarray*}
which yields the theorem.
\end{proof}

\begin{remark}
Variants of the algorithm are possible.  For example, instead of recomputing the partition in line~3 at each iteration from scratch, another option is to build the partitions hierarchically as a tree.  Points are pruned as the tree is generated level by level.

One minor technicality is that the above description of the algorithm does not discuss the low-level test functions involved.
In Section~\ref{sec:multilinear} we explain how a modification of the algorithm can indeed be implemented in the multilinear model.

A similar approach works for the 3-d maxima problem in the comparison model.  We just replace partition trees with
$k$-d trees, and replace linear programming queries with queries to test whether a point lies underneath the staircase, which can be done via an analog of Lemma~\ref{lem:group}.
\end{remark}

\section{Extension to the Random-Order setting}
\label{sec:randorder}

In this section, we describe how our lower-bound proofs in the order-oblivious setting can be adapted to the random-order setting.  
We focus on the convex hull problem and
describe how to modify the proof
of Theorem~\ref{thm:ch:lb}.
We need a technical lemma first:

\begin{lemma}\label{lem:randorder}
Suppose we place $n$ random elements independently in $t$ bins, where each element is placed in the $k$-th bin with probability $n_k/n$.  Then the probability that the $k$-th bin contains exactly $n_k$ elements for all $k=1\TO t$ is at least $n^{-O(t)}$.
\end{lemma}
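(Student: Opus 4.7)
The approach is a direct calculation using the multinomial distribution together with Stirling's approximation.

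First I would note that the event in question has probability given exactly by the multinomial formula
\[
P \;=\; \frac{n!}{n_1!\,n_2!\cdots n_t!}\,\prod_{k=1}^{t}\left(\frac{n_k}{n}\right)^{n_k},
\]
since each specific assignment of elements to bins that yields the profile $(n_1,\ldots,n_t)$ has probability $\prod_k (n_k/n)^{n_k}$, and there are $n!/\prod_k n_k!$ such assignments. Without loss of generality I can assume $n_k\ge 1$ for every $k$, because bins with $n_k=0$ contribute probability $0^0=1$ and can simply be discarded (decreasing $t$ can only help).

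Next I would plug in Stirling's bound $m!=\sqrt{2\pi m}\,(m/e)^m\bigl(1+\Theta(1/m)\bigr)$ for each factorial. The $e^n$ terms cancel because $\sum_k n_k=n$, the $n^n$ in the numerator cancels with $\prod(n_k/n)^{n_k}=\prod n_k^{n_k}/n^n$ against the $\prod n_k^{n_k}$ coming from the denominator, and what remains is
\[
P \;=\; \Theta(1)^{t}\,(2\pi)^{(1-t)/2}\,\sqrt{\frac{n}{\prod_{k=1}^{t} n_k}},
\]
where the $\Theta(1)^{t}$ collects the bounded Stirling error factors $(1+\Theta(1/n_k))$, one per bin.

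Then I would bound the product in the denominator by AM--GM: subject to $\sum_k n_k=n$ with $n_k\ge 1$, one has $\prod_k n_k\le (n/t)^t$, so
\[
\sqrt{\frac{n}{\prod_k n_k}}\;\ge\;\sqrt{\frac{n\,t^{t}}{n^{t}}}
\;=\;t^{t/2}\,n^{-(t-1)/2}.
\]
Combining this with the $C^{-t}$ constant factor from the previous step gives $P\ge C^{-t} n^{-(t-1)/2}$, which for $t\le n$ (the only regime that matters, since otherwise both sides are trivial) is bounded below by $n^{-O(t)}$, as claimed.

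The only subtlety worth double-checking is Stirling's bound at $m=1$, where the multiplicative error is bounded but not close to $1$; that is exactly why I isolated the $\Theta(1)^t$ factor and absorbed it into the final $n^{-O(t)}$ estimate. Beyond that detail the argument is a one-line Stirling computation followed by AM--GM, so no real obstacle arises.
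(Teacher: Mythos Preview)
Your proof is correct and follows essentially the same approach as the paper: write the exact multinomial probability, apply Stirling's formula so that the $e^n$, $n^n$, and $n_k^{n_k}$ factors cancel, and bound the remaining product of square roots. The paper bounds $\prod_k\sqrt{n_k}$ crudely by $(\sqrt{n})^{t}$ rather than using AM--GM, but both routes yield the same $n^{-O(t)}$ conclusion; your extra care about the Stirling error at $n_k=1$ is a nice touch the paper glosses over.
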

\begin{proof}
The probability is exactly
$\frac{n!}{n_1!\cdots n_t!}
\left(\frac{n_1}{n}\right)^{n_1}\cdots
\left(\frac{n_t}{n}\right)^{n_t},$
which by Stirling's formula is 
\[ \frac{\Theta(\sqrt{n})(n/e)^{n}}{
\Theta(\sqrt{n_1})(n_1/e)^{n_1}\cdots \Theta(\sqrt{n_t})(n_t/e)^{n_t}
}
\left(\frac{n_1}{n}\right)^{n_1}\cdots
\left(\frac{n_t}{n}\right)^{n_t}
\ \subseteq\ \frac{1}{\Theta(\sqrt{n})^{t-1}},
\]
yielding the result.
\end{proof}

We now present our lower-bound proof in
the random-order setting.  (The proof
is loosely inspired by the randomized ``bit-revealing'' argument by \citeN{ChanSODA09}.)

\newcommand{\PiKdd}{\widetilde{\Pi}_{\mbox{\scriptsize\rm kd-tree}}}

\begin{theorem}\label{thm:randorder}
$\OPT\avg(S)\in\Omega(n(\entropy(S)+1))$ for the upper hull 
problem in any constant dimension~$d$ in the multilinear decision tree model.
\end{theorem}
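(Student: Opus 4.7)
The plan is to adapt the adversary argument of Theorem~\ref{thm:ch:lb} to the random-order setting, using Yao's principle together with Lemma~\ref{lem:randorder}. By Yao's principle, it suffices to exhibit a distribution $\mu$ over permutations of $S$ such that $\Ex_{\sigma\sim\mu}[T_A(\sigma)] = \Omega(n(\entropy(S)+1))$ for every deterministic algorithm $A$ in the multilinear decision tree model. I would take $\mu$ to be the uniform distribution over permutations of $S$ and realize this distribution through a randomized adversary process.

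First, I would reuse the partition tree $\TT$ and the respectful partition $\PiPartt$ from Theorem~\ref{thm:ch:lb}, with leaves corresponding to parts of sizes $|Q(L_1)|,\ldots,|Q(L_t)|$, and recall that $\sum_L |Q(L)|\,\mathrm{depth}_\TT(L) = \Omega(n\entropy(S))$. Next, I would describe a randomized adversary that simultaneously simulates $A$ and generates $\sigma\sim\mu$. Boxes $v_p$ start at the root; when $A$ issues a test, the adversary identifies a good tuple of children via Lemma~\ref{lem:zero} (good tuples dominate because bad tuples number $o(b^c)$) and routes the relevant boxes to that tuple, but picks among good tuples using random probabilities calibrated so that each $v_p$ ultimately reaches leaf $L$ with probability $|Q(L)|/n$, and within a leaf the unassigned elements of $Q(L)$ are assigned uniformly. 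The full/nonfull invariant is maintained exactly as in Theorem~\ref{thm:ch:lb}; when an exceptional step is invoked, it is amortized against ordinary ones in the same way.

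The two claims to verify are then: (i) the induced distribution on $\sigma$ is uniform over permutations of $S$, and (ii) the amortization argument still gives $T_A(\sigma)\ge\Omega(D)$ where $D=\sum_p\mathrm{depth}_\TT(v_p)$. For (i), a valid sequence of leaf-assignments occurs when the multinomial counts $|\{p:v_p=L\}|=|Q(L)|$ hold, which by Lemma~\ref{lem:randorder} has probability $\ge n^{-O(t)}$ under the independent routing; using either rejection sampling or direct conditional routing, the conditional distribution equals uniform on permutations. For (ii), the amortization is a pointwise property of the adversary's bookkeeping that only requires the routing to drive $O(1)$ refinements per test, and this carries over unchanged. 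Therefore, in every successful realization, $D = \sum_L |Q(L)|\,\mathrm{depth}_\TT(L) = \Omega(n\entropy(S))$ and $T_A(\sigma)\ge\Omega(n\entropy(S))$. Combined with the trivial $\Omega(n)$ bound, $\Ex_{\sigma\sim\mathrm{unif}}[T_A(\sigma)]=\Omega(n(\entropy(S)+1))$, completing the proof.

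The main obstacle is calibrating the adversary's random routing so that the distribution on $\sigma$ is \emph{exactly} uniform while simultaneously staying within the set of good tuples at every test. The good-tuple constraint is combinatorial and not a priori aligned with the multinomial routing probabilities needed for uniformity; Lemma~\ref{lem:zero} ensures only a $o(1)$-fraction of tuples are bad, so at each step the adversary has enough flexibility to perturb its routing probabilities onto good tuples while preserving the marginals. Verifying this carefully, along with handling the exceptional-step interaction with the randomization, is the delicate part, but Lemma~\ref{lem:randorder} is exactly the tool that makes this trade-off feasible, since the multinomial event one must condition on has probability only polynomially small in $n$.
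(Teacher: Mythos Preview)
Your proposal has the right overall shape—randomize the adversary, condition on the event that the leaf counts come out right, and appeal to Lemma~\ref{lem:randorder}—but several of the concrete steps do not work as written, and two key ideas from the paper's argument are missing.

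First, you use the full tree $\TT$ from Theorem~\ref{thm:ch:lb}, which can have $t=\Theta(n)$ leaves. Lemma~\ref{lem:randorder} then gives $\Pr[\text{multinomial event}] \ge n^{-O(n)}$, which is useless: you cannot condition on an event of that probability and still control anything. The paper's proof truncates $\TT$ to depth $\lfloor\delta\log n\rfloor$, so there are only $O(n^\delta)$ leaves, each of size $\Theta(n^{1-\delta})$; then the multinomial event has probability $\ge n^{-O(n^\delta)}$, which is large enough to survive the later union-bound step. You must also check that truncation does not destroy the entropy bound, by subpartitioning the deep leaves into singletons in $\PiPartt$ and verifying $\entropy(\PiPartt)\in\Theta(\entropy(\PiPart))$ still holds.

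Second, your claim (ii) that each test causes $O(1)$ refinements ``pointwise'' is incompatible with the randomization. If you route each $v_{p_i}$ to a child with probability $|Q(v')|/|Q(v)|$ independently and \emph{repeat until the tuple is good}, then the number of refinements per test is a geometric random variable, not a constant. The amortization from Theorem~\ref{thm:ch:lb} no longer applies; instead the paper uses a Chernoff bound over a sum of $T\ge n$ independent geometric variables to get $\Pr[D>c_0T]\le 2^{-\Omega(n)}$, and then compares this $2^{-\Omega(n)}$ against the $n^{-O(n^\delta)}$ lower bound on the conditioning event via $\Pr[\text{bad}\mid\text{conditioning}]\le \Pr[\text{bad}]/\Pr[\text{conditioning}]$. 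Relatedly, your attempt to both ``calibrate probabilities onto good tuples'' and ``maintain the full/nonfull invariant exactly as in Theorem~\ref{thm:ch:lb}'' is internally inconsistent: the invariant forces history-dependent routing, which destroys the independence you need for both Lemma~\ref{lem:randorder} and Chernoff. The paper's resolution is to \emph{drop} the invariant entirely, allow ``failure'' (a leaf overfills), and observe that non-failure is exactly the multinomial event—so conditioned on non-failure, the resulting permutation is uniform by symmetry, with no calibration required.
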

\begin{proof}
Fix a sufficiently small constant $\de>0$.  Let $\TT$ be as in the proof of Theorem~\ref{thm:ch:lb}, except that we keep only the first $\down{\de\log n}$ levels of the tree, i.e., when a node reaches depth $\down{\de\log n}$, it is made a leaf.

Let $\PiPart$ be the partition of $S$ formed by the leaf cells in $\TT$.  Let $\PiPartt$ be a refinement of $\PiPart$ in which each
leaf cell is further triangulated  
and each subset corresponding to a cell of depth $\down{\de\log n}$ is further subpartitioned into singletons.  Note that each such subset has size $\Theta(n^\de)$ and contributes $\Theta((n^\de/n)\log n)$ to both the entropy of $\PiPart$ and $\PiPartt$.  Thus, $\entropy(\PiPartt)\in\Theta(\entropy(\PiPart))$.  Clearly, $\PiPartt$ is respectful.

The adversary proceeds differently.  We do not explicitly maintain the invariant that no node $v$ is full.  Whenever some $v_p$ first becomes a leaf, we assign $p$ to a random point among the points in $Q(v_p)$ that has previously not been assigned.  If all points in $Q(v_p)$ have in fact been assigned, we say that {\em failure\/} has occurred.

Suppose that the simulation encounters a 
test ``$f(p_1\TO p_c) > 0$''.  We do the following:
\begin{itemize}
\item We reset each $v_{p_i}$ to one of its children at random, where each child $v_{p_i}'$ is chosen with probability
$|Q(v_{p_i}')|/|Q(v_{p_i})|$ (which is in $\Theta(1/b)$).  If the tuple $(v_{p_1}'\TO v_{p_c}')$ is good (as defined in
the proof of Theorem~\ref{thm:ch:lb}),
then the comparison is resolved.  Otherwise, we repeat.
\end{itemize}

Since we have shown that the number of bad tuples is in $o(b^c)$,
the probability that the test is not resolved in one step
is in $o(b^c)\cdot \Theta(1/b)^c$, which can be made 
less than $1/2$ for a sufficiently large constant $b$.  The number of iterations per comparison is thus upper-bounded by a geometrically distributed random variable with mean $O(1)$.

Let $T$ be the number of comparisons made.  Let $D$ be the sum of the depth of $v_p$ over all points $p\in S$ at the end of the simulation.  Clearly, $D$ is upper-bounded by the total number of iterations performed, which is at most a sum of $T$ independent geometrically distributed random variables with mean $O(1)$.  Let $(\ast)$ be the event that $D\le c_0T$ for a sufficiently large constant $c_0$.  By the Chernoff bound, $\Pr[(\ast)]\ge 1-2^{-\Omega(T)}\ge 1-2^{-\Omega(n)}$.


By the same argument as before, at the end of
the simulation, $v_p$ must be a leaf for every
$p\in S$, assuming that failure has not occurred.

Let $(\dag)$ be the event that failure has not occurred.  If $(\ast)$ and $(\dag)$ are both true, then
\begin{eqnarray*}
T&\in&\Omega(D)\\
&\subseteq& \Omega\left(\sum_{\mbox{\rm\scriptsize leaf }v} |Q(v)|\log(n/|Q(v)|)\right)\\
&\subseteq& \Omega(n\entropy(\PiPart))\\
&\subseteq& \Omega(n\entropy(\PiPartt))\\
&\subseteq& \Omega(n\entropy(S)).
\end{eqnarray*}

To analyze $\Pr[(\dag)]$, consider the leaf $v_p$ that a point $p$ ends up with after the simulation (regardless of whether failure has occurred).  This is a random variable, which equals a fixed leaf $v$ with probability $|Q(v)|/n$.  Moreover, all these random variables are independent.  Failure occurs if and only if for some leaf $v$, the number of $v_p$'s that equal $v$ is different from $|Q(v)|$.  By Lemma~\ref{lem:randorder}, $\Pr[(\dag)] \ge n^{-O(n^\de)}$, since there are $O(n^\de)$ leaves in $\TT$.
It follows that
\[ 
  \Pr[\mbox{not }(\ast)\mid (\dag)]\ \le\ 
   \frac{\Pr[\mbox{not }(\ast)]}{\Pr(\dag)}\ \in\ 
   \frac{2^{-\Omega(n)}}{n^{-O(n^{\de})}}
   \ \subseteq\ 2^{-\Omega(n)}.
\]

Finally, observe that $\Pr[(\dag) \wedge \mbox{(the input equals $\sigma$)}]$ is the same for all fixed permutations $\sigma$ of $S$ (the probability is exactly $\prod_{\mbox{\rm\scriptsize leaf }v} \left(\frac{|Q(v)|}{n}\right)^{|Q(v)|} \frac{1}{|Q(v)|!}$).
In other words, conditioned to $(\dag)$, the input is a random permutation of $S$, i.e., the adversary have not acted adversarily after all!
It follows that $T\in\Omega(n\entropy(S))$ with high probability for a random permutation of $S$.
In particular,
$\Ex[T]\in\Omega(n\entropy(S))$ for a random permutation of~$S$.
\end{proof}

\begin{remark}
Applying the same ideas to the proof of Theorem~\ref{thm:maxima:lb} shows that $\OPT\avg(S)\in\Omega(n(\entropy(S)+1))$ for the maxima problem in the comparison model.
\end{remark}

\section{On the Multilinear Model}
\label{sec:multilinear}

\newcommand{\ABOVE}{\mbox{\sc above}} \newcommand{\PLANE}{\mbox{\sc
    plane}} \newcommand{\INTERSECT}{\mbox{\sc intersect}}
\newcommand{\DEF}{\mbox{\sc def}}

We remark that if nonmultilinear test functions are allowed,
then $n\entropy(S)$ may no longer a valid 
instance-optimal lower bound under our definition of $\entropy(S)$.  For example, one can design both an instance $S$ of the 2-d maxima problem with $h$ output points, 
having $\entropy(S)\in\Omega(\log h)$ (see Figure~\ref{fig:nonMultilinearAdversaryAlgorithn}), and an algorithm $A$ that requires just $O(n + h\log h)$ operations on that instance using nonmultilinear tests.
A similar example can be constructed for the 2-d convex hull problem.  

\begin{figure}
\centering
\begin{tikzpicture}
\draw[fill=cyan] (0,0) -- (0,4.1) arc (90:0:4.1) -- (0,0) ;
\draw[nonMaximaPoint] (0.1,4.05) circle (2pt);
\draw[nonMaximaPoint] (1.5,3.5) circle (2pt);
\draw[nonMaximaPoint] (3.2,2.4) circle (2pt);
\draw[nonMaximaPoint] (3.8,1.2) circle (2pt);
\draw[nonMaximaPoint] (4.05,0.1) circle (2pt);

\draw[nonMaximaPoint] (0.4,4.05) circle (2pt);
\draw[nonMaximaPoint] (1.7,3.2) circle (2pt);
\draw[nonMaximaPoint] (3.4,2.2) circle (2pt);
\draw[nonMaximaPoint] (3.7,1.4) circle (2pt);
\draw[nonMaximaPoint] (4.05,0.3) circle (2pt);
\coordinate (m1) at (1,4.5);
\coordinate (m2) at (3,4);
\coordinate (m3) at (3.6,3);
\coordinate (m4) at (4,2.1);
\coordinate (m5) at (4.5,1);
\draw[maxima] (m1) circle (2pt);
\draw[maxima] (m2) circle (2pt);
\draw[maxima] (m3) circle (2pt);
\draw[maxima] (m4) circle (2pt);
\draw[maxima] (m5) circle (2pt);
\draw[partition] (0,0) |- (m1);
\draw[partition] (0,0) |- (m2);
\draw[partition] (0,0) |- (m3);
\draw[partition] (0,0) |- (m4);
\draw[partition] (0,0) |- (m5);
\draw[partition] (0,0) -| (m1);
\draw[partition] (0,0) -| (m2);
\draw[partition] (0,0) -| (m3);
\draw[partition] (0,0) -| (m4);
\draw[partition] (0,0) -| (m5);

\end{tikzpicture}
\caption{An instance where $n\entropy(S)$ is no longer
a lower bound if nonmultilinear tests are allowed.
  A circular disk covers all nonmaximal points underneath
the staircase.
An algorithm tailored to this instance
can identify the $n-h$ points that are
inside the disk by $O(n)$ nonmultilinear tests, then
compute the staircase of the remaining $h$ points
in $O(h\log h)$ time, and verify that the disk is
underneath the staircase by $O(h)$ additional nonmultilinear tests.
}
\label{fig:nonMultilinearAdversaryAlgorithn}
\end{figure}
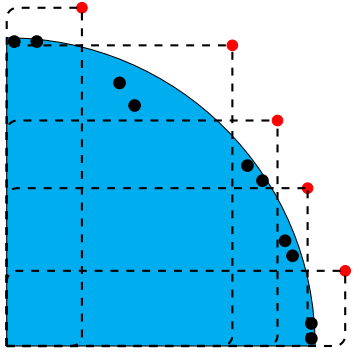

Nevertheless, many standard test functions commonly found in geometric algorithms are multilinear.  For example, in 3-d, the predicate $\ABOVE(p_1\TO p_4)$ which returns true if and only if $p_1$ is above the plane through $p_2,p_3,p_4$ can be reduced to testing the sign of a multilinear function (a determinant).

To see the versatility of multilinear tests,
consider the following extended definition:
we say that a function $f:(\R^d)^c\rightarrow\R^d$ is {\em quasi-multilinear\/} if $f(p_1\TO p_c) = (f_1(p_1\TO p_c)\TO f_d(p_1\TO p_c))$ where $f_i = h_i(p_1\TO p_c)/g(p_1\TO p_c)$ in which $f_1\TO f_d,g:(\R^d)^c\rightarrow\R$ are multilinear functions.   In 3-d, the function $\PLANE(p_1,p_2,p_3)$ which returns the dual of the plane through $p_1,p_2,p_3$ is quasi-multilinear; similarly the function $\INTERSECT(p_1,p_2, p_3)$ which returns the intersection of the dual planes of $p_1,p_2,p_3$ is quasi-multilinear. 
This can be seen by expressing the answer as a ratio of
determinants. 

More generally, we have the following rules:
\begin{itemize}
\item
if the function $f_i:(\R^3)^{c_i}\rightarrow\R^3$ is quasi-multilinear
for each $i\in\{1,2,3\}$, then
$\PLANE(f_1(p_{11}\TO p_{1c_1}),f_2(p_{21}\TO p_{2c_2}), f_3(p_{31}\TO p_{3c_3}))$ and
$\INTERSECT(f_1(p_{11}\TO p_{1c_1}),f_2(p_{21}\TO p_{2c_2}),f_3(p_{31}\TO p_{3c_3}))$  
are quasi-multilinear;
\item
if the function $f_i:(\R^3)^{c_i}\rightarrow\R^3$ is quasi-multilinear
for each $i\in\{1\TO 4\}$, then
$\ABOVE(f_1(p_{11}\TO p_{1c_1})\TO f_4(p_{41}\TO p_{4c_4}))$
can be reduced to testing the sign of a multilinear function.
\end{itemize}


By combining the above rules, more and more elaborate predicates can thus be reduced to testing the signs of
multilinear functions, such as in the following example:
\[
\ABOVE\left(\begin{array}{l}
  p_{10},\\
  p_{11},\\
  p_{12}, \\
  \INTERSECT\left(
  \begin{array}{l}
     \PLANE(p_1,p_2,p_3),\\
     \PLANE(p_4,p_5,p_6),\\
     \PLANE(p_7,p_8,p_9)
     \end{array}\right)
\end{array}\right).
\]
However, we may run into problems if a point occurs more than once
in the expression, as in the following example:
\[
\ABOVE\left(\begin{array}{l}
  p_{10},\\
  p_{11},\\
  p_{1}, \\
  \INTERSECT\left(
  \begin{array}{l}
     \PLANE(p_1,p_2,p_3),\\
     \PLANE(p_4,p_5,p_6),\\
     \PLANE(p_7,p_8,p_9)
     \end{array}\right)
\end{array}\right).
\]
Here, the expansion of the determinants may yield monomials of the wrong type.  In most 2-d algorithms, this kind of tests does not arise or can be trivially eliminated.  
Unfortunately, they can occasionally occur in some 3-d
algorithms, including our 3-d upper hull algorithm in Section~\ref{sec:ch3d}.  We now describe how to modify our
algorithm to avoid these problematic tests.

First, we consider the partition construction in Lemma~\ref{lem:part}.  We choose the more elementary
alternative based on the 8-sectioning theorem: 
there exist 3
planes that divide space into 8 regions, each with $n/8$ points
of $Q$.  By perturbing the 3 planes one by one, we can
ensure that each of the 3 planes passes through 3 input points,
and that the resulting 9 points are distinct, while
changing the number of points of $Q$ in each region by
$\pm O(1)$.
A brute-force algorithm can find 3 such planes in polynomial time.  We can reduce the construction time by using the standard tool called {\em epsilon-approximations}~\cite{MatousekHANDBOOK00}:
we compute a $\delta$-approximation of $Q$ in linear time
for a sufficiently small constant $\delta>0$, and then
apply the polynomial algorithm to the constant-sized $\delta$-approximation.  This only changes the fraction $1/8$ by
a small term $\pm O(\delta)$.  It can be checked that known 
algorithms for epsilon-approximations~\cite{MatousekHANDBOOK00} require only multilinear tests (it suffices to check the implementation of the so-called {\em subsystem oracle}, which only
requires the \ABOVE\ predicate).  We remove the 9 defining
points before recursively proceeding inside the 8 regions.
As a result, we can ensure that the facets in each convex polyhedral cell are all defined by planes that pass through 3 input points, where no two planes share a common defining point.  A vertex $v$ of a cell is an intersection of 3 such planes and is defined by a set of 9 distinct input points, denoted $\DEF(v)$.

Next, we consider the proof of Lemma~\ref{lem:group} for
answering $r$ linear programming queries.
We choose the alternative approach by \citeN{ChanSCG96}
based on a partition construction, which we have from the
previous paragraph.  This algorithm is based on
a deterministic version of the sampling-based linear
programming algorithm by \citeN{ClarksonLP}.  The algorithm
can also support up to $r$ insertions and deletions
of halfspaces intermixed with the query sequence. 
The algorithm can be implemented with simple predicates such as
$\ABOVE$.

Now, in the algorithm \texttt{hull3d}, we make one change:
in line~5, we prune only when each vertex $v$ of $\gamma_i$ lies strictly below the upper hull of $Q-\DEF(v)$ (instead of
the upper hull of $Q$).  In the dual, testing such a vertex $v$
reduces to a linear programming query after deletion of
$\DEF(v)$ from $Q$, where the coefficient
vector of the objective function is quasi-multilinear in
$\DEF(v)$.  Since $\DEF(v)$ has been deleted from $Q$, we avoid the problem
of test functions where some point appears more than once in the
expression.  It can be checked that applying the algorithm
for linear programming queries from the previous paragraph
indeed requires only multilinear tests now.

Since the pruning condition has been weakened, the analysis of
{\tt hull3d} needs to be changed.
Recall that the partition in line~3 is constructed by
recursive application of the 8-sectioning theorem.
At half the depth of recursion, we obtain an intermediate partition
of $Q$ with $O(\sqrt{r_j})$ subsets $Q_\ell'$ and corresponding cells $\gamma_\ell'$
where each subset has $O(n/\sqrt{r_j})$ points, and
every plane intersects at most $O(\sqrt{r_j}^{1-\eps})$ of
these cells $\gamma_\ell'$, for $\eps = 1-\log_8 7$.
Furthermore, for a fixed $\gamma_\ell'$,
every plane intersects at most
$O(\sqrt{r_j}^{1-\eps})$ of the cells $\gamma_i$ of
the final partition inside $\gamma_\ell'$. 

In the second paragraph of the proof of Theorem~\ref{thm:ch3d}, we do the analysis differently.  Consider a cell $\gamma_i$ of the partition,
which is contained in a cell $\gamma_\ell'$ of the intermediate
partition.  We claim that if (i)~$\gamma_\ell'$ is
strictly contained in $\D_k$ and (ii)~$\gamma_i$ is
strictly contained in $\gamma_\ell'$, then all points inside $\gamma_i$ are pruned.
To see this, notice that by (i), all points in $Q_\ell'$
are strictly below the upper hull of $Q$, and by (ii),
the defining points $\DEF(v)$ of any vertex $v$ of $\gamma_i$
are in $Q_\ell'$.  Thus, the points in $\DEF(v)$ are strictly
below the upper hull of $Q$, i.e., the upper hull of $Q-\DEF(v)$
is the same as the upper hull of $Q$.  As each vertex $v$ of $\gamma_i$
is strictly below the upper hull of $Q-\DEF(v)$, all points
inside $\gamma_i$ are indeed pruned.

At most $O(\sqrt{r_j}^{1-\eps})$ cells $\gamma_\ell'$ can intersect the boundary of $\D_k$.  For each of the $O(\sqrt{r_j})$ cells $\gamma_\ell'$
strictly contained in $\D_k$, 
at most $O(\sqrt{r_j}^{1-\eps}\log r_j)$ cells $\gamma_i$ inside
$\gamma_\ell'$ can intersect the $O(\log r_j)$ boundary facets of $\gamma_\ell'$.  Hence, the number of points in $S_k$ that remain in $Q$ after iteration $j$ is at most $\min\left\{|S_k|, O(\sqrt{r_j}^{1-\eps}\cdot n/\sqrt{r_j} + \sqrt{r_j}\cdot(\sqrt{r_j}^{1-\eps}\log r_j)\cdot n/r_j)\right\} = \min\left\{|S_k|, O((n/r_j^{\eps/2})\log r_j)\right\}$.  The rest of the proof is then the same, after readjusting $\eps$ by about a half.

\section{Other Applications}
\label{sec:other-applications}

We can apply our techniques to obtain instance-optimal algorithms for a number of geometric problems in the order-oblivious and random-order setting:

\begin{enumerate}
\item \emph{Off-line halfspace range reporting in 2-d and 3-d}: given a set $S$ of $n$ points and halfspaces, report the subset of points inside each halfspace.  Algorithms with $\Theta(n\log n+K)$ running time \cite{ChazelleGuibasLeeBIT85,ChanSICOMP00,AfshaniChanSODA09,ChaTsa} are known for total output size $K$.
\item \emph{Off-line dominance reporting in 2-d and 3-d}: given a set $S$ of $n$ red/blue points, report the subset of red points dominated by each blue point.  The problem has similar complexity as (1).
\item \emph{Orthogonal segment intersection in 2-d}: given a set $S$ of $n$ horizontal/vertical line segments, report all intersections between the horizontal and vertical segments, or count the number of such intersections.  The problem is known to have worst-case complexity $\Theta(n\log n + K)$ in the reporting version for output size $K$, and complexity $\Theta(n\log n)$ in the counting version~\cite{deBergvanKreveldBOOK,PreparataShamosBOOK}.
\item \emph{Bichromatic $L_\infty$-close pairs in 2-d}: given a set $S$ of $n$ red/blue points in 2-d, report all pairs $(p,q)$ where $p$ is red, $q$ is blue, and $p$ and $q$ have $L_\infty$-distance at most 1, or count the number of such pairs.  Standard techniques in computational geometry \cite{deBergvanKreveldBOOK,PreparataShamosBOOK} yield algorithms with the same complexity as in (3).
\item \emph{Off-line orthogonal range searching in 2-d}: given a set $S$ of $n$ points and axis-aligned rectangles, report the subset of points inside each rectangle, or count the number of such points inside each rectangle.  The worst-case complexity is the same as in (3).
\item \emph{Off-line point location in 2-d}: given a set $S$ of $n$ points and a planar connected polygonal subdivision of size $O(n)$, report the face in the subdivision containing each point.  Standard data structures \cite{deBergvanKreveldBOOK,PreparataShamosBOOK,SnoeyinkSURVEY} imply a worst-case running time of $\Theta(n\log n)$.
\end{enumerate}

For each of the above problems, it is not difficult to see that certain input sets are indeed ``easier'' than others, for example, if the horizontal segments and the vertical segments respectively lie inside two bounding boxes that are disjoint, then the orthogonal segment intersection problem can be solved in $O(n)$ time.

Note that although some of the above problems may be reducible to others in terms of worst-case complexity, the reductions may not make sense in the instance-optimality setting.  For example, an instance-optimal algorithm for a problem does not imply an instance-optimal algorithm for a restriction of the problem in a subdomain, because in the latter case, we are competing against algorithms that have to be correct only for input from this subdomain.


\subsection{A general framework for reporting problems}
\label{sec:app:rep}

\newcommand{\RR}{{\cal R}}

We describe our techniques for off-line reporting problems in a general framework.
Let $\RR\subset \R^d\times\R^{d'}$ be a relation for some constant dimensions $d$ and $d'$.  We say that a red point $p\in\R^d$ and a blue point $q\in\R^{d'}$ {\em interact\/} if $(p,q)\in\RR$.
We consider the {\em reporting\/} problem: given a set $S$ containing red points in $\R^d$ and blue points in $\R^{d'}$ of total size~$n$, report all $K$ interacting red/blue pairs of points in $S$.
(By scanning the output pairs, we can then collect the subset of all blue points that interact with each red point, in $O(K)$ additional time.)

We redefine $\entropy(S)$ as follows:
\begin{definition} \label{def:appl:rep} 
Given a cell $\gamma$ colored red (resp.\ blue),
we say that $\gamma$ is {\em safe for $S$\/} if every red (resp.\ blue) point in $\gamma$ interacts with exactly the same subset of blue (resp.\ red) points in $S$.  
We say that a partition $\Pi$ of $S$ is {\em{respectful}} if each subset $S_k$ in $\Pi$ is a singleton, or a subset of red points enclosed by a safe red simplex $\D_k$ for
$S$, or a subset of blue points enclosed by a safe blue simplex
$\D_k$ for $S$.
Define the \emph{structural entropy} $\entropy(S)$ of $S$ to be the minimum of $\entropy(\Pi) = \sum_k (|S_k|/n) \log(n/|S_k|)$ over all respectful partitions $\Pi$ of $S$.
\end{definition}

\begin{theorem}
$\OPT(S),\OPT\avg(S) \in\Omega(n(\entropy(S)+1)+K)$ for
the reporting problem in the multilinear decision tree model.
\end{theorem}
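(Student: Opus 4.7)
The $\Omega(n+K)$ component is immediate: any correct algorithm must read every input point and write out each of the $K$ reported pairs. The work is in proving $T\in\Omega(n\entropy(S))$, and then lifting it to the random-order setting. The plan is to replay the adversary proof of Theorem~\ref{thm:ch:lb} essentially verbatim, with a single structural change: the partition-tree stopping criterion ``$\Gamma(v)$ is strictly below the upper hull of $S$'' is replaced by ``$\Gamma(v)$ is safe for $S$'' in the sense of Definition~\ref{def:appl:rep}. Since red and blue points live in possibly different spaces $\R^d$ and $\R^{d'}$, I would build two partition trees $\TT^{\mathrm{red}}$ and $\TT^{\mathrm{blue}}$ by recursive application of Lemma~\ref{lem:part} in the respective spaces, let $\PiPart$ be the partition formed by their leaf subsets, and let $\PiPartt$ be the usual triangulate-and-refine variant, which is respectful under Definition~\ref{def:appl:rep} and satisfies $\entropy(\PiPartt)\in\Theta(\entropy(\PiPart))$ by the same concavity bookkeeping used in Theorem~\ref{thm:ch:lb}.

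Next, the adversary plays exactly the same game: maintain a node $v_p$ per input point $p$, and on each multilinear test $f(p_1\TO p_c)>0$ pick a good child-tuple using Lemmas~\ref{lem:part} and~\ref{lem:zero}. These lemmas apply argument-by-argument, so it is harmless that the arguments of $f$ may freely mix points of $\R^d$ and $\R^{d'}$; the bad-tuple count remains $o(b^c)$, and the same amortization yields $T\in\Omega(D)$ with $D=\sum_p \mathrm{depth}(v_p)$. The one step that genuinely has to be rethought---and the main obstacle I anticipate---is the correctness argument. If some $v_p$ were not a leaf at termination, then $\Gamma(v_p)$ is not safe for $S$, so $S\cap\Gamma(v_p)$ contains two same-colored points $p^\star,p^\dagger$ whose interaction sets with the opposite color in $S$ differ; after the standard postprocessing that forces every remaining $v_q$ to descend into a leaf, one can still reassign the input point $p$ to either $p^\star$ or $p^\dagger$, producing two inputs that are both consistent with every answer the adversary gave yet have different correct outputs, contradicting correctness. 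This is the precise point at which the \emph{safety} condition of Definition~\ref{def:appl:rep} earns its keep, and replaces the ``hull-perturbation'' step of Theorem~\ref{thm:ch:lb}. Summing $|Q(v)|\log(n/|Q(v)|)$ over the leaves then gives $T\in\Omega(n\entropy(\PiPartt))\subseteq\Omega(n\entropy(S))$.

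For the random-order bound on $\OPT\avg(S)$, I would lift this to the randomized setting exactly as in Theorem~\ref{thm:randorder}: truncate both trees at depth $\down{\delta\log n}$, let the adversary reset each $v_{p_i}$ to a random child with probability proportional to $|Q(v_{p_i}')|/|Q(v_{p_i})|$ and retry on bad tuples (which fail with probability below $1/2$ for sufficiently large constant $b$), and use Lemma~\ref{lem:randorder} together with a Chernoff bound on the geometric-variable sum for $D$ to show that, conditioned on the event ``no failure has occurred,'' the induced input distribution is uniform over permutations of $S$---so the adversary has been playing an unbiased game. The same derivation as in Theorem~\ref{thm:randorder} then delivers $\Ex[T]\in\Omega(n\entropy(S))$ for a uniformly random input order, yielding the claimed $\OPT\avg(S)\in\Omega(n(\entropy(S)+1)+K)$ and, a fortiori, the same lower bound on $\OPT(S)$.
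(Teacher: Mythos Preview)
Your proposal is correct and matches the paper's approach almost exactly: two partition trees (one for red points, one for blue), with the safety condition of Definition~\ref{def:appl:rep} replacing the upper-hull leaf criterion, and the adversary and random-order arguments of Theorems~\ref{thm:ch:lb} and~\ref{thm:randorder} rerun unchanged. One minor imprecision in your correctness step: ``$\Gamma(v_p)$ is not safe'' only guarantees two \emph{locations} in $\Gamma(v_p)$ with distinct interaction sets, not necessarily two points of $S\cap\Gamma(v_p)$; the paper obtains the contradiction by moving $p$ to such a location (producing a consistent input that need not be a permutation of $S$), which suffices since the algorithm must be correct on all inputs, not just permutations of $S$.
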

\begin{proof}
This follows from a straightforward modification
of the proofs of Theorems~\ref{thm:ch:lb} 
and~\ref{thm:randorder}. 
The main difference is that
we now keep two partition trees, one for the red (resp.\ blue)
points in $S$, with cells
colored red (resp.\ blue).  If a cell
$\Gamma(v)$ is safe for $S$, or if the number of red
(resp.\ blue) points in $\Gamma(v)$ drops below a constant, then
we make $v$ a leaf in the red (resp.\ blue) partition tree.
At the end, we argue that $v_p$ must be a leaf for every red (resp.\ blue) point
$p\in S$.  Otherwise, the red (resp.\ blue) cell $\Gamma(v_p)$ contains at least two red (resp.\ blue) points and is not safe, so we can move $p$ to another point inside $\Gamma(v_p)$ and change the answer.  The algorithm would be incorrect
on the modified input.
(The $\Omega(K)$ term in the lower bound is obvious.)
\end{proof}

For the upper-bound side, we assume the availability of three oracles 
concerning $\RR$, where $\alpha$ is some positive constant:
\begin{enumerate}
\item[(A)] A worst-case algorithm for the reporting problem that runs in $O(n\log n + K)$ time.
\item[(B)] A data structure with $O(n\log n)$ preprocessing time, such that we can report all $\kappa$ blue (resp.\ red) points in $S$ interacting with a query red (resp.\ blue) point in $O(n^{1-\alpha}+\kappa)$ time.
\item[(C)] A data structure with $O(n\log n)$ preprocessing time, such that we can test whether a query red or blue convex polyhedral cell $\gamma$ of size $a$
 is safe for $S$ in $O(an^{1-\alpha})$ time. 
\end{enumerate}

Note that we can reduce to preprocessing time in (B) 
to $O((n/m)\cdot m\log m)=O(n\log m)$ while increasing the query time to $O((n/m)\cdot m^{1-\alpha} + \kappa)$ for
any given $1\le m\le n$.
This follows from the grouping trick by \citeN{ChanDCG96b}:
namely, divide $S$ into $\lceil n/m\rceil$ subsets of
size $O(m)$ and build a data structure for each subset.
By setting $m=r^{1/\alpha}$,
we can then answer $r$ queries in total time
$O(n\log m + r \cdot (n/m)\cdot m^{1-\alpha} + \kappa)
\subseteq O(n\log r + r^{O(1)} + \kappa)$
for total output size $\kappa$.
Similarly, in (C), we can answer $r$ queries in total time 
$O(n\log r + ar^{O(1)})$.
The grouping trick is applicable because
the query problems in (B) and (C) are 
{\em decomposable\/}, i.e.,
the answer of a query for a union of subsets can be obtained
from the answers of the queries for the subsets.

We now solve the reporting problem by a variant 
of the \texttt{hull3d} algorithm in Section~\ref{sec:ch3d}:

\begin{algm}
\< \texttt{report}($Q$):\\
1.\' for $j=0,1\TO \lfloor\log(\de \log n)\rfloor$ do\\
2.\'\> partition the red points in $Q$ by Lemma~\ref{lem:part} to get $r_j=2^{2^j}$
	 subsets   
	$Q_1\TO Q_{r_j}$ \\\> and red cells $\gamma_1\TO\gamma_{r_j}$\\
3.\'\> for each $i=1$ to $r_j$ do\\
4.\'\>\> if $\gamma_i$ is safe for $Q$ then\\
5.\'\>\>\> let $Z_i$ be the subset of blue points in $Q$ that interact with\\\>\>\> an arbitrary red point in $Q_i$\\
6.\'\>\>\> output $Q_i\times Z_i$\\
7.\'\>\>\> prune all red points in $Q_i$ from $Q$\\
8.\'\> redo lines 2--7 with ``red'' and ``blue'' reversed\\
9.\' solve the reporting problem for the remaining set $Q$ directly
\end{algm}

The test in line~4 for each convex polyhedral cell $\gamma_i$
of size at most $O(\log r_j)$ can be done by 
querying the data structure in (C),
and line~6 can be done by querying the
data structure in (B); the cost of
$O(r_j)$ queries is $O(|Q|\log r_j + r_j^{O(1)})$ plus the
output size.
Line~9 can be done by the algorithm in (A).

\begin{theorem}
Given oracles (A), (B), and (C),
algorithm \mbox{\tt report}$(S)$ runs in $O(n(\entropy(S)+1)+K)$ time.
\end{theorem}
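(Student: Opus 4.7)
The plan is to mirror the proof of Theorem~\ref{thm:ch3d} closely, with the ``strictly below the upper hull'' condition replaced by the ``safe for $Q$'' condition from Definition~\ref{def:appl:rep}. First I would bound the cost of one iteration~$j$. Line~2 constructs a partition in $O(|Q|\log r_j + r_j^{O(1)})$ time. The $r_j$ safety tests in line~4 are dispatched through oracle~(C), and the $r_j$ reporting queries in line~5 through oracle~(B); applying the grouping trick described in the text just after Definition~\ref{def:appl:rep}, the $r_j$ queries cost a total of $O(|Q|\log r_j + r_j^{O(1)})$ plus $O(\kappa_j)$, where $\kappa_j$ is the output size produced during iteration~$j$ (the $a$ factor in (C) is only $O(\log r_j)$, which is absorbed). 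Line~8 is symmetric. Since $r_j \le n^{\delta}$ for a sufficiently small constant $\delta$, the $r_j^{O(1)}$ terms sum to $o(n)$. Line~9 uses oracle~(A) at cost $O(n_{\mathrm{last}}\log n_{\mathrm{last}} + K')$, which is absorbed into the final $n_{\mathrm{last}}\log r_{\mathrm{last}+1}$ term, and the output sizes $\sum_j \kappa_j + K'$ total $O(K)$.

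The remaining task is to bound $\sum_j n_j \log r_{j+1}$ by $O(n(\entropy(S)+1))$, where $n_j = |Q|$ after iteration~$j$. Fix any respectful partition $\Pi$ of $S$, and consider a subset $S_k$ enclosed by a safe (say red) simplex $\D_k$. The key observation is that if a red cell $\gamma_i$ at iteration~$j$ lies strictly inside $\D_k$, then $\gamma_i$ is automatically safe with respect to the current $Q$: by safety of $\D_k$ for $S$, every red point in $\D_k$ interacts with the same set $B^\star$ of blue points in $S$, and hence with the same set $B^\star \cap Q$ of blue points in $Q$, since $Q\subseteq S$. Therefore the pruning in line~7 fires. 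By Lemma~\ref{lem:part}, at most $O(r_j^{1-\eps})$ red cells intersect the boundary of $\D_k$, each holding $O(n/r_j)$ points, so the number of points of $S_k$ that survive iteration~$j$ is at most $\min\{|S_k|,\, O(n/r_j^{\eps})\}$. A symmetric statement holds when $\D_k$ is blue, with line~8 playing the role of lines~2--7.

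Summing exactly as in the proof of Theorem~\ref{thm:ch3d},
\[
\sum_j n_j \log r_{j+1} \;\le\; \sum_k \sum_j \min\!\left\{|S_k|,\, O\!\left(\frac{n}{2^{\eps 2^j}}\right)\right\}\cdot 2^{j+1} \;\in\; O(n(\entropy(\Pi)+1)),
\]
and minimizing over $\Pi$ yields $O(n(\entropy(S)+1))$; adding the $O(K)$ output contribution and the $o(n)$ overhead gives the theorem. The main point I expect to be careful about is the safety-propagation claim in the previous paragraph: one might worry that pruning blue points in earlier iterations could invalidate the safety of a red cell contained in some safe red $\D_k$. However, safety is defined as \emph{uniform} interaction across the points of a cell, and intersecting the common interacting set with any subset of the opposite color preserves uniformity, so safety with respect to $S$ transfers cleanly to safety with respect to every $Q \subseteq S$ encountered during the algorithm.
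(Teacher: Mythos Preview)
Your proposal is correct and follows exactly the approach the paper intends: the paper's own proof is simply ``The analysis is as in the proof of Theorem~\ref{thm:ch3d},'' and you have faithfully carried that out, including the cost accounting for the oracles via the grouping trick and the double-summation bound on $\sum_j n_j\log r_{j+1}$. Your explicit verification that safety of $\D_k$ for $S$ implies safety of any $\gamma_i\subseteq\D_k$ for the current $Q\subseteq S$ is a detail the paper leaves implicit, and your argument for it (uniform interaction is preserved under intersection with a subset of the opposite color) is the right one.
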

\begin{proof}
The analysis is as in the proof of Theorem~\ref{thm:ch3d}.
\end{proof}

The partition construction in line~2 can be done in the multilinear model, as described in Section~\ref{sec:multilinear}.  Whether
the rest of the algorithm works in the multilinear model
depends on the implementation of the oracles.

For orthogonal-type problems dealing with axis-aligned objects,
such as problems (2)--(5) in our list,
we can work instead in the comparison model.
We just replace simplices with axis-aligned boxes
in the definition of $\entropy(S)$, replace convex polyhedral cells
with axis-aligned boxes in oracle (C), and replace
partition trees with $k$-d trees in both the lower-bound
proof and the algorithm.

We can immediately apply our framework to the reporting versions 
of problems (1)--(5), after checking the oracle
requirements for (B) and (C) in each case:

\begin{enumerate}
\item \emph{Off-line halfspace range reporting in 2-d and 3-d}: 
For the design of the needed data structures, it suffices to consider just the lower halfspaces in the input.
Color the given points red, and map the given lower halfspaces to blue points by duality.  The data structure problem in (B) is just halfspace range reporting.  The data structure problem in (C) is equivalent to testing whether any of the $O(a)$ edges of
a query convex polyhedral cell intersects a given set of $n$ hyperplanes (lines in 2-d or planes in 3-d).  
This reduces to simplex range searching \cite{AgarwalEricksonSURVEY,MatousekDCG92} by duality;
known results achieve $O(n\log n)$ preprocessing time
and close to $O(an^{1-1/d})$ query time. 
It can be checked that the entire algorithm
is implementable in the multilinear model,
at least using a simpler randomized algorithm for (A)~\cite{ChanSICOMP00}. 

\item \emph{Off-line dominance reporting in 2-d and 3-d}: The data structure problem in (B) is just dominance reporting.  The data structure problem in (C) is equivalent to testing whether all the corners of a query box are dominated by the same number of points from a given $n$-point set.  This reduces to orthogonal range counting
\cite{AgarwalEricksonSURVEY,deBergvanKreveldBOOK,PreparataShamosBOOK};
although better data structures are known,
$k$-d trees are sufficient for our purposes,
with $O(n\log n)$ preprocessing time and $O(n^{1-1/d})$ query time.
The entire algorithm works in the comparison model.

\item \emph{Orthogonal segment intersection in 2-d}: Map each each horizontal line segment $\overline{(x,y)(x',y)}$ to a red point $(x,x',y)\in\R^3$ and each vertical line segment $\overline{(\xi,\eta)(\xi,\eta')}$ to a blue point $(\xi,\eta,\eta')\in\R^3$.  Each point in $\R^3$ is the image of a horizontal/vertical line segment.  The data structure problem in (B) for red queries corresponds to reporting all points from a given $n$-point set that lie
in a query range of the form
$\{(\xi,\eta,\eta')\in\R^3: ((x\le\xi\le x')\vee (x'\le\xi\le x)) 
\wedge ((\eta\le y\le \eta')\vee (\eta'\le y\le\eta))\}$ for some $x,x',y$.  
This reduces to 3-d orthogonal range reporting.
The data structure problem in (C) for red queries
corresponds to testing
whether a query box in $\R^3$ intersects any of the boundaries
of $n$ given ranges, where each range is of the form
$\{(x,x',y)\in\R^3: ((x\le\xi\le x')\vee (x'\le\xi\le x)) 
\wedge ((\eta\le y\le \eta')\vee (\eta'\le y\le\eta))\}$
for some $\xi,\eta,\eta'$.
This is an instance of 3-d orthogonal intersection searching
\cite{AgarwalEricksonSURVEY}, which reduces to orthogonal range searching in a higher dimension.
Again $k$-d trees are sufficient for our purposes. 
Blue queries are symmetric.
The entire algorithm works in the comparison model.

\item \emph{Bichromatic $L_\infty$-close pairs in 2-d}: The problem in (B) corresponds to reporting all points of a given point set that lie inside a query square of side length~2.  
This is an instance of orthogonal range reporting.
The problem in (C) corresponds to testing whether a query box
intersects any of the edges of $n$ given squares of side length~2.
This is an instance of orthogonal intersection searching.
Note that here the resulting algorithm requires a slight
extension of the comparison model, to include tests
of the form $x_i\le x_j'+a$ mentioned in Remark~\ref{rmk:maxima:lb}, which are allowed in the lower-bound proof.

\item \emph{Off-line orthogonal range reporting in 2-d}: Color the given points red, and map each rectangle 
$[\xi,\xi']\times [\eta,\eta']$ to a blue point $(\xi,\xi',\eta,\eta')\in\R^4$.  Every point
in $\R^4$ is the image of a rectangle.  
The problem in (B) for red queries corresponds to 2-d rectangle stabbing,
i.e., reporting all rectangles, from a given set of $n$ rectangles, that contain a query point.
The problem in (B) for blue queries corresponds to 2-d orthogonal range reporting.
The problem in (C) for red queries corresponds to deciding
whether a query box in $\R^2$ intersects any of the edges
of $n$ given rectangles.
The problem in (C) for blue queries corresponds to deciding
whether a query box in $\R^4$ intersects any of the boundaries
of $n$ given ranges, where each range is of the form
$\{(\xi,\xi',\eta,\eta')\in\R^4: 
((\xi\le x\le\xi')\vee (\xi'\le x\le\xi)) \wedge
((\eta\le y\le\eta')\vee (\eta'\le y\le\eta))\}$ for some $x,y$.
  All these data structure problems reduce to orthogonal range or intersection searching.  Again the algorithm works
in the comparison model.
\end{enumerate}

\subsection{Counting problems}

Our framework can also be applied to \emph{counting problems}, where we simply want the total number of interacting red/blue pairs.  
We just change oracle (A) to a counting algorithm without the $O(K)$
term, and oracle (B) to counting data structures without the $O(\kappa)$ term.  In line~5 of the algorithm we compute
$|Z|$, and in line~6 we add $|Q_i|\times |Z|$ to a global counter.
The same lower- and upper-bound proofs yield an $\Theta(n(\entropy(S)+1))$ bound.
The new oracle requirements are satisfied for 
(3)~orthogonal segment intersection counting, 
(4)~bichromatic $L_\infty$-close pairs, and 
(5)~off-line orthogonal range counting.

We can also modify the algorithm to return \emph{individual
counts}, i.e., compute the number of red points that interact with each blue point and the number of blue points that interact
with each red point.  Here, we need to not only strengthen oracle (A) to produce individual counts, but also modify 
oracle (B) to the following:
\begin{enumerate}
\item[(B)]  A data structure with $O(n\log n)$ preprocessing time,
forming a collection of canonical subsets of total size $O(n\log n)$, such that we can express the subset of all blue (resp.\ red) points in $S$ interacting with a query red  (resp.\ blue) point, as a union of
$O(n^{1-\alpha})$ canonical subsets, in 
$O(n^{1-\alpha})$ time.
\end{enumerate}
As before, the grouping trick can be used to reduce the preprocessing time and total size of the canonical subsets.
In line~4 of the algorithm
we express $Z$ as a union of canonical subsets.
In line~5 we add $|Z|$ to the counter of each
red point in $Q_i$ and add $|Q_i|$ to the counter of
each canonical subset for $Z$.
At the end of the loop in
lines 3--7, we make a pass over each canonical subset
and add its counter value to the counters of its blue points,
before resetting the counter of the canonical subset.
Line~8 is similar.
The analysis of the running time remains the same.
The strengthened oracle requirements are satisfied
for problems (3), (4), and (5) by known orthogonal
range searching results.

\subsection{Detection problems?}

We can also consider \emph{detection problems} where we simply want to decide whether there exists an interacting red/blue pair.  Here, we redefine $\entropy(S)$ by redefining ``safe'': a red (resp.\ blue) cell $\gamma$ is now considered {\em safe for $S$\/} if no red (resp.\ blue) point in $\gamma$ interacts with any blue (resp.\ red) points in $S$.  We change oracles (A) and (B) to 
analogous detection algorithms and data structures, without the $O(K)$ and $O(\kappa)$ terms.

The proof of the upper bound $O(n(\entropy(S)+1))$ is the same, but unfortunately the proof of the lower bound $\Omega(n(\entropy(S)+1))$ only works
for instances with a {\sc no} answer: at the end, if $v_p$ is not a leaf for some red (resp.\ blue) point $p\in S$, then $\Gamma(v_p)$
contains at least two red (resp.\ blue) points and is not safe, so we can move $p$ to some point inside $\Gamma(v_p)$ and change the answer from {\sc no} to {\sc yes}.

{\sc yes} instances are problematic, but this is not a weakness of our technique but of the model: on every input set $S$ with a {\sc yes} answer, $\OPT(S)$ is in fact $O(n)$.  To see this, consider an input set $S$ for which there exists an interacting pair $(p,q)$.  An algorithm that is ``hardwired'' with the ranks of $p$ and $q$ in $S$ with respect to, say, the $x$-sorted order of $S$ can first find $p$ and $q$ from their ranks by linear-time selection, verify that $p$ and $q$ interact in constant time, and return {\sc yes} if true or run a brute-force algorithm otherwise.  Then on every permutation of this particular set $S$, the algorithm always takes linear time.  Many problems admit $\Omega(n\log n)$ worst-case lower bounds even when restricted to {\sc yes} instances, and for such problems, instance optimality in the order-oblivious setting is therefore not possible on all instances.

\subsection{Another general framework for off-line querying problems}

\newcommand{\MM}{{\cal M}} 

We now study problems from another general framework.  Let $\MM$ be a mapping from points in $\R^d$ to ``answers'' in some space
for some constant $d$ (the answer $\MM(q)$ of a point $q\in\R^d$ may or may not have constant size depending on the context).  We consider the following {\em off-line querying\/} problem: given a set $S$ of $n$ points in $\R^d$, compute $\MM(q)$ for every $q\in S$.  Let $K$ denote the total size of the answers.

We redefine $\entropy(S)$ by redefining ``safe'': 

\begin{definition} \label{def:offline:query} 
We say that a cell $\gamma$ is {\em safe\/} if every point $q$ in $\gamma$ has the same answer $\MM(q)$.
We say that a partition $\Pi$ of $S$ is {\em{respectful}} if each subset $S_k$ in $\Pi$ is a singleton, or a subset of points enclosed by a safe simplex $\D_k$.
Define the \emph{structural entropy} $\entropy(S)$ of $S$ to be the minimum of $\entropy(\Pi) = \sum_k (|S_k|/n) \log(n/|S_k|)$ over all respectful partitions $\Pi$ of $S$.
\end{definition}

\begin{theorem}
$\OPT(S),\OPT\avg(S) \in\Omega(n(\entropy(S)+1)+K)$ for
the off-line querying problem in the multilinear decision tree model.
\end{theorem}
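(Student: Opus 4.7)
The plan is to adapt the adversary argument used for Theorem~\ref{thm:ch:lb} (and its random-order counterpart Theorem~\ref{thm:randorder}), with the notion of ``below the upper hull'' replaced by ``safe.'' We build a partition tree $\TT$ exactly as in the proof of Theorem~\ref{thm:ch:lb}: each node $v$ stores a pair $(Q(v),\Gamma(v))$, where $Q(v)\subseteq S$ and $\Gamma(v)$ is a convex polyhedral cell containing $Q(v)$, the root stores $(S,\R^d)$, and we split via Lemma~\ref{lem:part} with branching factor $b$. The only change is the stopping rule: we make $v$ a leaf as soon as $\Gamma(v)$ is safe, or $|Q(v)|$ drops below a constant. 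Let $\PiPart$ be the partition of $S$ formed by the leaves, and $\PiPartt$ its refinement obtained by triangulating leaf cells as in Theorem~\ref{thm:ch:lb}; the safety condition makes $\PiPartt$ respectful, and $\entropy(\PiPartt)\in\Theta(\entropy(\PiPart))$.

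Next, I would run the identical adversary against any algorithm $A$ in the multilinear decision tree model: for each point $p$ keep a node $v_p$, and upon each test $f(p_1\TO p_c)>0$ pick a ``good'' $c$-tuple of children of $v_{p_1}\TO v_{p_c}$ (its existence follows from Lemmas~\ref{lem:part} and~\ref{lem:zero} once $b$ is large enough), with the same fullness invariant and the same two-step (ordinary/exceptional) reset. The amortization argument from Theorem~\ref{thm:maxima:lb} again yields $T\in\Omega(D)$, where $D$ is the total depth $\sum_p \mathrm{depth}(v_p)$ at termination.

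The only genuinely new ingredient is the correctness argument that forces every $v_p$ to be a leaf. Suppose, to the contrary, that some $v_p$ ends the simulation at an internal node; after postprocessing all other $v_q$'s to leaves arbitrarily consistent with the invariant, we obtain a permutation of $S$ on which $A$ produced its output. Since $v_p$ is internal, $\Gamma(v_p)$ is not safe, so it contains two points of $\R^d$ with distinct answers under $\MM$; in particular we may move $p$ to a point inside $\Gamma(v_p)$ with an answer different from $\MM(p)$ without invalidating any test that $A$ has performed. Then $A$ would output the same (now incorrect) answer on the modified input, a contradiction. Hence all $v_p$ are leaves, and summing depths in the usual way gives
\[
T\ \in\ \Omega(D)\ \subseteq\ \Omega\!\left(\sum_{\text{leaf }v}|Q(v)|\log(n/|Q(v)|)\right)\ \subseteq\ \Omega(n\entropy(\PiPartt))\ \subseteq\ \Omega(n\entropy(S)).
\]
Combining with the trivial $\Omega(n)$ bound and the trivial $\Omega(K)$ bound (every correct algorithm must write out all the answers), we obtain the claimed order-oblivious lower bound.

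For the random-order bound $\OPT\avg(S)$, I would replay the proof of Theorem~\ref{thm:randorder} verbatim, truncating $\TT$ at depth $\lfloor\delta\log n\rfloor$, having the adversary descend to a random child (weighted by $|Q(v')|/|Q(v)|$) at each test and retrying until the $c$-tuple is good, and invoking Lemma~\ref{lem:randorder} to conclude that conditioning on the non-failure event $(\dag)$ leaves the realized input uniformly distributed over permutations of $S$. The Chernoff tail bound on the geometric retries and the same event-conditioning calculation then give $\Ex[T]\in\Omega(n\entropy(S))$ over a random permutation, hence $\OPT\avg(S)\in\Omega(n(\entropy(S)+1)+K)$. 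The main obstacle (and the only place where care is needed) is the ``move $p$ within $\Gamma(v_p)$'' step above: one must verify that whenever $\Gamma(v_p)$ is not safe, there is indeed a target point inside $\Gamma(v_p)$ realizing a different $\MM$-value, which is exactly the contrapositive of the definition of ``safe'' and so follows immediately.
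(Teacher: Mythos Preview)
Your proposal is correct and follows essentially the same approach as the paper: the paper's proof is a one-paragraph remark that the argument of Theorems~\ref{thm:ch:lb} and~\ref{thm:randorder} goes through verbatim once the leaf condition ``$\Gamma(v)$ is below the upper hull'' is replaced by ``$\Gamma(v)$ is safe,'' with the endgame observation that if $v_p$ is internal then $\Gamma(v_p)$ is not safe and $p$ can be moved to change the answer. You have spelled out exactly these modifications in more detail, including the random-order variant and the trivial $\Omega(K)$ term, so there is nothing to add.
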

\begin{proof}
This follows from a straightforward modification
of the proofs of Theorems~\ref{thm:ch:lb} 
and~\ref{thm:randorder}. 
As before, if a cell $\Gamma(v)$ is safe, then we make $v$ a leaf.
At the end, we argue that $v_p$ must be a leaf for every
$p\in S$.  Otherwise, $\Gamma(v_p)$ is not safe, so we can move $p$ to another point inside $\Gamma(v_p)$ and change the answer.  The algorithm would be incorrect on the modified input.
\end{proof}

The above lower bound holds even if we ignore the cost
of preprocessing $\MM$.  Furthermore, the test functions
are only required to be multilinear with respect to $S$,
not~$\MM$. 

For the upper-bound side, we assume that $\MM$ has been preprocessed in an oracle data structure supporting the
following types of queries:
\begin{enumerate}
\item[(A)] Given $q\in\R^d$, we can compute $\MM(q)$ in $O(\log m+\kappa)$ worst-case time for output size~$\kappa$,
where $m$ is a
parameter describing the size of $\MM$.
\item[(C)] Given a convex polyhedral cell $\gamma$ of size $a$, 
we can test whether $\gamma$ is safe in $O(am^{1-\alpha})$ time.
\end{enumerate}

The algorithm is simpler this time.
Instead of using a $2^{2^j}$ progression, we can use a more straightforward $b$-way recursion, for a sufficiently large
constant $b$ (the resulting recursion tree mimics the tree $\TT$ from the lower-bound proof in Theorem~\ref{thm:ch:lb}, on purpose):

\begin{algm}
\< {\tt off-line-queries}$(Q,\Gamma)$, where $Q\subset\Gamma$:\\
1.\' if $|Q|$ drops below $n/m^\de$ then compute
the answers directly and return\\
2.\' partition $Q$ by Lemma~\ref{lem:part} to get $b$
       subsets $Q_1\TO Q_b$  and cells $\gamma_1\TO\gamma_b$\\
3.\' for $i=1$ to $b$ do\\
4.\'\> if $\gamma_i\cap\Gamma$ is safe then\\
5.\'\>\> compute $\MM(q)$ for an arbitrary point  $q\in\gamma_i\cap\Gamma$\\
6.\'\>\> output $\MM(q)$ as the answer for all points in $Q_i$\\
7.\'\> else {\tt off-line-queries}$(Q_i,\gamma_i\cap\Gamma)$
\end{algm}

We call {\tt off-line-queries}$(S,\R^d)$ to start.  Line~1 takes $O(|Q|\log m+\kappa)$ time for output size $\kappa$ by querying
the data structure for (A); note that each point in $Q$ in this case has participated in $\Omega(\log m)$ levels of the recursion, and we can account for the first term by charging each point unit cost for every level it participates in.  Line~2 takes $O(|Q|)$ time for a constant $b$ by known constructions of
Matou\v sek's partition theorem~\cite{MatousekDCG92}
(or alternatively recursive application of the 4- or 8-sectioning theorem in the 2-d or 3-d case).  The test in line~4
takes $O(m^{1-\alpha}\polylog m)$ time by querying the data structure for (C), since the convex polyhedral
cell $\gamma_i\cap\Gamma$ has at most $O(\log m)$ facets
(and thus $O(\!\polylog m)$ size).  
As the tree has $O(m^\de)$ nodes, the cost of line~4 is negligible, since its total
over the entire recursion tree is sublinear in $m$
by choosing a sufficiently small constant $\de<\alpha$.  Line~5 takes $O(\log m +\kappa)$ time for output size $\kappa$, by (A); the $O(\log m)$ term
is again negligible, since its total over
the entire recursion tree is sublinear in $m$.


\begin{theorem}\label{thm:offline}
After $\MM$ has been preprocessed for (A) and (C),
algorithm {\tt off-line-queries}$(S,\R^d)$ runs in $O(n(\entropy(S)+1)+K)+o(m)$ time for total output size~$K$.  
\end{theorem}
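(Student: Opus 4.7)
The plan is to mimic the upper-bound analyses from Theorem~\ref{thm:ch3d} and from the reporting framework of Section~\ref{sec:app:rep}. Fix an arbitrary respectful partition $\Pi$ of $S$ with subsets $S_k$ and safe bounding simplices $\D_k$; the goal is to show that the running time is in $O(n(\entropy(\Pi)+1)+K)+o(m)$, and then take $\Pi$ to be one that realizes $\entropy(S)$.

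The heart of the argument is a pruning observation: if at some recursive call the cell $\gamma_i\cap\Gamma$ lies entirely inside $\D_k$, then it inherits safety from $\D_k$, so the test in line~4 succeeds and no further recursion is spawned in that branch (lines~5--6 handle it directly). Consequently, a point $p\in S_k$ can only persist into a depth-$j$ subproblem if the cell currently containing $p$ crosses $\partial\D_k$. I would count such cells by applying Lemma~\ref{lem:part} once at each of the $j$ recursion levels against each of the $O(1)$ facet-hyperplanes of $\D_k$, yielding at most $O(b^{j(1-\eps)})$ surviving cells of size $O(n/b^j)$ each, hence at most $\min\{|S_k|,\,O(n/b^{j\eps})\}$ surviving points of $S_k$. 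Summed over $k$ and $j$---splitting the inner sum at $j\approx(1/\eps)\log_b(n/|S_k|)$ and bounding the geometric tail, exactly as in the proof of Theorem~\ref{thm:maxima:ub} or~\ref{thm:ch3d}---the level-by-level total $\sum_j n_j$, where $n_j$ is the total size of level-$j$ subproblems, collapses to $O(n(\entropy(\Pi)+1))$.

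Next I would collect the per-call costs. Line~2 runs in $O(|Q|)$ because $b$ is a fixed constant, so the total over the tree is $O(\sum_j n_j)$. The answers output in lines~5--6 and in line~1 contribute $O(K)$ in aggregate. For line~1 the $O(|Q|\log m)$ overhead is amortized by charging each base-case point $O(1)$ to every recursion level in which it participated: a point reaching line~1 has participated in $\Omega(\log m)$ levels (by the triggering condition $|Q|<n/m^\de$), so the total charge is again bounded by $\sum_j n_j$. For line~4 and the $O(\log m)$ part of line~5, the recursion tree has only $O(m^\de)$ nodes, so these sum to $O(m^\de\cdot m^{1-\alpha}\polylog m)$ and $O(m^\de\log m)$ respectively---both $o(m)$ provided $\de<\alpha$, as promised.

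The main obstacle, as in the other upper-bound proofs of the paper, is verifying the pruning step cleanly. One must check (i)~that containment of $\gamma_i\cap\Gamma$ inside $\D_k$ is genuinely detected by oracle~(C), which follows because the oracle tests safety of $\gamma_i\cap\Gamma$ itself rather than containment in any particular $\D_k$, and (ii)~that Lemma~\ref{lem:part}'s $O(b^{1-\eps})$ cell-crossing bound composes correctly across recursion levels even though the surrounding cell $\Gamma$ shrinks at each step---this works because at each level the hyperplane still crosses at most $O(b^{1-\eps})$ of the fresh $b$-partition cells produced by Lemma~\ref{lem:part}, independent of $\Gamma$. Once that compositional bound is in hand, the rest is the familiar double-summation bookkeeping.
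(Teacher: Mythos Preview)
Your proposal is correct and follows essentially the same approach as the paper's proof: bound the number of level-$j$ survivors from each $S_k$ by $\min\{|S_k|,\,O(b^{1-\eps})^j\cdot n/\Theta(b)^j\}$ via the compositional crossing bound, then perform the same double summation to obtain $O(n(\entropy(\Pi)+1))$. The paper's proof is terser---it places the per-call cost accounting (your third paragraph) in the text preceding the theorem rather than inside the proof, and it does not spell out your verification points (i) and (ii)---but the substance is identical.
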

\begin{proof}
Let $n_j$ be number of points in $S$ that {\em survive\/} level $j$, i.e., participate in subsets $Q$ at level $j$ of the recursion.  The total running time for the off-line problem is asymptotically bounded by $\sum_j n_j$, ignoring a $o(m)$
extra term.  

Let $\Pi$ be any respectful partition of $S$.
Consider a subset $S_k$ in $\Pi$.
Let $\D_k$ be a safe simplex enclosing $S_k$.  Fix a level $j$.  Let $Q_i$'s and $\gamma_i$'s be the subsets $Q$ and cells $\gamma$ at level $j$.
Each $Q_i$ has size at most $n/\Theta(b)^j$.
The number of $\gamma_i$'s that intersect the boundary of $\D_k$ is at most $O(b^{1-\eps})^j$.
Thus, the number of points in $S_k$ that survive level $j$ is at most $\min\left\{|S_k|, O(b^{1-\eps})^j \cdot n/\Theta(b)^j \right\}$.
Since the $S_k$'s cover the entire point set, with a double summation we have, for a sufficiently large constant $b$,
\begin{eqnarray*}
\sum_j n_j &\le&
\sum_j \sum_k \min\left\{|S_k|, n/\Theta(b)^{\eps j}\right\} \\
&=& \sum_k \sum_j \min\left\{|S_k|,n/\Theta(b)^{\eps j}\right\} \\
&\subseteq & \sum_k O\left(|S_k| (\log(n/|S_k|) + 1)\right)\\
  &=& O(n(\entropy(\Pi)+1)),
\end{eqnarray*}
which yields the theorem.
\end{proof}

For orthogonal-type problems, we can work instead in the comparison model.  We just replace simplices with axis-aligned boxes
in the definition of $\entropy(S)$, replace convex polyhedral cells
with axis-aligned boxes in oracle (C), and replace
partition trees with $k$-d trees in both the lower-bound
proof and the algorithm.

We can apply our framework to solve problem (6):
\begin{enumerate}
\item[(6)]
\emph{Off-line point location in 2-d}: 
For (A), data structures for
planar point location with $O(\log m)$ worst-case query time are known, with $O(m)$ preprocessing time and space \cite{KirkpatrickSICOMP83,ChazelleDCG91,SnoeyinkSURVEY}.  The data structure problem in (C) is equivalent to testing whether any of the
$O(a)$ edges of a query polygon intersect
the given polygonal subdivision.  This reduces to ray shooting (or segment emptiness) queries in the subdivision, for which known results~\cite{ChazelleEdelsbrunnerGrigniALGMCA94}
achieve $O(\log m)$ query time,
with $O(m)$ preprocessing time and space.  
For this problem, each answer has constant size, so the $O(K)$
and $O(\kappa)$ terms can be omitted.
The total running time is $O(n(\entropy(S)+1))$, even if
preprocessing time is included, for a subdivision
of size $m=O(n)$. 
It can be checked that the entire algorithm works in 
the multilinear model.
\end{enumerate}

\subsection{On-line querying problems and
distribution-sensitive data structures}

In the general framework of the preceding section, we can also consider the following {\em on-line querying\/} problem: given a set $S$ of $n$ points in $\R^d$, build a data structure so that we can compute $\MM(q)$ for any query point $q\in\R^d$, while trying to minimize the {\em average query cost over all $q\in S$}.

Our off-line lower bound states that the total time
required to answer queries for all $n$ points in $S$
is $\Omega(n(\entropy(S)+1))$.  This immediately implies
that the average query time over all $q\in S$ must
be $\Omega(\entropy(S)+1)$.  (In contrast, lower bounds for the on-line problem do not necessarily translate to lower bounds for the off-line problem.)

On the other hand, our algorithm for off-line queries can
be easily modified to give a data structure for on-line queries.
We just build a data structure corresponding to the recursion tree generated by {\tt off-line-queries}$(S,\R^d)$,
in addition to the data structure for (A) and (C).
It can be shown that with such a data structure,
the average query time over
all $q\in S$ is $O(\entropy(S)+1)$ (more details are given below).

We can extend the on-line querying
problem to the setting where each point in $S$ is
weighted and the goal is to bound the weighted
average query time over the query points in $S$.
Even more generally, we can consider the setting where
$S$ is replaced by a (possibly continuous)
probability distribution and the goal is to
bound the expected query time for a query point
randomly chosen from $S$.
We now provide more details
for the changes needed in this most general setting.

We first redefine $\entropy(S)$ for a probability
distribution $S$.  

\begin{definition}
We say that a cell $\gamma$ is {\em safe\/} if every point $q$ in $\gamma$ has the same answer $\MM(q)$.
A partition $\Pi$ of $\R^d$ into regions is
{\em respectful\/} if each region $S_k$ of $\Pi$ can be enclosed by
a safe simplex $\D_k$. 
Define the {\em structural entropy} $\entropy(S)$
of a probability distribution $S$ to be the minimum of
$\entropy(\Pi) = \sum_k \mu_S(S_k)\log (1/\mu_S(S_k))$ 
over all respectful partition $\Pi$ of $\R^d$,
where $\mu_S$ denotes the probability measure corresponding to $S$.
\end{definition}

We need a continuous version of Lemma~\ref{lem:part},
which follows by straightforward modification to the proof of the
partition theorem \cite{MatousekDCG92}
(or alternatively, recursive application of 
the 4- or 8-sectioning theorem in the 2-d or 3-d case).

\begin{lemma}\label{lem:part:cont}
For any probability measure in $\R^d$ and $1\le r\le n$
for any constant~$d$, we can partition $\R^d$ into
$r$ (not necessarily convex or connected) polyhedral
regions $Q_1\TO Q_r$ each with measure $\Theta(1/r)$ and
with $r^{O(1)}$ (or fewer) facets,
and find $r$ convex polyhedral cells $\gamma_1\TO\gamma_r$
each with $O(\log r)$ (or fewer) facets, such that $Q_i$
is contained in $\gamma_i$, and
every hyperplane intersects at most $O(r^{1-\eps})$ cells.  Here, $\eps>0$ is a constant that depends only on $d$.
\end{lemma}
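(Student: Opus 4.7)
The plan is to mimic the discrete construction but use a continuous $(2^d)$-sectioning theorem in place of the median-based splits. We first recall the continuous analog of the 4- and 8-sectioning theorems: for any probability measure $\mu$ on $\R^d$ ($d\in\{2,3\}$), there exist $d$ hyperplanes $H_1\TO H_d$ whose $2^d$ open orthants each carry $\mu$-mass exactly $2^{-d}$. This is a standard ham-sandwich type result, provable via Borsuk--Ulam; atoms and measures supported on lower-dimensional sets are handled by a routine perturbation-and-limit argument, using compactness of the space of hyperplane $d$-tuples meeting a fixed bounded region.

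Starting from the trivial cell $\gamma_{\text{root}}=\R^d$ with measure $1$, I would apply the sectioning theorem recursively: at each leaf cell $\gamma$ of the current subdivision, normalize $\mu$ on $\gamma$, pick $d$ hyperplanes according to the theorem, and replace $\gamma$ by its $2^d$ open sub-cells. After $k=\lceil\log_{2^d} r\rceil$ levels this produces $(2^d)^k=\Theta(r)$ convex polyhedral cells, each with $\mu$-measure $\Theta(1/r)$; since each cell is the intersection of $dk=O(\log r)$ half-spaces, each has $O(\log r)$ facets. We then declare the regions $Q_i$ to equal the cells $\gamma_i$. If an exact count of $r$ regions is required, I would recurse one level too deep to obtain at most $2^d r$ cells and then merge sibling groups back up the tree until exactly $r$ remain; merging only decreases the number of cells a hyperplane can cross and keeps measures within a constant factor of $1/r$.

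For the hyperplane-crossing bound, observe that any hyperplane $H\subset\R^d$ is on one side of at least one of the $d$ sectioning hyperplanes at each recursive step, so $H$ can enter at most $2^d-1$ of the $2^d$ sub-cells. A straightforward induction on depth then gives at most $(2^d-1)^k\le O(r^{1-\eps})$ leaf cells intersecting $H$, with $\eps=1-\log_{2^d}(2^d-1)>0$ (matching the values mentioned in the excerpt, $\eps=1-\log_4 3$ in 2-d and $\eps=1-\log_8 7$ in 3-d). The $r^{O(1)}$ bound on the number of facets of $Q_i$ is immediate since $O(\log r)\subseteq r^{O(1)}$.

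The main obstacle is the continuous sectioning theorem itself when $\mu$ has atoms or is supported on sets of lower dimension, since the hyperplanes must still split the mass evenly. I would handle this by convolving $\mu$ with a shrinking smooth kernel to obtain atomless measures $\mu_t\to\mu$, applying the theorem to each $\mu_t$ to obtain a family of valid hyperplane tuples, and extracting a convergent subsequence; the limiting tuple gives a valid sectioning of $\mu$ after a conventional tie-breaking rule to assign boundary mass to cells. Everything else (bounding the facet counts, the crossing recursion, merging to reach exactly $r$ regions) is a direct replay of the discrete argument already sketched in the excerpt, so no further novelty is required.
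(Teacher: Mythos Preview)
Your approach is essentially the same as the paper's: the paper does not give a proof but simply states that the lemma ``follows by straightforward modification to the proof of the partition theorem [Matou\v sek] (or alternatively, recursive application of the 4- or 8-sectioning theorem in the 2-d or 3-d case),'' and you have fleshed out the second alternative.

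Two small points. First, your one-line justification for the crossing bound is wrong as stated: a generic hyperplane $H$ is \emph{not} ``on one side of at least one of the $d$ sectioning hyperplanes'' (it typically intersects all of them). The correct reason, which the paper just asserts, is that after an affine change of coordinates sending the $d$ sectioning hyperplanes to the coordinate hyperplanes, the hyperplane $H:\,a\cdot x=c$ with $c>0$ cannot meet the orthant $\{s_i=-\operatorname{sign}(a_i)\}$, since $\sum a_i s_i|x_i|\le 0<c$ there; hence $H$ misses at least one of the $2^d$ orthants. Your conclusion and the resulting recursion $(2^d-1)^k=O(r^{1-\eps})$ are correct.

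Second, you (correctly) restrict to $d\in\{2,3\}$ because the $2^d$-sectioning theorem is only known in those dimensions; the lemma as stated covers any constant~$d$, for which the paper points to Matou\v sek's partition theorem instead. For the applications in the paper this restriction is harmless.
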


The lower bound proof is easier for on-line problems,
so we present the simplified proof in full below.  Because the input to a query algorithm is just 
a single point (unlike in the off-line setting where
we could perform a test that has more than one query points
as arguments),
multilinear tests can now be replaced with linear tests.

\begin{theorem}
Any algorithm for the on-line querying problem
requires $\Omega(\entropy(S)+1+\kappa)$ expected query time
for output size $\kappa$,
for any probability distribution $S$ in the linear decision tree
model.
\end{theorem}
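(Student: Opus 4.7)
The plan is to combine the information-theoretic lower bound for binary decision trees (Shannon/Kraft) with the structural entropy definition, after reducing the leaf cells to simplices by triangulation.

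First I would fix any correct algorithm in the linear decision tree model. Each internal node asks ``$f(q) > 0$?'' for a linear $f$, so the set of query points $q$ reaching a given leaf is a convex polyhedral cell $R$ cut out by the halfspace conditions along the root-to-leaf path; at depth $d_R$ it has at most $d_R$ facets. For correctness, the cell $R$ must be \emph{safe}: if two points $q,q'\in R$ had different answers $\MM(q)\ne\MM(q')$, the algorithm would produce the same output at the leaf and hence be wrong on one of them. Let $\Pi'$ be the resulting partition of $\R^d$ into the leaf cells. The expected query time on $q\sim S$ is $T = \sum_R \mu_S(R)\, d_R$, and since the algorithm is a binary tree, Kraft's inequality together with Gibbs' inequality gives
\[
T \,\ge\, \sum_R \mu_S(R)\log(1/\mu_S(R)) \,=\, H(\Pi').
\]

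Next I would relate $H(\Pi')$ to $\entropy(S)$. A convex polytope in $\R^d$ with $d_R$ facets has $d_R^{O(1)}$ vertices and triangulates into $k_R \le d_R^{O(1)}$ simplices (for constant $d$). Since $R$ is safe, each simplex of the triangulation is also safe, so the common refinement $\widetilde{\Pi'}$ of all these triangulations is a respectful partition of $\R^d$; hence $\entropy(\widetilde{\Pi'}) \ge \entropy(S)$. For each $R$, if its simplices carry measures $q_{R,1},\ldots,q_{R,k_R}$ with sum $p_R = \mu_S(R)$, then concavity of $-x\log x$ (applied as in the footnote preceding Theorem~\ref{thm:ch:lb}) gives
\[
\sum_{j=1}^{k_R} q_{R,j}\log\tfrac{1}{q_{R,j}}
\,\le\, p_R\log\tfrac{1}{p_R} + p_R\log k_R
\,\le\, p_R\log\tfrac{1}{p_R} + O(p_R\log d_R).
\]
Summing over $R$ yields $\entropy(S)\le \entropy(\widetilde{\Pi'}) \le H(\Pi') + O\!\left(\sum_R p_R\log d_R\right)$. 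Jensen's inequality applied to the concave function $\log$ bounds $\sum_R p_R\log d_R \le \log\!\sum_R p_R d_R = \log T$.

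Combining, $T \ge H(\Pi') \ge \entropy(S) - O(\log T)$, from which $T \in \Omega(\entropy(S))$ follows (either $T$ is already $\Omega(\entropy(S))$, or $\entropy(S)\in O(\log T)$, in which case the conclusion is immediate). The trivial $\Omega(1)$ bound on any nontrivial test and the observation that producing an answer of size $\kappa$ takes at least $\kappa$ steps give the additive $\Omega(1+\kappa)$ terms. The main obstacle is the step tying $H(\Pi')$ back to $\entropy(S)$: leaf cells of a linear decision tree can have up to $d_R$ facets rather than being simplices, so the triangulation-plus-concavity argument is needed to absorb the extra $\log d_R$ factor into $\log T$ via Jensen without blowing up the bound.
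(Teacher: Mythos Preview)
Your proof is correct and takes a genuinely different route from the paper's. The paper argues via an adversary: it builds a partition tree $\TT$ for the distribution $S$ using the continuous partition theorem (Lemma~\ref{lem:part:cont}), simulates the algorithm while maintaining a node $v_q$ for the query point, and resolves each hyperplane test by randomly descending to a child (the crossing-number bound makes each test cost $O(1)$ expected descents). At the end $v_q$ must be a leaf, so the expected depth---which equals $\Omega(\entropy(\PiPart))$ for the leaf partition---lower-bounds the expected number of tests; a triangulation step then ties $\entropy(\PiPart)$ to $\entropy(S)$, and the final random assignment of $q$ within its leaf region recovers the distribution $S$ exactly. You instead work directly on the algorithm's own decision tree: Kraft plus Gibbs gives $T\ge H(\Pi')$ for the leaf-cell partition $\Pi'$, and then you triangulate each leaf cell and absorb the resulting $O(\log d_R)$ entropy increase into $O(\log T)$ via Jensen, yielding $T + O(\log T)\ge \entropy(S)$.

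Your approach is more elementary---it needs only Kraft, Gibbs, and Jensen, and dispenses with the partition theorem entirely---and it makes transparent why the on-line single-query case is simpler than the off-line case (where tests may involve several query points, so leaf regions of the decision tree are no longer convex cells in $\R^d$ and your Kraft/Gibbs step would not apply). The paper's adversary argument, by contrast, keeps the on-line proof structurally parallel to the off-line lower bounds of Theorems~\ref{thm:ch:lb} and~\ref{thm:randorder} and to the matching upper-bound algorithm, all of which revolve around the same partition-tree object. One small point: your notation $d_R$ for leaf depth collides with the paper's use of $d$ for dimension, and you should note that unbounded leaf cells need to be clipped to a bounding box of the support of $S$ before triangulation; neither affects the argument.
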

\begin{proof}
We define a {\em partition tree\/} $\TT$ as follows: Each node $v$ stores
a pair $(Q(v),\Gamma(v))$, where $Q(v)$ is a region enclosed inside
a convex polyhedral cell $\Gamma(v)$.
%
%
The root stores $(\R^d,\R^d)$.
If $\mu_S(Q(v))$ drops below $1/m^\de$ or $\Gamma(v)$ is safe, then $v$ is a leaf.
Otherwise, apply Lemma~\ref{lem:part} with $r=b$ to 
the restriction of $\mu_S$ to $Q(v)$, and 
obtain regions $Q_1\TO Q_b$ and cells $\gamma_1\TO \gamma_b$.
For the children $v_1\TO v_b$ of $v$, set $Q(v_i)=Q_i\cap Q(v)$
and $\Gamma(v_i)=\gamma_i\cap\Gamma(v)$.
For a node $v$ at depth $j$ of the tree $\TT$ we then have $\mu_S(Q(v)) \ge 1/\Theta(b)^j$, and consequently,
the depth $j$ is in $\Omega(\log_b (1/\mu_S(Q(v))))$.

Let $\PiPart$ be the partition formed by the cells $\Gamma(v)$ at the leaves $v$ in $\TT$.
Let $\PiPartt$ be a refinement of this partition after
triangulating the leaf cells.
Note that the subpartitioning of a leaf cell at depth $j$ causes the entropy to increase
by at most $O(\mu_S(Q(v))\log (j\log b))\subseteq O(\mu_S(Q(v))\log\log (1/\mu_S(Q(v))))$ for any constant $b$.
So $\entropy(\PiPartt)\in\Theta(\entropy(\PiPart))$.  Clearly, $\PiPartt$ is respectful.

The adversary constructs a bad query point $q$ as follows.  During the simulation, we maintain a node $v_q$ in $\TT$, where
the only information the algorithm knows about $q$ is that $q$ lies inside $\Gamma(v_q)$.

Suppose that the simulation encounters a test to determine
which side $q$ lies inside a hyperplane $h$.  We do the following:
\begin{itemize}
\item We reset $v_q$ to one of its children at random, where
each child $v_q'$ is chosen with probability
$\mu_S(Q_q')/\mu_S(Q_q)$ (which is in $\Theta(1/b)$).  
If $\Gamma_q'$ does not intersect $h$, then
the comparison is resolved.  Otherwise, we repeat.
\end{itemize}
The probability that the comparison is not resolved
in a single step
is at most $O(b^{1-\eps})\cdot \Theta(1/b)$, which 
can be made less than $1/2$ for a sufficiently large constant $b$.

Let $T$ be the number of tests made.
Let $D$ be the depth of $v_q$ at the end.
For $i\le T$, let $T_i=1$ and
$D_i$ be the number of steps needed to resolve the $i$-th test;
then $\Ex[D_i]\le 2$.
For $i>T$, let $T_i=D_i=0$.
Since $\Ex[2T_i-D_i]\ge 0$ for all~$i$, by linearity of expectation
$\Ex[D]=\Ex[\sum_i D_i]\le 2\, \Ex[\sum_i T_i] = 2\,\Ex[T]$.

At the end of the algorithm, $v_q$ must be a leaf.
Thus,
\begin{eqnarray*}
 \Ex[T] &\ge& \Omega(\Ex[D])\\
&\subseteq& \Omega\left(\sum_{\mbox{\rm\scriptsize leaf }v} 
\mu_S(Q(v))\log(1/\mu_S(Q(v)))\right)\\
&\subseteq& \Omega(\entropy(\PiPart))\\
&\subseteq& \Omega(\entropy(\PiPartt))\\
&\subseteq& \Omega(\entropy(S)).
\end{eqnarray*}

At the end, we can assign $q$ to a random point in $Q(v_q)$
chosen from the distribution $S$.
Then $q$ satisfies precisely the probability distribution $S$
(in other words, the adversary have not acted adversarily after all).
Thus, $\Ex[T]$ is the expected query time for a query point
randomly chosen from the distribution $S$.
\end{proof}

For the upper-bound side, below is the pseudocode for
the preprocessing algorithm and the
query algorithm, which are derived from our previous
algorithm {\tt off-line-queries}; here, $b$ is a sufficiently
large constant. 
The query algorithm assumes
an oracle data structure for (A) (oracle (C) is only needed
in the preprocessing algorithm). 

\begin{algm}
\< {\tt preprocess}$(Q,\Gamma)$:\\
1.\' if $\mu_S(Q) < 1/m^\de$ then return\\
2.\' apply Lemma~\ref{lem:part:cont} to the restriction of $\mu_S$
to $Q$ \\ to get $b$ regions $Q_1\TO Q_b$ and cells $\gamma_1\TO\gamma_b$\\
3.\' for $i=1$ to $b$ do\\
4.\'\> if $\gamma_i\cap\Gamma$ is safe then
store $\MM(q)$ for an arbitrary point $q\in\gamma_i\cap\Gamma$\\
5.\'\> else {\tt preprocess}$(Q_i\cap Q,\gamma_i\cap\Gamma)$
\end{algm}

\begin{algm}
\< {\tt on-line-query}$(q,Q,\Gamma)$:\\
1.\' if $\mu_S(Q) < 1/m^\de$ then compute $\MM(q)$ directly and return\\
2.\' locate the region $Q_i$ containing $q$\\
3.\' if $\gamma_i\cap\Gamma$ was marked as safe then
look up the stored answer and return\\
4.\' else {\tt on-line-query}$(q,Q_i\cap Q,\gamma_i\cap\Gamma)$
\end{algm}

The space of the tree generated by
{\tt preprocess}$(\R^d,\R^d)$ is only $O(m^\de)$, and so
the space for the data structure for (A) dominates.
We will not focus on the preprocessing time,
which depends on the construction
time for Lemma~\ref{lem:part:cont}, which in turn
depends on the distribution $S$.
The preprocessing can be done efficiently, for example,
for a discrete $n$-point distribution and for
many other distributions.

In algorithm {\tt on-line-query},
line~1 takes $O(\log m)$ time by (A); note that this
case occurs only when the query point $q$
participates in $\Omega(\log m)$ levels of the recursion,
and we can account for the cost by charging one unit
to each level of the recursion.  
Line~2 takes $O(1)$ time for a constant $b$.
We can adapt the previous proof of Theorem~\ref{thm:offline}
for the rest of the query time analysis:

\begin{theorem}
After $\MM$ has been preprocessed for (A) and {\tt preprocess}$(\R^d,\R^d)$ has been executed,
algorithm {\tt on-line-query}$(q,\R^d,\R^d)$ runs in $O(\entropy(S)+1+\kappa)$ expected time for output size~$\kappa$,
for a query point $q$ randomly chosen from the distribution~$S$.  
\end{theorem}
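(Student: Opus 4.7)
The plan is to trace how a random query point $q \sim S$ descends through the recursion tree $\TT$ produced by \texttt{preprocess}$(\R^d,\R^d)$, and to bound the expected query time by the expected depth of $q$ in $\TT$ plus the output cost. First I would observe that every operation in \texttt{on-line-query} costs $O(1)$ per recursion level except line~1, which invokes oracle~(A) in time $O(\log m + \kappa)$; however line~1 only fires at a leaf whose cell has measure below $1/m^\de$, which forces depth $\Omega(\log_b m)$, so the $O(\log m)$ factor is absorbed by charging $O(1)$ per level already traversed. Hence the total query time is $O(1+\mbox{depth}(q)+\kappa)$, and since $\Ex[\mbox{depth}(q)] = \sum_{j\ge 1}\Pr[q \mbox{ reaches level }j]$, it remains to show $\sum_j \Pr[q \mbox{ reaches level }j] \in O(\entropy(S)+1)$.

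Fix any respectful partition $\Pi$ of $\R^d$ with regions $S_k$ and safe simplices $\D_k \supseteq S_k$. The key observation is that safety is monotone under containment: if $\Gamma \subseteq \D_k$ then $\Gamma$ is safe, so the preprocessing algorithm, via oracle~(C), would terminate the recursion at the first ancestor cell contained in $\D_k$. Consequently, if $q \in S_k$ survives level $j$, then the level-$j$ cell of $\TT$ that contains $q$ must intersect $\partial \D_k$. By Lemma~\ref{lem:part:cont}, each hyperplane cuts at most $O(b^{1-\eps})$ of the $b$ subcells created in one partitioning step, and since a simplex has $O(1)$ facets, a straightforward induction shows that at most $O(b^{j(1-\eps)})$ level-$j$ cells of $\TT$ intersect $\partial \D_k$. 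Each such cell has measure $\Theta(1/b^j)$, whence
\[
\Pr[q \in S_k \mbox{ and survives level }j]\ \le\ \min\!\left\{\mu_S(S_k),\ O(b^{-j\eps})\right\}.
\]

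Summing over $k$ and $j$, swapping the order of summation, and invoking the standard geometric estimate (the same one used in Theorem~\ref{thm:offline}), I would then obtain
\begin{eqnarray*}
\sum_j \Pr[q \mbox{ survives level }j]
&\le& \sum_k \sum_j \min\!\left\{\mu_S(S_k),\ O(b^{-j\eps})\right\}\\
&\subseteq& \sum_k O\!\left(\mu_S(S_k)(\log(1/\mu_S(S_k))+1)\right)\\
&\subseteq& O(\entropy(\Pi)+1).
\end{eqnarray*}
Taking the infimum over respectful partitions $\Pi$ yields $O(\entropy(S)+1)$ on the expected depth, and adding the $O(\kappa)$ output cost proves the theorem.

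The main obstacle is the amortization step that turns the per-leaf cost $O(\log m + \kappa)$ in line~1 of \texttt{on-line-query} into the per-query cost $O(\mbox{depth}(q)+\kappa)$; this relies crucially on the fact that line~1 only fires at a leaf of depth $\Omega(\log_b m)$, so each unit of $\log m$ can be amortized against one level of descent. The rest of the argument is a direct probabilistic analogue of the counting argument in Theorem~\ref{thm:offline}: the per-point contribution $|S_k|\log(n/|S_k|)$ from the off-line bound is replaced by the per-measure contribution $\mu_S(S_k)\log(1/\mu_S(S_k))$, while the geometric hyperplane-cutting bound from Lemma~\ref{lem:part:cont} plays exactly the same role in both settings.
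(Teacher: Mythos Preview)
Your proposal is correct and follows essentially the same argument as the paper: both define an indicator for survival at level $j$, use monotonicity of safety to conclude that the level-$j$ cell containing $q$ must cross $\partial\D_k$, bound the number of such cells by $O(b^{1-\eps})^j$ via Lemma~\ref{lem:part:cont}, and combine with the $\Theta(1/b^j)$ measure bound on the corresponding region to get $\Pr[(n_j=1)\wedge(q\in S_k)]\le\min\{\mu_S(S_k),O(b^{-\eps j})\}$ before the identical double summation. The only imprecision is that you say ``each such cell has measure $\Theta(1/b^j)$'' when strictly it is the associated \emph{region} $Q_i$ (cells may overlap), but since $q$ is located via the regions this does not affect the bound.
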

\begin{proof}
Let $n_j$ be 1 if the query point participates at level $j$ of the recursion, and 0 otherwise.
Then the query time is asymptotically
bounded by $\sum_j n_j$.

Let $\Pi$ be any respectful partition of $S$.
Consider a region $S_k$ in $\Pi$, enclosed in a safe simplex~$\D_k$.
Fix a level $j$.  Let $Q_i$'s and $\Gamma_i$'s be the regions and cells 
at level $j$ of the recursion tree.
Each $Q_i$ satisfies $\mu_S(Q_i)\le 1/\Theta(b)^j$.
The number of $\Gamma_i$'s that intersect the boundary of $\D_k$ is at most $O(b^{1-\eps})^j$.
Thus, $\Pr[(n_j=1)\wedge (q\in S_k)] \le \min\{\mu_S(S_k),\, O(b^{1-\eps})^j \cdot 1/\Theta(b)^j\}$ for a point $q$
randomly chosen from the distribution~$S$.
Since the $S_k$'s cover $\R^d$, with a double summation we have, for a sufficiently large constant $b$,
\begin{eqnarray*}
\Ex\left[\sum_j n_j\right] &\le&
\sum_j \sum_k \min\{\mu_S(S_k), 1/\Theta(b)^{\eps j}\} \\
&=& \sum_k \sum_j \min\{\mu_S(S_k), 1/\Theta(b)^{\eps j}\} \\
&\subseteq & \sum_k O\left(\mu_S(S_k) (\log(1/\mu_S(S_k)) + 1)\right)\\
  &=& O(\entropy(\Pi)+1),
\end{eqnarray*}
which yields the theorem.
\end{proof}

For orthogonal-type problems, we can again work in the comparison
model, by replacing simplicial and convex polyhedral cells
with axis-aligned boxes.

We can apply our framework to on-line versions of
several problems:

\begin{enumerate}
\item[(6)] \emph{On-line point location queries in 2-d}: We immediately obtain optimal $O(\entropy(S)+1)$ expected query cost, with an $O(m)$-space data structure for a subdivision of size $m$,
for any given distribution $S$.  The query algorithm works
in the linear decision tree model.
This on-line point location result is known before \cite{AryaMalamatosTALG07,AryaMalamatosSICOMP07,ColleteDujmovicSODA08,IaconoCGTA04} (some of these previous work even optimize the constant factor in the query cost).

\item[(1)] \emph{On-line halfspace range reporting queries in 2-d and 3-d}: Here, we map query lower halfspaces to points by duality.  For (A), data structures for 2-d and 3-d halfspace range reporting with
$O(\log m +\kappa)$ worst-case time are known, with $O(m)$ space \cite{ChazelleGuibasLeeBIT85,AfshaniChanSODA09}.
We thus obtain optimal $O(\entropy(S)+1+\kappa)$ expected query cost for output size $\kappa$, with an $O(m)$-space data structure for a given 2-d or 3-d $m$-point set, for any given distribution~$S$.  
The query algorithm works in the linear decision tree model.
This result is new.

\item[(2)] \emph{On-line dominance reporting queries in 2-d and 3-d}: The story is similar to halfspace range reporting.
The query algorithm now works in the comparison model.

\item[(4)] \emph{On-line orthogonal range reporting/counting queries in 2-d}: Here, we map query rectangles to points in 4-d as in Section~\ref{sec:app:rep}.  For (A),
data structures for 2-d orthogonal range reporting
with $O(\log m+\kappa)$ worst-case query time are known, with
$O(m\log^\eps m)$ space; and 
data structures for 2-d orthogonal range counting
with $O(\log m)$ worst-case query time are known,
with $O(m)$ space~\cite{ChazelleSICOMP88}.  
For reporting,
we thus obtain optimal $O(\entropy(S)+1+\kappa)$ expected query cost for output size $\kappa$, with an $O(m\log^\eps m)$-space data structure for a given 2-d $m$-point set, for any given
distribution~$S$;
for counting, we get optimal $O(\entropy(S)+1)$ expected
query cost with an $O(m)$-space data structure.
The query algorithm works in the comparison model.  This result is apparently new, as it extends Dujmovi\'c, Howat, and Morin's result on 2-d dominance counting~\cite{DujmovicHowatSODA09} and unintentionally answers one of their main open problems (and at the same time improves their space bound from $O(m\log m)$ to $O(m)$).
\end{enumerate}

\begin{remark}
Some months after the appearance of the
conference version of the present paper,
a similar general technique for distribution-sensitive data structures was rediscovered by \citeN{BoseOddsOn}.
\end{remark}

\section{Discussion}
\label{sec:discussion}

Although we have argued for the order-oblivious form of instance optimality, we are not denigrating adaptive algorithms that exploit the order of the input.  Indeed, for some geometric applications, the input order may exhibit some sort of locality of reference which can speed up algorithms.  There are various parameters that one can define to address this issue, but it is unclear how a unified theory of instance optimality can be developed for order-dependent algorithms.

We do not claim that the algorithms described here are the best in practice, because of possibly larger constant factors (especially those that use Matou\v sek's partition trees), although some variations of the ideas might actually be useful.  In some sense, our results can be interpreted as a theoretical explanation for why heuristics based on bounding boxes and BSP trees perform so well (for example, see~\cite{AndrewsSnoeyinkSDH94} on experimental results for the red/blue segment intersection problem), as many of our
instance-optimal algorithms prune input based on bounding boxes
and spatial tree structures.

Note that specializations of our techniques to 1-d can also lead to order-oblivious instance-optimal results for the multiset-sorting problem and the problem of computing the intersection of two (unsorted) sets.  Adaptive algorithms for similar 1-d problems (e.g.,~\cite{MunroSpira}) were studied in different settings from ours.

Not all standard geometric problems admit nontrivial instance-optimal results in the order-oblivious setting.  For example, computing the Voronoi diagram of $n$ points or the trapezoidal decomposition of $n$ disjoint line segments, both having $\Theta(n)$ sizes, requires $\Omega(n\log n)$ time for every point set by the naive information-theoretic argument.  Computing the ($L_\infty$-)closest pair for a {\em monochromatic\/} point set requires $\Omega(n\log n)$ time for every point set by our adversary lower-bound argument.

An open problem is to strengthen our lower bound proofs to allow for a more general class of test functions beyond multilinear functions, for example, arbitrary fixed-degree algebraic functions.

It remains to see how widely applicable the concept of instance optimality is.  To inspire further work, we mention the following geometric problems for which we currently are unable to obtain instance-optimal results: 
\begin{enumerate}
\item[(a)]~reporting all intersections between a set of disjoint red (nonorthogonal) line segments and a set of disjoint blue line segments in 2-d; 
\item[(b)]~computing the $L_2$- or $L_\infty$-closest pair between a set of red points and a set of blue points in 2-d; 
\item[(c)]~computing the diameter or the width of a 2-d point set; 
\item[(d)]~computing the lower envelope of a set of (perhaps disjoint) line segments in 2-d.
\end{enumerate}

Finally, we should mention that all our current results concern at most logarithmic-factor improvements.
Obtaining some form of instance-optimal results for problems with $\omega(n\log n)$ worst-case complexity (for example, off-line triangular range searching, 3SUM-hard problems, \ldots) would be even more fascinating.

\appendix

\section*{APPENDIX}
\section{An Alternative Proof for 2-d Maxima}
\label{sec:maxima:alt}

\newcommand{\FF}{{\cal F}}

In this appendix, we describe an alternative approach
to the 2-d maxima problem, which uses
a new definition of a difficulty measure and 
a vastly different lower-bound proof based on an interesting
encoding argument.  This approach is more specialized
and does not seem to work for 3-d maxima or other problems, but the lower-bound proof has the advantage of being generalizable
to nondeterministic algorithms.

We begin by defining a measure of difficulty $\FF(S)$ specific
to the 2-d maxima problem, which is seemingly different from the structural entropy $\entropy(S)$ defined previously.
The new definition appears simpler in the sense that we do not need to take the minimum over all partitions but measure the contribution of each point directly, but as a byproduct of our analyses, $\FF(S)$ is asymptotically equivalent to $n\entropy(S)$ (which is why we do not give it a name).

\begin{definition}\label{def:F}
Given a point set $S$, let $q_1\TO q_h$ denote the maximal points of $S$ from left to right, with $q_0=(-\infty,\infty)$
and $q_{h+1}=(\infty,-\infty)$. 
Given a point $p\in S$, let $q_i\TO q_\ell$ be all the maximal points that dominate $p$.  
Define $F(p)$ to be the subset of all points in $S$ in the slab $(q_{i-1}.x,q_{\ell+1}.x)\times\R$, where we use $p.x$ and $p.y$ to denote the $x$- and $y$-coordinates of $p$.
Define $\FF(S) = \sum_{p\in S} \log (n/|F(p)|)$.
(See Figure~\ref{fig:DefinitionOfFFS}.)
\end{definition}

For the upper-bound side, we use the same algorithm 
and an analysis similar to before:

\begin{figure}
\centering
\begin{minipage}[c]{.45\linewidth}
\begin{tikzpicture}[scale=.85]
\DefineInstanceMaximaTwoD%
\draw [output] (0.1,0.1) |- (m1)  |- (m2) |- (m3) |- (0.1,0.1) ; 
\draw [maxima] (m0) circle (2pt) (m4) circle (2pt);
\draw [above right] (m0) node {$q_0$} (m4) node  {$q_4$};
\DrawInstanceMaximaTwoD%
\path 
node[below] at (p1) {$p_1$}
node[below] at (p2) {$p_2$}
node[below] at (p3) {$p_3$}
node[below] at (p4) {$p_4$}
node[below] at (p5) {$p_5$}
node[below] at (p6) {$p_6$}
node[above] at (p7) {$p_7$}
node[above] at (p8) {$p_8$}
node[above] at (p9) {$p_9$}
;
\draw[color=blue,dashed] (p2)
 edge [<-] (m1)  
 edge [<-] (m2)   
 edge [<-] (m3);
\end{tikzpicture}
\end{minipage}\begin{minipage}[c]{.5\linewidth}
\caption{Definition of $\FF(S)$: In this instance,
points $p_1,p_2,p_3$ are dominated by $q_1,q_2,q_3$, and so $|F(p_1)|=|F(p_2)|=|F(p_3)|=n=12$. 
Points $p_4,p_5,p_6,q_2$ are dominated only by $q_2$,
and so $|F(p_4)|=|F(p_5)|=|F(p_6)|=|F(q_2)|=7$.
Similarly $|F(p_7)|=|F(p_8)|=|F(p_9)|=|F(q_3)|=4$
and $|F(q_1)|=7$.  Thus, $\FF(S)=3\log\frac{12}{12} + 5\log\frac{12}{7}+4\log\frac{12}{4}$.
\label{fig:DefinitionOfFFS}
}
\end{minipage}
\end{figure}
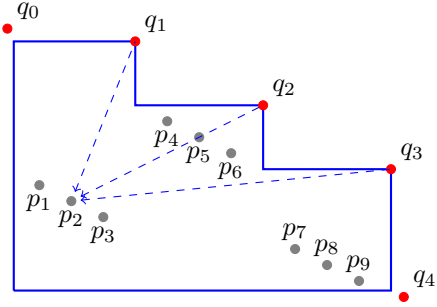

\begin{theorem}
Algorithm {\tt maxima2d}$(S)$ from Section~\ref{sec:maxima:ub} runs in $O(\FF(S)+n)$ time.
\end{theorem}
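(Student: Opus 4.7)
The plan is to mirror the analysis of Theorem~\ref{thm:maxima:ub}, writing the total work as $\sum_{j=0}^{\lceil\log n\rceil} n_j = \sum_{p\in S} g(p)$, where $g(p)$ counts the levels at which $p\in S^{(j)}$, and then to establish the per-point bound $g(p) \le \lceil\log(n/|F(p)|)\rceil + O(1)$. Summing over all $p\in S$ immediately yields $\sum_p g(p) \le \FF(S)+O(n)$, which is the desired running time bound.

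To prove the per-point bound, I will fix $p\in S$ with dominators $q_i,\ldots,q_\ell$ and suppose $p\in S^{(j)}$. By observation~(ii) from Theorem~\ref{thm:maxima:ub}, none of $q_i,\ldots,q_\ell$ has been added to $X_j$; therefore the $X_j$-slab $(q_a.x, q_b.x)$ containing $p$ must have $a\le i-1$ and $b\ge \ell+1$, so it contains the entire $x$-support $(q_{i-1}.x, q_{\ell+1}.x)$ of $F(p)$. Observation~(i) then bounds the surviving points in this slab by $\lceil n/2^j\rceil$. In the favorable case in which every point of $F(p)$ is still in $S^{(j)}$, this immediately gives $|F(p)| \le \lceil n/2^j\rceil$, and hence $j \le \log(n/|F(p)|) + O(1)$.

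The main obstacle is the case where part of $F(p)$ has already been pruned. Such pruning can only come from a discovered maximum $q_m \in X_j$ with $m\ge \ell+1$, since $q_i,\ldots,q_\ell \notin X_j$ by hypothesis and any dominator of a $p'\in F(p)$ has $x$-coordinate exceeding $q_{i-1}.x$; by the decreasing-$y$ structure of the maxima, the only effective pruner is $q_{\ell+1}$. I plan to resolve this sub-case via an amortized charging argument: every pruned $p'\in F(p)$ has its own dominator set containing $q_{\ell+1}$, so the $x$-support of $F(p')$ extends strictly past $q_{\ell+1}.x$ to the right, and $p'$ contributes a $\log(n/|F(p')|)$ term to $\FF(S)$. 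The key challenge will be arranging the charges so that distinct ``excess'' levels of survival of $p$ beyond $\log(n/|F(p)|)$ are mapped to distinct pruned points in $F(p)$, and so that the total excess beyond $\FF(S)$ remains $O(n)$ — essentially using the fact that the pruned points that ``cost'' $p$ extra survival time are precisely the ones whose own $F$-values are large enough to absorb the charge without inflating the sum $\FF(S)$ beyond what is already accounted for.
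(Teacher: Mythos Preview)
Your setup—writing the running time as $\sum_p g(p)$ and aiming for the per-point bound $g(p)\le\log(n/|F(p)|)+O(1)$—is exactly the paper's. But the paper's proof is a two-line direct argument with no charging: it asserts that $|F(p)|>\lfloor n/2^j\rfloor$ forces some $q_m\in\{q_i,\dots,q_\ell\}$ into $X_j$ (citing~(i)), whence by~(ii) $p$ is pruned; summing finishes.

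Your worry that observation~(i) bounds only $|S^{(j)}|$ in the slab, not $|F(p)|$ itself, is a fair reading—the paper does gloss over this—but the resolution is direct, not amortized, and your charging scheme is both unnecessary and, as you yourself flag, not actually worked out. The missing observation is this: as long as $p$ survives a split with median $m$, one of two things happens. If $m<p.x$, then (since $p$ survives) the discovered point is some $q_c$ with $c\le i-1$, hence $m<q_c.x\le q_{i-1}.x$, so \emph{every} point of $F(p)$ (having $x>q_{i-1}.x$) lands in $p$'s half and none is pruned by~$q_c$. If $m>p.x$, then necessarily $m>q_\ell.x$; any $p'\in F(p)$ with $p'.x>m$ then satisfies $q_\ell.x<p'.x<q_{\ell+1}.x$, which forces $p'.y<q_{\ell+1}.y$ (otherwise $p'$ would be a maximal point strictly between two consecutive maxima), so $p'$ is dominated by the discovered maximum and pruned at that very step. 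Either way, every point of $F(p)$ that is still alive sits in the same half as $p$ before the level's pruning, and that half has at most $\lceil|Q|/2\rceil$ points. Inducting, $|F(p)|$ is bounded by the size of $p$'s subproblem at each level, giving the per-point bound without any amortization. Drop the charging detour and argue this directly.
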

\begin{proof}
We proceed as in the proof of Theorem~\ref{thm:maxima:ub}, but a simpler argument replaces the second paragraph:
Fix a point $p\in S$.  Let $q_i\TO q_\ell$ be all the maximal points that dominate $p$.  Fix a level $j$.
If $|F(p)| > \down{n/2^j}$, then by (i), that some maximal point from $\{q_i\TO q_\ell\}$ must been discovered, and by (ii),
this implies that $p$ does not survive level $j$.
Thus, $p$ can survive only for $O(\log (n/|F(p)|)+1)$ levels.  We can asymptotically bound the running time by $\sum_j n_j
\in O(\sum_p (\log (n/|F(p)|)+1)) = O(\FF(S)+n)$.
\end{proof}

For the lower-bound side, we first consider
a slightly strengthened problem which we call
{\em maxima with witnesses\/}:
given a point set $S$, report (the indices of) all maximal points in
left-to-right order, and for each nonmaximal point $p$ in $S$, 
report a maximal point (a {\em witness\/}) that dominates $p$.

\begin{theorem}\label{thm:maxima:alt}
$\OPT(S),\OPT\avg(S)\in\Omega(\FF(S)+n)$ for the 2-d ``maxima with witness'' problem in the
comparison model.
\end{theorem}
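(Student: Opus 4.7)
The plan is an encoding/counting argument in the spirit of the classical $\Omega(\log n!)$ lower bound for sorting.  Since the argument operates purely at the level of a single leaf of the decision tree, it extends to nondeterministic comparison algorithms (as promised in the text), and it yields an expected-time bound directly, giving the random-order claim; the order-oblivious claim then follows from $\OPT\avg(S)\le\OPT(S)$.  Fix any correct comparison algorithm $A$ with binary decision tree~$\TT$, let $\sigma$ be a uniformly random permutation of~$S$, and write $L(\sigma)$ for the leaf reached and $\Sigma(L)=\{\sigma':L(\sigma')=L\}$.  By Kraft's inequality,
\[
  \Ex_\sigma[T_A(\sigma)]\;\ge\;H(L(\sigma))\;=\;\log(n!)\,-\,\Ex_\sigma\!\bigl[\log|\Sigma(L(\sigma))|\bigr],
\]
so the theorem reduces to the combinatorial claim $\Ex_\sigma[\log|\Sigma(L(\sigma))|]\le\log(n!)-\FF(S)+O(n)$.

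At a leaf $L$ the output specifies (i)~the positions and left-to-right $x$-order of the $h$ maximal points (pinning each $q_k$ to its slot) and (ii)~a witness index $w(i)\in\{1,\dots,h\}$ for every nonmaximal position~$i$, forcing the point placed at $i$ to be dominated by $q_{w(i)}$.  Writing $D_k$ for the nonmaximal points of $S$ dominated by~$q_k$, $|\Sigma(L)|$ equals the permanent of the $0/1$ matrix whose $(i,p)$-entry is $1$ iff $p\in D_{w(i)}$.  Bregman's inequality, together with $(r!)^{1/r}=r/e+O(1)$, yields
\[
  \log|\Sigma(L)|\;\le\;\sum_{i\text{ nonmax}}\log|D_{w(i)}|\;-\;(n-h)\log e\;+\;O(n),
\]
where the $-(n-h)\log e$ correction is exactly what will pair with the $-n\log e$ in Stirling's $\log(n!)=n\log n-n\log e+O(\log n)$.

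Reindexing by point $p=\sigma(i)$, the sum becomes $\sum_p\log|D_{c_\sigma(p)}|$ where $c_\sigma(p)=w(\sigma^{-1}(p))$ is the witness chosen for~$p$.  Without loss of generality, we may modify $A$ to always pick the leftmost dominator $q_{i_p}$ of $p$ as its witness, so that $c_\sigma(p)=i_p$ is a function of $p$ alone.  The main obstacle---where all the geometry lives---is then the point-wise claim $\sum_{p}\log|D_{i_p}|\le\sum_{p}\log|F(p)|+O(n)$.  In the generic case where every nonmaximal $p$ has a single dominator $q_k$ one checks $|D_k|=m_k$ and $|F_k|=1+m_k+m_{k+1}$, with $m_k$ the number of points dominated exclusively by $q_k$, and the inequality collapses to $\sum_k m_k\log m_k\le\sum_k(1+m_k)\log(1+m_k+m_{k+1})$, which is immediate.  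In the general case, a point $r\in D_{i_p}$ lying outside the slab $F(p)=(q_{i_p-1}.x,q_{\ell_p+1}.x)\times\R$ must satisfy $r.x\le q_{i_p-1}.x$ and (since $q_{i_p}.y<q_{i_p-1}.y$) also $r.y<q_{i_p-1}.y$, so $r$ is additionally dominated by $q_{i_p-1}$ and hence lies in $F(p')$ for some nonmaximal $p'$ strictly to the left of $p$; a charging argument across points should then absorb the discrepancy into an additive~$O(n)$.

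Combining the bounds with Stirling gives $\Ex_\sigma[T_A(\sigma)]\ge n\log n-\sum_p\log|F(p)|-O(n)=\FF(S)-O(n)$.  Adding the trivial $\Omega(n)$ lower bound (since $A$ must read every input position at least once) yields $\OPT\avg(S)\in\Omega(\FF(S)+n)$, and hence also $\OPT(S)\in\Omega(\FF(S)+n)$.
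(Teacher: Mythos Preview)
Your Bregman/permanent route is appealing, but the key combinatorial step fails. The inequality
\[
\sum_{p\text{ nonmax}}\log|D_{i_p}|\ \le\ \sum_{p\in S}\log|F(p)|\ +\ O(n)
\]
is not true in general, and the vague ``charging argument across points should then absorb the discrepancy into an additive~$O(n)$'' does not go through. Take $h=n/4$ maximal points $q_1,\ldots,q_h$, one nonmaximal point $p_k$ under each step (dominated only by $q_k$, so $|F(p_k)|=O(1)$), and $M=n/2$ points $r_1,\ldots,r_M$ in the far lower-left corner, dominated by every $q_k$. Then $|D_{q_k}|=M+1$ for all $k$, because every $r_j\in D_{q_k}$. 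The witnesses for the $p_k$'s are forced (leftmost or not), so
\[
\sum_{p\text{ nonmax}}\log|D_{c_\sigma(p)}|\ \ge\ \sum_{k=1}^{h}\log(M+1)\ +\ M\log(M+1)\ \approx\ \tfrac{3n}{4}\log n,
\]
whereas
\[
\sum_{p\in S}\log|F(p)|\ \approx\ M\log n + O(h)\ \approx\ \tfrac{n}{2}\log n.
\]
The gap is $\Theta(n\log n)$, not $O(n)$. Your charging idea breaks precisely here: the same $M$ ``deep'' points $r_j$ show up in $D_{q_k}\setminus F(p_k)$ for \emph{every} $k$, so the total excess is $\Theta(hM)=\Theta(n^2)$ elements with only $M$ targets to charge to. (The ``without loss of generality leftmost witness'' step is also unjustified---two permutations reaching the same leaf can place different points, with different leftmost dominators, at the same nonmaximal position---but this is moot since the inequality fails even when the witness is forced.)

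The paper's proof avoids this by encoding sequentially rather than all at once. It spends $T_A(\sigma)$ bits on the comparison outcomes and $h\lceil\log h\rceil$ bits to pin the maximal points, then reveals the nonmaximal points of $S$ \emph{in left-to-right order}: when it is $p$'s turn, all points left of $p$ are already known, so $p$ lies in $(W(q_i)\cup\cdots\cup W(q_\ell))\setminus L$, which is contained in the slab $F(p)$; recording $p$'s position in that list costs $\lceil\log|F(p)|\rceil$ bits. The left-to-right sweep is exactly what kills the ``deep'' points in the example above---they have $|F(r_j)|=n$ and cost nothing extra. This yields $T_A(\sigma)\ge \FF(S)-h\log h-O(n)$, and the paper then \emph{adds} the separate information-theoretic bound $T_A(S)\in\Omega(h\log h)$ (from the requirement that the maxima be output in sorted order) to cancel the $h\log h$ loss. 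Your write-up omits any $h\log h$ term; even if your Bregman bound were carried through correctly, it gives only $T_A\gtrsim n\log n-\sum\log|D_{w(i)}|$, which in the example above is $\approx\frac{n}{4}\log n$, short of $\FF(S)\approx\frac{n}{2}\log n$ by exactly the missing $h\log h$.
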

\begin{proof}
The proof is a counting argument, which we express in terms of encoding schemes (see \cite{DemaineLopezOrtizJALG03,GolynskiSODA09} for more sophisticated examples of counting arguments based on encoding/decoding).  We will describe a way to encode an arbitrary permutation $\sigma$ of $S$, so that the length of the encoding can be upper-bounded in terms of the running time of the given algorithm $A$ on input $\sigma$.  Since the worst-case encoding length must be at least $\log(n!)$, the running time must be large for some permutation $\sigma$.  (All logarithms are in base~2.)

To describe the encoding scheme, we imagine that the permutation $\sigma$ is initially unknown, and as we proceed, we record bits of information about $\sigma$ so that at the end, $\sigma$ can be uniquely determined from these bits.  In the description below, we distinguish between an {\em input point\/}, as represented its index in the input permutation $\sigma$ (its actual location is not necessarily known), and an {\em actual point\/} in $S$ (its coordinates are known but its index in $\sigma$ is not necessarily known).  At any moment, if we know which input point corresponds to an actual point $p$, we say (naturally) that $p$ is {\em known}.

We first simulate the algorithm on $\sigma$ and record the outcomes of the comparisons made; this requires $T_A(\sigma)$ bits (recall that $T_A(\sigma)$ denote the number of comparisons
made by $A$ on $\sigma$).
Let $M$ be the list of maximal input points returned.  For each input point $q_i$, let $W(q_i)$ be the list of all nonmaximal input points that have $q_i$ as witness.  For each maximal actual point, we record its position in $M$, using $h \up{\log h}$ bits in total.  Now all maximal points are known.

We process the nonmaximal actual points of $S$ from left to right, and make them known as follows.  To process an actual point $p$, let $q_i\TO q_j$ be all the maximal points that dominate $p$, which are all known.  Observe that $p$ must be in $W(q_i)\cup\cdots\cup W(q_j)$.  Let $L$ be all the points that are left of $p$, which are all known.  We record the position of $p$ in the list $(W(q_i)\cup\cdots\cup W(q_j)) - L$ of input points
(say, ordered by their indices).  This requires $\up{\log (|(W(q_i)\cup\cdots\cup W(q_j)) - L|}$ bits.  Observe that $W(q_i)\cup\cdots\cup W(q_j)$ is contained in $(-\infty,q_j.x)\times\R$.  So, $(W(q_i)\cup\cdots\cup W(q_j)) - L$ is contained in the subset $F(p)$ from our Definition~\ref{def:F}---a lucky coincidence.
   %
Thus, the number of bits required is $\up{\log|F(p)|}$.  Now $p$ is known and we can continue the process.

By our construction, any permutation $\sigma$ of $S$ can be uniquely decoded from its encoding, for any given set~$S$.
The encoding has total length at most
   $$ T_A(\sigma) + h\log h + \sum_p \log|F(p)| + O(n)
   \ =\ T_A(\sigma) + h \log h + n \log n - \FF(S) + O(n).$$ 
Taking the maximum over all permutations $\sigma$ of $S$,
we thus obtain $\log (n!)\ \le\ T_A(S) + h \log h + n \log n - \FF(S) + O(n)$, yielding $T_A(S) + n + h\log h \in \Omega(\FF(S))$.  Combined with the trivial lower bound $\Omega(n)$ and the naive information-theoretic lower bound $T_A(S) \in \Omega(h \log h)$ (as the problem definition requires the output to be in sorted order), this implies that $T_A(S) \in \Omega(\FF(S)+n)$.

The proof works in the random-order setting as well:
In any encoding scheme,
at most a fraction $2^{-cn}$ of the $n!$ permutations can have encoding length less than $\log (n!) - cn$ for any constant $c$.  Thus, for a random permutation $\sigma$, with high probability 
the encoding length is at least
$\log(n!)-O(n)$, implying $T_A(\sigma)\in\Omega(\FF(S)+n)$.
In particular, $T_A\avg(S)\in\Omega(\FF(S)+n)$.
\end{proof}

Combining the above theorem with the following observation yields a complete proof of the $\Omega(\FF(S)+n)$ lower bound:

\begin{observation}\label{obs:maxima:alt}
Any algorithm for the 2-d maxima problem in the comparison model can be made to solve the 2-d ``maxima with witnesses'' problem without needing to make any additional comparisons.
\end{observation}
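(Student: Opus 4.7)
The plan is to show that the comparison trace recorded by any correct maxima algorithm already contains enough information to assign each nonmaximal point a witness, simply by post-processing (i.e., purely combinatorial deduction, no new comparisons). Concretely, I will argue that for every nonmaximal point $p$, there must exist some point $q\ne p$ such that the recorded comparison outcomes already \emph{imply} both $q.x>p.x$ and $q.y>p.y$; once we have such a $q$, if $q$ is maximal we output it as $p$'s witness, and otherwise we repeat with $q$ in place of $p$. Since dominance is a strict partial order, this ``follow the implied dominator'' process terminates at a maximal point in at most $n$ steps, and every step is free of comparisons.

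The key lemma, then, is: if the algorithm on input $\sigma$ outputs $p$ as nonmaximal, then its comparison history $H$ must force at least one specific $q$ to dominate $p$. I would prove this by contradiction. Suppose no such $q$ exists, i.e., for every $q\ne p$ at least one of the inequalities $q.x>p.x$, $q.y>p.y$ is \emph{not} forced by $H$. Split $S\setminus\{p\}$ into $X^+$ (those for which $H$ forces $q.x>p.x$) and $Y^+$ (those for which $H$ forces $q.y>p.y$); the hypothesis says $X^+\cap Y^+=\emptyset$. For each $q\in X^+$, $H$ does not force $q.y>p.y$, so appending the new constraint $p.y>q.y$ introduces no cycle in the $y$-partial order induced by $H$; since this holds simultaneously for all $q\in X^+$, a single linear extension realizing all these new constraints exists. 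Similarly, we can extend the $x$-partial order so that $p.x>q.x$ holds for every $q\in Y^+$, and the remaining points (in neither $X^+$ nor $Y^+$) can be placed, say, to the lower-left of $p$ consistently with $H$. The resulting input is consistent with the full history $H$ and has $p$ maximal, contradicting the correctness of the algorithm on this input.

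With the lemma in hand, the post-processing is immediate: build the directed graph $G$ on $S$ in which $p\to q$ whenever $H$ implies that $q$ dominates $p$ (this can be read off from $H$ in finite time with no new comparisons). The lemma says every nonmaximal $p$ has out-degree at least one in $G$. Since dominance is transitive and irreflexive, $G$ is acyclic, so following any out-edge path from $p$ reaches a sink, which by the lemma must be a maximal point $q^*$; transitivity gives $q^*$ dominates $p$, so $q^*$ is a valid witness for $p$. Doing this for all nonmaximal points produces the required augmented output.

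The main obstacle is the lemma itself, and within the lemma the only genuinely delicate point is verifying that the various linear extensions one wants on the $x$- and $y$-coordinates can be chosen \emph{simultaneously} without creating contradictions with $H$; this is why the split via $X^+\cap Y^+=\emptyset$ is essential, ensuring that the $x$-side and $y$-side modifications never both touch the same point and therefore cannot conflict with one another or with the recorded order relations.
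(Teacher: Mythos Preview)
Your witness-finding argument is essentially the paper's: both prove by contradiction that for every nonmaximal $p$ the recorded comparisons must already force some $q$ to dominate $p$, by exhibiting a modified input (consistent with the history) in which $p$ becomes maximal, and then chase dominators until reaching a maximal one. The paper's construction is a bit more concrete than your linear-extension phrasing---it simply increases the $x$-coordinates of $p$ and of $\{q:p\prec_x q\}$ by a large common amount (and similarly for $y$), which directly gives $p.x>q.x$ for every $q$ with $p\not\prec_x q$ and $p.y>q.y$ for every $q$ with $p\not\prec_y q$---but the content is the same, and your observation that the $x$- and $y$-side modifications never conflict because $X^+\cap Y^+=\emptyset$ corresponds exactly to the paper's setup.

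There is one genuine omission. The ``maxima with witnesses'' problem, as defined just before the theorem that uses this observation, also requires the maximal points to be output \emph{in left-to-right order}, and the base maxima algorithm is not assumed to produce them in any order. Your proposal does not address this. The paper handles it with a second, short argument of the same flavor: for two consecutive maximal points $q,q'$ with $q.x<q'.x$ (hence $q.y>q'.y$), if the history did not force $q\prec_x q'$, then shifting $q$ together with $\{p:q\prec_x p\}$ far to the right would preserve all comparison outcomes while making $q.x>q'.x$, so $q$ would now dominate $q'$, contradicting correctness. You should add this step to complete the reduction.
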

\begin{proof}
Consider the partial order $\prec_x$ over $S$ formed by the outcomes of the $x$-comparisons made by the maxima algorithm~$A$.  Define the partial order $\prec_y$ similarly.
Fix a nonmaximal point $p$.  We argue that there must be a point $q\in S$ such that $p \prec_x q$ and $p \prec_y q$.  Suppose that
every $q\in S$ has $p\not\prec_x q$ or $p\not\prec_y q$.
Consider modifying the point set as follows:
First, increase the $x$-coordinates of $p$ and all points in 
$\{q\in S: p\prec_x q\}$ by a sufficiently large common value; this does not
affect the outcomes of the comparisons made, and ensures that
all points $q$ with $p\not\prec_x q$ now have $p.x > q.x$.
Similar, increase the $y$-coordinates of $p$ and all points in 
$\{q\in S: p\prec_y q\}$ by a sufficiently large common value; 
all points $q$ with $p\not\prec_y q$ now have $p.y > q.y$.
Then every $q\in S$ now has $p.x>q.x$ or $p.y>q.y$, i.e.,
$p$ is now maximal, and the algorithm would be incorrect on the modified point set: a contradiction.

For every nonmaximal point $p$, we can thus find a witness point $q$ that dominates~$p$, without making any additional comparisons.
One issue remains: the witness point may not be maximal.  If not, we can change $p$'s witness to the witness of the witness, and
so on, until $p$'s witness is maximal.

Finally, note that we do not require the given algorithm~$A$ to report the maximal points in left-to-right order.  We argue that
at the end we already know the $x$-order of the maximal points.
Suppose that $q\not\prec_x q'$ for two consecutive maximal points $q$ and $q'$.  Consider modifying the point set as follows:
Increase the $x$-coordinates of $q$ and
all points in $\{p\in S: q\prec_x p\}$ by a sufficiently large common value;
this does not affect the outcomes of the comparisons made, and
ensures that we now have $q.x > q'.x$ (while maintaining
$q.y > q'.y$).  Then $q'$ is now nonmaximal, and the algorithm would be incorrect on the modified point set: a contradiction.
\end{proof}

\begin{remark}
The proof can be modified
for weaker versions of the problem, for example,
reporting just the number of maximal points (or its parity).

The proof does not appear to work for problems other than maxima in 2-d.  One obvious issue is that Observation~\ref{obs:maxima:alt} only applies to comparison-based algorithms for orthogonal-type problems.  Even more critically, the proof of Theorem~\ref{thm:maxima:alt} relies on a coincidence that is special to 2-d maxima.

Curiously, this lower-bound proof holds even for nondeterministic algorithms, i.e., algorithms that can make guesses but must verify that the answer is correct; here we assume that each bit guessed costs unit time.
In the proof of Theorem~\ref{thm:maxima:alt}, we just record the guesses in the encoding.  The previous proofs of instance optimality by \citeN{Fagin} and \citeN{DemaineLopezOrtizMunro} all hold in nondeterministic settings.
Perhaps this strength of the proof prevents its applicability to other geometric problems, whereas our adversary-based proofs more powerfully exploits the deterministic nature of the algorithms.
\end{remark}


\bibliographystyle{ACM-Reference-Format-Journals}
\bibliography{abc_journal}

\end{document}